\newif\if@restonecol
\newtheorem{theorem}{Theorem}
\newtheorem{example}{Example}
\newtheorem{definition}{Definition}
\newtheorem{lemma}[theorem]{Lemma}
\newtheorem{proposition}[theorem]{Proposition}
\newtheorem{corollary}[theorem]{Corollary}
\DeclareMathOperator*{\argmax}{arg\,max}
\newcommand{\sem}[1]{\llbracket#1\rrbracket}
\newcommand{\sched}{\text{\textsc{Schedulable}}}
\newcommand{\nosched}{\text{\textsc{UnSchedulable}}}
\newcommand{\discr}[1]{\mathit{discr}(#1)}
\newcommand{\facet}[2]{#1|_{#2}}
\newcommand{\rows}[1]{\mathit{rows}(#1)}
\newcommand{\ball}[2]{B_{#1}(#2)}
\newcommand{\External}{\mathsf{External}}
\newcommand{\Hh}{\mathcal{H}}
\newcommand{\Ii}{\mathcal{I}}
\newcommand{\Rr}{\mathcal{R}}
\newcommand{\set}[1]{\left\{ #1 \right\}}
\newcommand{\seq}[1]{\langle #1 \rangle}
\newcommand{\Rplus}{{\mathbb R}_{\geq 0}}
\newcommand{\Rpos}{{\mathbb R}_{> 0}}
\newcommand{\Nat}{\mathbb N}
\newcommand{\Real}{\mathbb R}
\newcommand{\FRUNS}{\text{\it FRuns}}
\newcommand{\RUNS}{\text{\it Runs}}
\newcommand{\RUN}{\text{\it Run}}
\newcommand{\SCH}{\mathcal{W}_\mathrm{Safe}}
\newcommand{\Ext}{\text{\it Ext}}
\newcommand{\Vtx}[1]{\mathsf{vert}(#1)}
\newcommand{\point}[1]{{\overline{#1}}}
\newcommand{\pc}{\point{c}}
\newcommand{\px}{\point{x}}
\newcommand{\pv}{\point{v}}
\newcommand{\py}{\point{y}}
\newcommand{\vv}{\vec{v}}
\newcommand{\vu}{\vec{u}}
\newcommand{\vw}{\vec{w}}
\newcommand{\vb}{\vec{b}}
\newcommand{\vr}{\vec{r}}
\newcommand{\vp}{\vec{p}}
\newcommand{\vn}{\vec{n}}
\newcommand{\norm}[1]{\|#1\|}
\newcommand{\dotprod}{\cdot}
\newcommand{\vzero}{\vec{0}}
\DeclareMathOperator{\interior}{int}
\newcommand{\cmms}{\textsf{CMS}}
\newcommand{\bmms}{\textsf{BMS}}
\newcommand{\mmms}{\textsf{MMS}}
\tikzstyle{nloc}=[draw,circle,minimum size=4em,inner sep=0em]
\tikzstyle{mloc}=[draw,circle,minimum size=2em,inner sep=0em]
\tikzstyle{trans}=[-latex, rounded corners]
\def\techreport{}
\begin{document}
\title{Safe Schedulability of Bounded-Rate Multi-Mode Systems}  
\numberofauthors{4}
\author{
  \alignauthor
  Rajeev Alur\\
  \affaddr{University of Pennsylvania, Philadelphia, USA}\\
  \email{alur@seas.upenn.edu}
  \alignauthor
  Vojt\v{e}ch Forejt\\
  \affaddr{Dept. of Computer Science, University of Oxford, UK}\\
  \email{vojfor@cs.ox.ac.uk}
  \alignauthor
  Salar Moarref\\
  \affaddr{University of Pennsylvania, Philadelphia, USA}\\
  \email{moarref@seas.upenn.edu}
  \and  %
  \alignauthor
  Ashutosh Trivedi\\
  \affaddr{Indian Institute of Technology, Bombay}\\
  \email{trivedi@cse.iitb.ac.in}
}
\date{\today}
\maketitle

\begin{abstract}
  Bounded-rate multi-mode systems (\bmms{}) are hybrid systems that can switch freely
  among a finite set of modes, and whose dynamics is specified by a finite
  number of real-valued variables with mode-dependent rates that can vary
  within given bounded sets.
  The schedulability problem for \bmms{} is
  defined as an infinite-round game between two players---the scheduler and the
  environment---where in each round the scheduler proposes a time and a mode
  while the environment chooses an allowable rate for that mode, and the state
  of the system changes linearly in the direction of the rate vector. 
  The goal of the scheduler is to keep the state of the system within a
  pre-specified safe set using a non-Zeno schedule, while the goal of the
  environment is the opposite. 
  Green scheduling under uncertainty is a paradigmatic example of \bmms{}
  where a winning strategy of the scheduler corresponds to
  a robust energy-optimal policy.
  We present an algorithm to decide whether the scheduler has a winning
  strategy from an arbitrary starting state, and give an algorithm to compute
  such a winning strategy, if it exists. 
  We show that the schedulability problem for \bmms{} is co-NP complete in
  general, but for two variables it is in PTIME.  
  We also study the discrete schedulability problem where the
  environment has only finitely many choices of rate vectors in each
  mode and the scheduler can make decisions only at multiples of a given clock
  period, and show it to be EXPTIME-complete.
\end{abstract}

\category{I.2.8}{Problem Solving, Control Methods, and Search}{Scheduling}
\category{B.5.2}{Design Aids}{Verification, Optimization}
\category{D.4.7}{Organization and Design}{Real-time systems and embedded systems}

\terms{Theory, Verification}

\keywords{Multi-Mode Systems, Hybrid  Automata, Game Theory, Green Scheduling,
  Cyber-Physical Systems} 

\section{Introduction}
\label{Sec:introduction}
There is a growing trend towards multi-mode compositional design
frameworks~\cite{LP12,PLS10,LOTM12} for the synthesis of cyber-physical systems
where the desired system is built by composing various modes, subsystems, or motion
primitives---with well-understood performance characteristics---so as
to satisfy certain higher level control objectives.     
A notable example of such an approach is \emph{green scheduling} proposed by
Nghiem et al.~\cite{NBMJ11,NBPM11} where the goal is to 
compose different modes of heating, ventilation, and air-conditioning (HVAC)
installations in a building so as to keep the temperature surrounding each
installation in a given comfort zone while keeping the peak energy consumption
under a given budget.  
Under the assumption that the state of the system grows linearly in each mode,
Nghiem et al. gave a polynomial algorithm to decide the green schedulability
problem.
Alur, Trivedi, and Wojtczak~\cite{ATW12} studied general constant-rate
multi-mode systems and showed, among others, that the result of Nghiem et
al. holds for arbitrary multi-mode systems with constant rate dynamics as long
as the scheduler can switch freely among the finite set of modes.  

In this paper we present \emph{bounded-rate multi-mode systems} that generalize
constant-rate multi-mode systems by allowing non-constant mode-dependent rates
that are given as bounded polytopes. 
Our motivations to study bounded-rate multi-mode schedulability are twofold. 
First, it allows one to model a conservative approximation of green
schedulability problem in presence of more complex inter-mode dynamics.
Second motivation is theoretical and it stems from the desire to characterize
decidable problems in context of design and analysis of cyber-physical systems.
In particular, we view a bounded-rate multi-mode system as a two-player
extension of constant-rate multi-mode system, and show the decidability of 
schedulability game for such systems.

Before discussing bounded-rate multi-mode system (\bmms{}) in any further
detail, let us review the definition, relevant results, and limitations of
constant-rate multi-mode system (\cmms{}).  
A \cmms{} is specified as a finite set of variables whose dynamics in a finite set
of modes is given as mode-dependent constant rate vector.
The \emph{schedulability problem} for a \cmms{} and a bounded convex safety set
of states is to decide whether there exists an infinite sequence (schedule) of
modes and time durations such that choosing modes for corresponding time
durations in that sequence keeps the system within the safety set forever.   
Moreover such schedule is also required to be physically implementable, i.e. the
sum of time durations must diverge (the standard non-Zeno
requirement~\cite{HLMR05}). 
Alur et al.~\cite{ATW12} showed that, for the starting states in the interior of
the safety set, the necessary and sufficient condition for safe
schedulability is the existence of an assignment of dwell times to modes such
that the sum of rate vectors of various modes scaled by corresponding dwell time
is zero. 
Intuitively, if it is possible using the modes to loop back to the starting state,
i.e. to go to some state other than the starting state and then to return to the
starting state, then the same schedule can be scaled appropriately and repeated
forever to form a \emph{periodic schedule} that keeps the system inside the
interior of any convex safety set while ensuring time divergence.
On the other hand, if no such assignment exists then Farkas' lemma implies
the existence of a vector such that choosing any mode the system makes a
positive progress in the direction of that vector, and hence for any non-Zeno
schedule the system will leave any bounded safety set in a finite amount of
time. 
Also, due to constant-rate dynamics such condition can be modeled as 
a linear program feasibility problem, yielding a polynomial-time
algorithm. 
\begin{example}
\label{ex11}
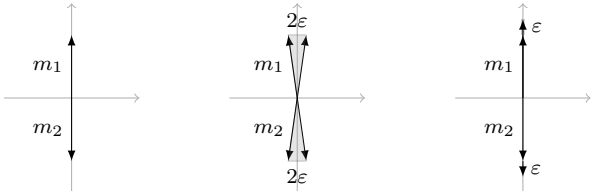
\begin{figure}
  \centering
  {\small
  \begin{tikzpicture}[node distance=2cm,scale=0.6, y=0.7cm]
    \tikzstyle{lines}=[draw=black!30,rounded corners]
    \tikzstyle{vectors}=[-latex, rounded corners]
    \tikzstyle{rvectors}=[draw=black!30,very thick, rounded corners]

    \draw[lines,->] (-1.5,0)--(1.5,0);
    \draw[lines,->] (0, -3)--(0,3);
    \draw[vectors] (0, 0) --node[left]{$m_{1}$} (0, 2) node[left]{$$};
    \draw[vectors] (0, 0) --node[left]{$m_{2}$} (0, -2)node[right]{$$};

    \draw[lines,->] (3.5,0)--(6.5,0);
    \draw[lines,->] (5,-3)--(5,3);

    \draw[vectors] (5, 0) --node[left]{$$} (5.2, 2) node[left]{$$};
    \draw[vectors] (5, 0) --node[left]{$m_1$} (4.8, 2) node[left]{$$};
    \draw[fill=black!50,opacity=0.2] (5.2,2) -- (4.8,2) -- (5, 0);
    \draw (5, 2) node[above]{$2\varepsilon$};

    \draw[vectors] (5, 0) --node[left]{$$} (5.2, -2) node[left]{$$};
    \draw[vectors] (5, 0) --node[left]{$m_2$} (4.8, -2) node[left]{$$};
    \draw[fill=black!50,opacity=0.2] (5.2,-2) -- (4.8,-2) -- (5, 0);
    \draw (5, -2) node[below]{$2\varepsilon$};

    \draw[lines,->] (8.5,0)--(11.5,0);
    \draw[lines,->] (10, -3)--(10,3);

    \draw[vectors] (10, 0) --node[left]{$m_1$} (10, 2) node[left]{$$};
    \draw[rvectors] (10, 2) --node[right]{$\varepsilon$} (10, 2.5);
    \draw[vectors] (10, 0) -- (10, 2.5);
    \draw[vectors] (10, 0) --node[left]{$m_2$} (10, -2) node[left]{$$};
    \draw[vectors] (10, -2) -- node[right]{$\varepsilon$} (10, -2.5);

  \end{tikzpicture}
}
  \vspace{-1.5em}
  \caption{Multi-mode systems with uncertain rates}
  \vspace{-1em}
  \label{geom2}
\end{figure}
Consider the 2-dimensional \cmms{} shown in Figure~\ref{geom2}~(left) with two modes
$m_1$ and $m_2$ with rates of the variables as $\vr_1 = (0, 1)$ in mode $m_1$ and
$\vr_2 = (0, -1)$ in mode $m_2$.
It is easy to see that the system is schedulable for any starting state 
$(x_0, y_0)$ in the interior of any bounded convex set $S$ as 
$\vr_1 + \vr_2 = (0, 0)$.
The safe schedule consists of the periodic schedule $(m_1, t), (m_2, t)$ for a
carefully selected $t \in \Rpos$ such that $(x_0, y_0) + \vr_1 t$ stays
inside $S$.
\end{example}
However, the schedules constructed in this manner are not robust as an arbitrarily small
change in the rate can make the schedule unsafe as shown in the following example.  
\begin{example}
  Consider a multi-mode system where some environment related
  fluctuations~\cite{HLMR05} cause the rate vectors in modes $m_1$ and $m_2$ to 
  differ from those in Example~\ref{ex11} by an arbitrarily small $\varepsilon >
  0$ as shown in Figure~\ref{geom2}~(middle).
  Here, $m_1$ can have rate-vectors from 
  $\set{(0{+}\delta, 1) \::\: -\varepsilon {\leq} \delta {\leq} \varepsilon}$, while
  rate-vectors of $m_2$ are from 
  $\set{(0{+}\delta, -1) \::\: -\varepsilon {\leq} \delta {\leq} \varepsilon}$.
  First we show that the periodic schedule $(m_1, t), (m_2, t)$ proposed in 
  Example~\ref{ex11} is not safe for any $t$.
  Consider the case when the rate vector in modes $m_1$ and $m_2$ are fixed to
  $(\varepsilon, 1)$ and $(\varepsilon, -1)$. 
  Starting from the state $(x_0, y_0)$ and following the periodic schedule
  $(m_1, t), (m_2, t)$ for $k$ steps the state of the system will be $(x_0 +k t
  \varepsilon, y_0)$ after $k$ steps.
  Hence it is easy to see that for any bounded safety set the state of the
  system will leave the safety set after finitely many steps.
  In fact, for this choice of rate vectors no non-Zeno safe schedule exists at
  all, since by choosing any mode for a positive time the system makes a
  positive progress along the $X$ axis.  
\end{example}
We formalize modeling of such multi-mode system under uncertainty as
bounded-rate multi-mode systems (\bmms{}).
\bmms{}s can also approximate~\cite{Hen96} the effect of more complex
non-linear, and even time-varying, mode dynamics over a bounded safety set. 
Formally, a \bmms{} is specified as a finite set of variables whose dynamics in
a finite set of modes is given as a mode-dependent bounded convex polytopes of rate vectors.
We present the schedulability problem on \bmms{} as an infinite-round
zero-sum game between two players, \emph{the 
  scheduler} and \emph{the environment}; at each round scheduler chooses a
mode and a time duration, the environment chooses a   
rate vector
from the allowable set
of rates for that mode, and the state of the system is evolved accordingly.  
The recipe for selecting their choices, or \emph{moves}, is formalized in
the form of a strategy that is a function of the history of the game so far to a
move of the player. 
A strategy is called \emph{positional} if it a function only of the current
state. 
We say that the scheduler wins the schedulability game, or has a
winning strategy, from a given starting state if there is a scheduler strategy
such that irrespective of the strategies of the environment the state of the
system stays within the safety set and time does not converge to any real number. 
Similarly, we say that the environment has a winning strategy if she has a
strategy such that for any strategy of the scheduler the system leaves the
safety set in a finite amount of time, or the time converges to some real number. 
One of the central results of this paper is that the schedulability games on
\bmms{} are \emph{determined}, i.e. for each starting state exactly
one of the player has a winning strategy.
Note that the determinacy of these games could be proved using more general results on
determinacy, e.g. ~\cite{MS98}, however our proof is direct and shows the
existence of positional winning strategies.

We distinguish between two kind of strategies of scheduler--the \emph{static
  strategies}, where scheduler can not observe the decisions of the environment,
and the \emph{dynamic strategies}, where scheduler can observe the decisions of the
environment so far before choosing a mode and a time.
Static strategies correspond precisely to schedules, and we often use these two 
terms interchangeably. 
A key challenge in the schedulability analysis of \bmms{} is that static
strategies are not sufficient as is clear from the following example.
\begin{example}
  \label{ex22}
  Consider the \bmms{} of Figure~\ref{geom2}~(right) where the
  rates in mode $m_1$ and $m_2$ lie in $\set{(0, 1+\delta) \::\: 0 \leq \delta
    \leq \varepsilon}$ and $\set{(0, -(1+\delta)) \::\: 0 \leq \delta
    \leq \varepsilon}$, respectively.
  We hint that there is no static winning strategy of scheduler in this \bmms{} 
  (the formal conditions on where the static winning strategy exists will be
  analyzed later in the paper). 
  Let us assume, for example,
  that $\sigma = (m_1, t_1), (m_2, t_2),
  \ldots$ is a static non-Zeno winning strategy of the scheduler. 
  Moreover consider two strategies $\pi$ and $\pi'$ of the environment that
  differ only in mode $m_1$ where they propose rates $(1,0)$ and
  $(1+\varepsilon,0)$ respectively.
  Let $\varrho$ and $\varrho'$ be the sequences of system states and player's
  choices---what we subsequently refer to as runs---as the game progresses from
  a starting state $(x_0, y_0)$ where the environment uses strategy $\pi$ and
  $\pi'$, respectively, against scheduler's strategy $\sigma$. 
  Let $T_1(i)$ and $T_2(i)$ be the time spent in mode $m_1$ and
  $m_2$, resp., till the $i$-th round in runs $\varrho$ and $\varrho'$, while
  $T_1$ and $T_2$ be total time spent in mode $m_1$ and $m_2$, resp.
  The state of the system in the runs $\varrho$ and $\varrho'$ after $i$ rounds will
  be $(x_0, y_0 + T_1(i) - T_2(i))$ and $(x_0, y_0 + T_1(i) - T_2(i) +
  T_1(i) \varepsilon)$.
  Hence the distance $T_1(i) \varepsilon$ between states reached after
  $i$-rounds in runs $\varrho$ and $\varrho'$ tends to $T_1 \varepsilon$ as $i$ tends to
  $\infty$. 
  It is easy to see that if $\sigma$ is a winning strategy then $T_1 {=} \infty$;
  since if $T_1 {<}\infty$ and $T_2 {=} \infty$ then the system will move in the
  direction of rates of mode $m_2$, while if both $T_1$ and $T_2$ are finite
  then the strategy is not non-Zeno.
  Hence system will eventually leave any bounded safety set, contradicting 
  our assumption on $\sigma$ being a winning strategy. 
\end{example}

The techniques used for schedulability analysis and schedule construction for
\cmms{} cannot be generalized to \bmms{} since in a \bmms{}, the scheduler may
not have a strategy to loop back to the starting state.
In fact, in general scheduler does not have a strategy to revisit any 
state as is clear from Figure~\ref{geom2}~(right)---here the
environment can always choose a rate vector in both mode $m_1$ and $m_2$ to
avoid any previously visited state.
However, from our results on \bmms{} it follows that if the scheduler has a
winning strategy then he has a strategy to restrict the future states of the
system to a ball of arbitrary diameter centered around the starting state.

In order to solve schedulability game for \bmms{} we exploit the following
observation: the scheduler has a winning strategy, from all the starting states
in the interior of the safety set $S$, if and only if there is a polytope $P
\subseteq S$, such that for every vertex $\pv$ of $P$ there is a mode $m(\pv)$
and time $t(\pv)$ such that choosing mode $m(\pv)$ for time $t(\pv)$ from
the vertex $\pv$, the line $\pv + \vr t(\pv)$ stays within polytope $P$ for all 
allowable rates $\vr$ of $m(\pv)$.  
In other words, for any vertex of $P$ there is a mode and a time duration
such that if the system evolves with any rate vector of that mode for such
amount of time, the system stays in $P$.   
For a \bmms{} $\Hh$ we call such a polytope \emph{$\Hh$-closed}.
The \emph{$\Hh$-closed} polytope is similar to controlled invariant set in control theory literature 
(see ~\cite{Blanchini19991747} for a comprehensive review). 
We show how such a polytope can be constructed for a BMS based on its characteristics. 
 We also analyze the complexity of such a construction.
The existence of an $\Hh$-closed polytope immediately provides a non-Zeno safe
dynamic strategy for the scheduler for any starting state in $P$: find
the convex coefficient $(\lambda_1, \lambda_2, \ldots, \lambda_k)$ of the
current state $\px$ with respect to the finite set of vertices $(\px_1, \px_2,
\ldots, \px_k)$ of $P$ and choose the mode $m(\px_i)$ for time $t(\px_i)
\lambda_i$ that maximizes $t(\px_i) \lambda_i$.
Then, for some choice $\vr$ of the environment for $m(\px_i)$ the system will
progress to $\px' {=} \px {+} t(\px_i) \lambda_i \vr$.   
One can repeat this dynamic strategy from the next state $\px'$ as the current
state. 
We prove that such strategy is both non-Zeno and safe.

An extreme-rate \cmms{} of a \bmms{} $\Hh$ is obtained by preserving the set of
modes, and for each mode assigning a rate which is a vertex of the
available rate-set of that mode. 
The main result of the paper is that an $\Hh$-closed polytope exists for a
\bmms{} $\Hh$ iff all \emph{extreme-rate} \cmms{}s of $\Hh$ are
schedulable.  
The ``only if'' direction of the above characterization is immediate as if some
extreme-rate \cmms{} is not schedulable then the environment can fix those rate
vectors and win the schedulability game in the \bmms{}.
We show the ``if'' direction by explicitly constructing the $\Hh$-closed
polytope. 

\begin{example}
  \label{mainex}
  Consider the \bmms{} $\Hh$ from Figure~\ref{geom2}~(right) with $\varepsilon=0.5$.  
  The safety set is given as a shaded area in Figure \ref{strat-tree}~(left) and
  $\bar{x}_0=(-1,-0.5)$ is the initial state. 
  Observe that all extreme-rate combinations are schedulable and hence we show a
  winning strategy.
  An $\Hh$-closed polytope for this \bmms{}  is the line-segment between the
  points $(0,2.5)$ and $(0,-2.5)$ (we explain the construction of such polytope
  in Section~\ref{sec:schedulability}). 
  After translating this line-segment to $x_0$ and scaling it to fit inside the
  safety set, we will get the line-segment connecting $\bar{x}_1=(-1,1)$ to  
  $\bar{x}_2=(-1,-2)$, as shown in Figure \ref{strat-tree}~(left). 
  At vertices $\bar{x}_1$ and $\bar{x}_2$ modes $m_2$ and $m_1$, respectively, can be used
  for $1$ time unit.  
  A winning strategy of scheduler is to keep the system's state along the line
  segment. 
  Our strategy observes the current state $\px$ and finds the mode to
  choose by computing convex coefficient $\lambda {\in} [0,1]$ s.t.
  $\px {=}\lambda \px_1 {+}(1{-}\lambda) \px_2$. 
  For instance, at state $\bar{x}_0{=}\frac{1}{2}\bar{x}_1{+}\frac{1}{2}\bar{x}_2$ the
  scheduler can choose any of the modes for $\frac{1}{2}$ time units. 
  Assume that it chooses $m_1$. 
  Based on environment's choice the state of system after $\frac{1}{2}$ time
  units will be in the set $\set{-1,0.5+\delta \::\: 0 \leq \delta \leq 0.5}$. 
  The scheduler observes this new state after $\frac{1}{2}$ time-unit, and chooses
  mode and time accordingly. 
  For example, if the environment chooses $(0,1.25)$ and so the next state is
  $\bar{x}=(-1,0.75)=\frac{1}{12}\bar{x}_1{+}\frac{11}{12}\bar{x}_2$,
  scheduler can choose mode $m_2$ for $\frac{11}{12}$ time units. 
  In Figure~\ref{strat-tree}~(right) we show  first two rounds of the game.  
  Since, for any point on our line segment scheduler can choose a mode for at least
  $0.5$ time unit and stay on the line segment, such strategy is both safe and non-Zeno.
  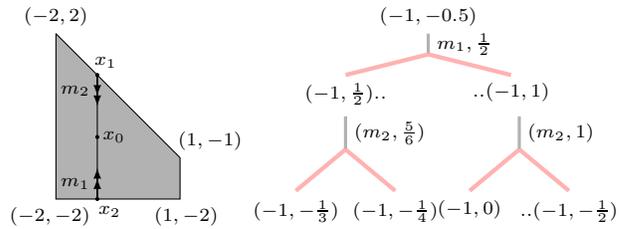
\begin{figure}
    \centering
    {\scriptsize
  \begin{tikzpicture}[node distance=2cm,scale=0.55]
    \tikzstyle{lines}=[draw=black,rounded corners]
    \tikzstyle{vectors}=[-latex, rounded corners]
    \tikzstyle{rvectors}=[draw=black!30,very thick]
    \tikzstyle{gvectors}=[draw=red!30,ultra thick]
    \tikzstyle{rrvectors}=[draw=blue!30,ultra thick]

    \draw[lines,sharp corners,fill=black!30] 
    (-2, 2) node[above] {$(-2, 2)$}-- 
    (1, -1) node[above] {$\phantom{xxxxx}(1, -1)$} -- 
    (1, -2) node[below] {$\phantom{x}(1, -2) $} --
    (-2, -2) node[below] {$(-2, -2)\phantom{x}$} --(-2,2);

    \draw[lines] (-1,-2)--(-1,1);

    \draw[vectors] (-1, 1) --node[left]{$$} (-1, 0.5) node[left]{$$};
    \draw[vectors] (-1, 1) --node[left]{$m_2$} (-1, 0.25);

    \draw[vectors] (-1, -2) --node[left]{$$} (-1, -1.5) node[left]{$$};
    \draw[vectors] (-1, -2) --node[left]{$m_1$} (-1, -1.25);
    
    \node[draw,circle,inner sep=0.4pt,fill] at (-1,1) {};
    \node[draw,circle,inner sep=0.4pt,fill] at (-1,-2) {};
    \node[draw,circle,inner sep=0.4pt,fill] at (-1,-0.5){};
    \draw (-0.6, -0.5) node {$x_0$};
    \draw (-0.8, 1.3) node {$x_1$};
    \draw (-1, -2.3) node {$\phantom{xx}x_2$};

    \draw (7, 2) node[above] {$(-1, -0.5)$};
    
    \draw[rvectors] (7, 2) -- node[right]{$ m_1, \frac{1}{2}$} (7, 1.5);
    \draw[gvectors] (7, 1.5) -- node[right, below]{$$}(5, 1);
    \draw[gvectors] (7, 1.5) -- node[left, below]{$$} (9, 1);

    \draw (5, 1) node[below] {$(-1, \frac{1}{2})..$};
    \draw (9, 1) node[below] {$..(-1, 1)$};

    \draw[rvectors] (5, 0) -- node[right]{$(m_2, \frac{5}{6})$}(5, -0.8);
    \draw[rvectors] (9.2, 0) -- node[right]{$(m_2, 1)$} (9.2, -0.8);

    \draw[gvectors] (5, -0.8) -- (3.8, -1.8) node[below]{$(-1, -\frac{1}{3})$} ;
    \draw[gvectors] (5, -0.8) -- (6.2, -1.8) node[below]{$(-1, -\frac{1}{4})$};

    \draw[gvectors] (9.2, -0.8) -- (8, -1.8) node[below]{$(-1, 0)$} ;
    \draw[gvectors] (9.2, -0.8) -- (10.4, -1.8) node[below]{$..(-1, -\frac{1}{2})$};

  \end{tikzpicture}
}
  \vspace{-2em}
    \caption{$\Hh$-closed polytope and dynamic strategy}
    \label{strat-tree}
  \vspace{-1em}
  \end{figure}
\end{example}

We also extend the above result to decide the winner starting from arbitrary
states, i.e. including those states that lie on the boundary of the safety
set. 
Here we show that the existence of a safe scheduler implies the existence of a
safe scheduler which only allows to move from lower-dimensional faces to
higher-dimensional ones and not the other way around; this allows us to use an
algorithm which traverses the face lattice of the safety set and analyses each
face one by one. 
We also prove co-NP completeness of the schedulability problem,
showing the hardness by giving a
reduction from 3-SAT to the non-schedulability problem.
On a positive note, we show that if the number of variables is two,
then the schedulability game can be decided in polynomial
time. This is because in such a case we can prove that there is only
polynomially many candidates for falsifiers we need to consider, and hence we can
check each of them one by one.
Finally, we study a discrete version of schedulability games where scheduler
can choose time delays only at multiples of a given clock period, while the
environment can choose rate vectors from a finite set.
We show that discrete schedulability games on $\bmms{}$ are EXPTIME-complete,
and that the maximal clock period for which scheduler has a winning strategy can
be computed in exponential time. 
If the system is a \cmms{}, we get a PSPACE algorithm, improving the
result of~\cite{ATW12} where only an approximation of the maximal clock period
for \cmms{} was studied.

We refer to~\cite{NBPM11,NBMJ11} and~\cite{ATW12} for a review of related work on
\cmms{} and green scheduling.  
Heymann et al.~\cite{HLMR05} considered scheduling problem on \bmms{} where
rate-vectors are given as upper and lower rate matrices and the safety set as
the entire non-negative orthant. 
They showed that the scheduler wins if he wins in the \cmms{} of the lower rate
matrix, and wins only if he wins in the \cmms{} of the upper rate matrix.  
We study more general \bmms{} and safety sets, and characterize 
necessary and sufficient condition for schedulability.
To complete the picture, we remark that games on hybrid
automata~\cite{HK99,HHM99}, that corresponds to \bmms{} with
local invariants and guards, have undecidable schedulability problem.

\section{Problem Definition}
\label{sec:prelims}

{\bf Points and Vectors.} 
Let $\Real$ be the set of real numbers.
We represent the states in our system as points in $\Real^n$ that is equipped
with the standard \emph{Euclidean norm} $\norm{\cdot}$.
We denote points in this state space by $\px, \py$,  vectors by $\vr, \vv$, and 
the $i$-th coordinate of point $\px$ and vector $\vr$ by $\px(i)$ and $\vr(i)$,
respectively. 
We write $\vzero$ for a vector with all its coordinates equal to $0$; 
its dimension is often clear from the context.
The distance $\norm{\px, \py}$ between points $\px$ and $\py$ is defined as
$\norm{\px - \py}$. 
For two vectors ${\vv_1, \vv_2 \in \Real^n}$, we write $\vv_1 \dotprod \vv_2$
to denote their dot product defined as $\sum_{i=1}^n \vv_1(i)\cdot\vv_2(i)$.

{\bf Boundedness and Interior.} 
We denote a {\em closed ball} of radius $d \in \Rplus$ centered at $\px$ as
$\ball{d}{\px} {=} \set{\py {\in} \Real^n \::\: \norm{\px,\py} \leq d}$.
We say that a set $S \subseteq \Real^n$ is {\em bounded} if there exists
$d \in \Rplus$ such that for all $\px, \py \in S$ we have
$\norm{\px,\py} \leq d$.
The {\em interior} of a set $S$, $\interior(S)$, is the set of all points
$\px \in S$ for which there exists $d > 0$ s.t. $\ball{d}{\px} \subseteq S$.

{\bf Convexity.} A point $\px$ is a \emph{convex
  combination} of a finite set of points $X = \set{\px_1, \px_2, \ldots, \px_k}$ if
there are $\lambda_1, \lambda_2, \ldots, \lambda_k \in [0, 1]$ such that
$\sum_{i=1}^{k} \lambda_i = 1$ and $\px = \sum_{i=1}^k \lambda_i \cdot \px_i$.
The \emph{convex hull} of $X$ is then the set of all points
that are convex combinations of points in $X$.
We say that $S \subseteq \Real^n$ is {\em convex} iff for all
$\px, \py \in S$ and all $\lambda \in [0,1]$ we have
$\lambda \px + (1-\lambda) \py \in S$ and moreover,
$S$ is a {\em convex polytope} if it is bounded and there exists $k \in \Nat$, a matrix $A$ of 
size $k \times n$ and a vector $\vb \in \Real^k$ such that $\px \in S$ iff
$A\px \leq \vb$. 
We write $\rows{M}$ for the number of rows in a matrix $M$, here $\rows{A}=k$. 

A point $\px$ is a \emph{vertex} of a convex polytope $P$ if it is
not a convex combination of two distinct (other than $\px$) points in $P$. 
For a convex polytope $P$ we write $\Vtx{P}$ for the finite set of points that
correspond to the vertices of $P$.  
Each point in $P$ can be written as a convex combination of the points in
$\Vtx{P}$, or in other words, $P$ is the \emph{convex hull} of $\Vtx{P}$.
From standard properties of polytopes, it follows that for every convex polytope
$P$ and every vertex $\pc$ of $P$, there exists a vector $\vv$ such that
$\vec{v}\cdot \pc=d$ and $\vec{v}\cdot \px>d$  for all $\px\in P \setminus
\set{\pc}$ for some $d$.
We call such a vector $\vv$ a \emph{supporting hyperplane} of the polytope
$P$ at $\pc$. 

\subsection{Multi-Mode Systems}
A multi-mode system is a hybrid system equipped with finitely many \emph{modes}
and finitely many real-valued \emph{variables}. 
A configuration is described by values of the variables, which change, as the
time elapses, at the rates determined by the modes being used.
The choice of rates is nondeterministic, which introduces
a notion of adversarial behavior.
Formally,
\vspace{-1em} 
\begin{definition}[Multi-Mode Systems]
  \label{def:BMMS}
  A multi-mode system is a tuple $\Hh = (M, n, \Rr)$ where:
    $M$ is the finite nonempty set of \emph{modes}, 
    $n$ is the number of continuous variables, and 
    $\Rr : M \to 2^{\Real^n}$ is the rate-set function that, for each mode $m \in 
    M$, gives a set of vectors.
\end{definition}

We often write  $\vr \in m$ for $\vr\in \Rr(m)$ when $\Rr$ is clear
from the context. 
A finite \emph{run} of a multi-mode system $\Hh$ is a finite sequence of states,
timed moves and rate vector choices  
$\varrho = \seq{\px_0, (m_1, t_1), \vr_1, \px_1,  \ldots, (m_k, t_k), \vr_k,
  \px_k}$ 
s.t. for all $1 \leq i \leq k$ we have $\vr_i \in \Rr(m_i)$ and   
$\px_i {=} \px_{i-1} + t_i \cdot \vr_i$.
For such a run $\varrho$ we say that $\px_0$ is the \emph{starting state}, while
$\px_k$ is its \emph{last state}.
An infinite run is defined in a similar manner. 
We write $\RUNS$ and $\FRUNS$ for the set of infinite and finite runs of $\Hh$,
while $\RUNS(\px)$ and $\FRUNS(\px)$  for the set of infinite and finite runs
starting from $\px$.  

An infinite run $\seq{\px_0, (m_1, t_1), \vr_1, \px_1, (m_2, t_2),
  \vr_2,  \ldots}$ is \emph{Zeno} if $\sum_{i=1}^{\infty} t_i < \infty$.
Given a set $S \subseteq \Real^n$ of safe states, we say that a  run
$\seq{\px_0, (m_1, t_1), \vr_1, \px_1, (m_2, t_2), \vr_2,  \ldots}$ is
$S$-safe if for all $i \geq 0$ we have that $\px_i \in S$ and $\px_i + t \cdot
\vr_{i+1} \in S$ for all $t \in [0, t_{i+1}]$, assuming $t_0 = 0$.
Notice that if $S$ is a convex set and
$\px_i \in S$ for all $i\ge 0$, then for all $i \geq 0$ and for all $t \in
[0, t_{i+1}]$ we have that $\px_i + t \cdot \vr_{i+1} \in S$.
The concept of $S$-safety for finite runs is defined in a similar manner.
Sometimes we simply call a run safe when the safety set and the starting state
is clear from the context. 

We formally give the semantics of a multi-mode system $\Hh$ as a
turn-based two-player game between the players, \emph{scheduler} and
\emph{environment}, who choose their moves to construct a run of the system.   
The system starts in a given starting state $\px_0 \in \Real^n$ and at each turn
scheduler chooses a timed move, a pair $(m, t) \in M \times \Real_{>0}$ consisting
of a mode and a time duration, and the environment chooses a rate vector 
$\vr \in \Rr(m)$ and as a result the system changes its state from $\px_0$ to
the state $\px_1 = \px_0 + t \cdot \vr$ in $t$ time units following the linear
trajectory according to the rate vector $\vr$.
From the next state $\px_1$ the scheduler again chooses a timed move and the
environment an allowable rate vector, and the game continues forever in this
fashion.
The focus of this paper is on safe-\emph{schedulability game}, where
the goal of the scheduler is to keep the states of the system within a given
safety set $S$, while ensuring that the time diverges (non-Zenoness requirement).   
The goal of the environment is the opposite, i.e. to visit a state out of the
safety set or make the time converge to some finite number.

Given a bounded and convex safety set $S$, we define \emph{(safe) schedulability
objective} $\SCH^S$ as the set of $S$-safe and non-Zeno runs of $\Hh$. 
In a schedulability game the winning objective of the scheduler is to make sure
that the constructed run of a system belongs to $\SCH^S$, while the goal of the
environment is the opposite. 
The choice selection mechanism of the players is typically defined as
strategies. 
A \emph{strategy} $\sigma$ of scheduler is function  $\sigma{:} \FRUNS {\to} M
{\times} \Rplus$ that gives a timed move for every history of the game.
A strategy $\pi$ of the environment is a function $\pi: \FRUNS \times (M \times
\Rplus) \to \Real^n$ that chooses an allowable rate for a given history of
the game and choice %
of the scheduler.
We say that a strategy is \emph{positional} if it suggests the same action for
all runs with common last state.
We write $\Sigma$ and $\Pi$ for the set of strategies of the scheduler and the
environment, respectively.

Given a starting state $\px_0$ and a strategy pair $(\sigma, \pi) \in \Sigma
\times \Pi$ we define the unique run $\RUN(\px_0, \sigma, \pi)$ 
starting from $\px_0$ as 
\[
\RUN(\px_0, \sigma, \pi)=\seq{\px_0, (m_1, t_1), \vr_1, \px_1, (m_2, t_2),
  \vr_2,  \ldots}
\]
where for all $i{\ge} 1$, $(m_i, t_i) = \sigma(\seq{\px_0, (m_1, t_1), \vr_1, \px_1, \ldots,\px_{i-1}})$ and 
$\vr_i=\pi(\seq{\px_0, (m_1, t_1), \vr_1, \px_1, \ldots,\px_{i-1},m_i, t_i})$ and
$x_i= x_{i-1} + t_i\cdot \vr_i$.
The scheduler wins the game if there is $\sigma \in \Sigma$ such that for all $\pi \in \Pi$
we get $\RUN(\px_0, \sigma, \pi) \in \SCH^S$.   
Such a strategy $\sigma$ is \emph{winning}.  
Similarly, the environment wins the game if there is $\pi \in \Pi$ such that
for all $\sigma \in \Sigma$ we have $\RUN(\px_0, \sigma, \pi) \not \in \SCH^S$.
Again, $\pi$ is called \emph{winning} in this case.
If a winning strategy for scheduler exists, we say that $\Hh$ is schedulable for
$S$ and $\px_0$ (or simply {\em schedulable} if $S$ and $\px_0$ are clear from
the context). 
The following is the main algorithmic problem studied in this paper.

\begin{definition}[Schedulability]
  \label{def:schedulability}
  Given a multi-mode system $\Hh$, a safety set $S$, and a starting
  state $\px_0 \in S$, the (safe) \emph{schedulability problem} is to decide whether
  there exists a winning strategy of the scheduler.
\end{definition}

\subsection{Bounded-Rate Multi-Mode Systems} 
To algorithmically decide schedulability problem, we need to restrict the range
of $\Rr$ and the domain of safety set $S$ in a schedulability game on a
multi-mode system. 
The most general model that we consider is the bounded-rate multi-mode systems
(\bmms{}) that are multi-mode systems $(M, n, \Rr)$ such that $\Rr(m)$ is a
convex polytope for every $m \in M$. 
We also assume that the safety set $S$ is specified as a convex polytope.
In our proofs we often refer to another variant of multi-mode systems in which
there are only a fixed number of different rates in each mode (i.e. $\Rr(m)$ is
finite for all $m\in M$).   
We call such a multi-mode system \emph{multi-rate multi-mode systems}
(\mmms{}). 
Finally, a special form of \mmms{} are  \emph{constant-rate multi-mode systems}
(\cmms{})~\cite{ATW12} in which $\Rr(m)$ is a singleton for all $m\in M$.
We sometimes use $\Rr(m)$ to refer to the unique element of the set $\Rr(m)$ in
a \cmms{}.
The concepts for the schedulability games for \bmms{} and \mmms{} are already
defined for multi-mode systems. 
Similar concepts also hold for \cmms{} but note that the environment has no real
choice in this case. 
For this reason, we can refer to a schedulability game on \cmms{} as a
one-player game.  

The prime~\cite{ATW12} practical motivation for studying \cmms{} was to
generalize results on green scheduling problem by Nghiem et al.~\cite{NBPM11}. 
We argue that \bmms{} are a suitable abstraction to study green scheduling problem
when various rates of temperature change are either uncertain or follow a complex
and time-varying dynamics, as shown in the following example.

\begin{example}[Green Scheduling]
  \label{example1}
  Consider a building with two rooms $A$ and $B$.  
  HVAC units in each zone can be in
  one of the two modes $0$ (OFF) and $1$ (ON). 
  We write the mode of the combined system as $m_{i,j}$ to represent 
  the fact that rooms $A$ and $B$ are in mode $i \in \set{0, 1}$ and $j \in
  \set{0,1}$, respectively. 
  The rate of temperature change and the energy usage for each room is given
  below. 
  \begin{center}
      \begin{tabular}{|p{4.5cm}|l|l|l|}
        \hline
        Zones& ON & OFF \\
        \hline
        A (temp. change rate/ usage) & -2/2& 2/1\\
      \hline
      B (temp. change/ usage) & -2/2& 2/1\\
      \hline
    \end{tabular}
  \end{center}
  Following~\cite{ATW12} we assume that the energy cost is equal to energy
  usage if peak energy usage at any given point in time is less than or equal to
  $3$ units, otherwise energy cost is $10$ times of that standard rate.
  It follows that to minimize energy cost the peak usage, if possible, must not
  be higher than $3$ units at any given time.
  We can model the system as a \cmms{} with modes $m_{0, 0}$, $m_{0, 1}$, and
  $m_{1, 0}$, because these are the only ones that have peak usage at most $3$.  
  The variables of the \cmms{} are the temperature of the rooms, while the
  safety set is the constraint that temperature of both zones should be between
  $65^o F$ to $75^o F$.
  The existence of a winning strategy in \cmms{} implies the existence of a
  switching schedule with energy peak demand less than or equal to $4$ units.
  In Figure~\ref{fig:example1}.(a) we show a graphical representation of such
  \cmms{} with three modes $m_{0,0}, m_{0,1}$ and $m_{1,0}$ and two variables
  (corresponding to the two axes).  
  The rate of the variables in mode $m_{0,0}$ is $(2, 2)$, in mode $m_{0,1}$ is
  $(2,-2)$, and in mode $m_{1,0}$ is $(-2, 2)$.
  \begin{figure}[t]
    \centering
    {\small
  \begin{tikzpicture}[node distance=2cm,scale=0.4]
    \tikzstyle{lines}=[draw=black!30,rounded corners]
    \tikzstyle{vectors}=[-latex, rounded corners]
    \tikzstyle{rvectors}=[-latex,very thick, rounded corners]

    \draw[lines] (-3.2,0)--(3.2,0);
    \draw[lines] (0, 3.2)--(0,-3);
    \draw[vectors] (0, 0) --node[right]{$m_{0,0}$} (2, 2) node[left]{$$};
    \draw[vectors] (0, 0) --node[right]{$m_{0,1}$} (2, -2)node[right]{$$};
    \draw[vectors] (0, 0) --node[left]{$m_{1,0}$} (-2, 2)node[right]{$$};

    \draw (0, -4) node {$(a)$ Constant-Rate};

    \draw[lines] (4.2,0)--(10.2,0);
    \draw[lines] (7, 3.2)--(7,-3);
    
    \draw[vectors] (7, 0) --node[right]{$$} (8.5, 1.5) node[left]{$$};
    \draw[vectors] (7, 0) --node[right]{$$} (8.5, 2) node[left]{$$};
    \draw[vectors] (7, 0) --node[right]{$$} (9, 2) node[left]{$$};
    \draw[vectors] (7, 0) --node[right]{$m_{0,0}$} (9, 1.5) node[left]{$$};
    \draw[fill=black!50,opacity=0.2] (8.5, 1.5) -- (8.5, 2) -- (9, 2) -- (9,
    1.5) -- (8.5, 1.5); 
    
    \draw[vectors] (7, 0) --node[right]{$$} (8.5, -2.5)node[right]{$$};
    \draw[vectors] (7, 0) --node[right]{$$} (8.5, -2)node[right]{$$};
    \draw[vectors] (7, 0) --node[right]{$$} (9, -2)node[right]{$$};
    \draw[vectors] (7, 0) --node[right]{$m_{0,1}$} (9, -2.5)node[right]{$$};
    \draw[fill=black!50,opacity=0.2] (8.5, -2.5) -- (8.5, -2) -- (9, -2) -- (9,
    -2.5) -- (8.5, -2.5); 
    
    \draw[vectors] (7, 0) --node[left]{$$} (4.5, 1.5)node[right]{$$};
    \draw[vectors] (7, 0) --node[left]{$m_{1,0}$} (4.5, 2)node[right]{$$};
    \draw[vectors] (7, 0) --node[left]{$$} (5, 2) node[right]{$$};
    \draw[vectors] (7, 0) --node[left]{$$} (5, 1.5) node[right]{$$};
    \draw[fill=black!50,opacity=0.2] (4.5, 1.5) -- (4.5, 2) -- (5, 2) -- (5,
    1.5) -- (4.5, 1.5); 
    
    \draw (7, -4) node {$(b)$ Bounded-Rate};

   \draw[lines] (11.2,0)--(17.2,0);
    \draw[lines] (14, 3.2)--(14,-3);

    \draw[vectors] (14, 0) --node[right]{$$} (15.5, 1.5) node[left]{$$};
    \draw[vectors] (14, 0) --node[right]{$$} (15.5, 2) node[left]{$$};
    \draw[vectors] (14, 0) --node[right]{$$} (16, 2) node[left]{$$};
    \draw[vectors] (14, 0) --node[right]{$m_{0,0}$} (16, 1.5) node[left]{$$};
    
    \draw[vectors] (14, 0) --node[right]{$$} (15.5, -2.5)node[right]{$$};
    \draw[vectors] (14, 0) --node[right]{$$} (15.5, -2)node[right]{$$};
    \draw[vectors] (14, 0) --node[right]{$$} (16, -2)node[right]{$$};
    \draw[vectors] (14, 0) --node[right]{$m_{0,1}$} (16, -2.5)node[right]{$$};
    
    \draw[vectors] (14, 0) --node[left]{$$} (11.5, 1.5)node[right]{$$};
    \draw[vectors] (14, 0) --node[left]{$m_{1,0}$} (11.5, 2)node[right]{$$};
    \draw[vectors] (14, 0) --node[left]{$$} (12, 2)node[right]{$$};
    \draw[vectors] (14, 0) --node[left]{$$} (12, 1.5)node[right]{$$};

    \draw (14, -4) node {$(c)$ Multi-Rate };

  \end{tikzpicture}
}
    \vspace{-2em}
    \caption{ \label{fig:example1} Restricted Multi-mode Systems}
    \vspace{-1em}
  \end{figure}
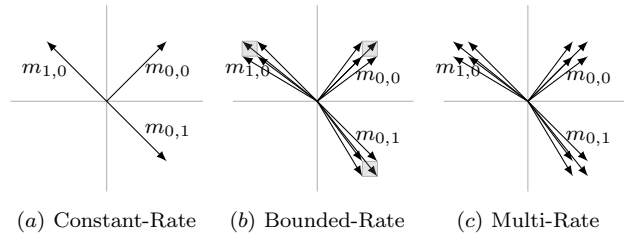

  Now assume that the rate of temperature change in a mode is not constant
  and can vary within a given margin $\varepsilon >0$.
  Schedulability problem for such system can best be modeled as a \bmms{} as
  shown in Figure~\ref{fig:example1}.(b) where the polytope of possible rate
  vectors is shown as a shaded region.
  In Figure~\ref{fig:example1}.(c) we show a \mmms{} where variables can only
  change with the extreme rates of the \bmms{} in  Figure~\ref{fig:example1}.(b).
\end{example}

We say that a \cmms{} $H = (M, n, R)$ is an instance of a multi-mode system 
$\Hh = (M, n, \Rr)$ if for every $m \in M$ we have that $R(m) \in \Rr(m)$.
For example, the \cmms{} shown in Figure~\ref{fig:example1}.(a) is an instance
of \bmms{} in Figure~\ref{fig:example1}.(b).
We denote the set of instances of a multi-mode system $\Hh$ by $\sem{\Hh}$. 
Notice that for a \bmms{} $\Hh$ the set $\sem{\Hh}$ of its instances is
uncountably infinite, while for a \mmms{} $\Hh$ the set $\sem{\Hh}$ is finite
whose size is exponential in the size of $\Hh$. 
We say that a \mmms{} $(M, n, \Rr')$ is the \emph{extreme-rate} \mmms{} of a
\bmms{} $(M, n, \Rr)$ if $\Rr'(m) = \Vtx{\Rr(m)}$. 
The \mmms{} in Figure~\ref{fig:example1}.(c) is the extreme-rate \mmms{} for the
\bmms{} in Figure~\ref{fig:example1}.(b) 
We write $\Ext(\Hh)$ for the extreme-rate \mmms{} of the \bmms{} $\Hh$.

Notice that for every starting state and winning objective at most one player
can have a winning strategy.
We say that a game is not \emph{determined} if no player has a winning strategy
for some starting state. 
In the next section we give an algorithm to decide the winner in a
schedulability game for an arbitrary starting state. 
Since for every starting state we can decide the winner, it gives a direct proof
of determinacy of schedulability games on \bmms{}. 
Moreover, it follows from our results that whenever a player has a winning
strategy, he has a positional such strategy. 
These two results together yield the first key results of this paper. 
\begin{theorem}[Determinacy]
  \label{thm:determinacy}
  Schedulability games on \bmms{} with convex safety polytopes are
  positionally determined. 
\end{theorem}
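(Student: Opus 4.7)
The plan is to prove determinacy and positionality simultaneously by constructing, for each starting state $\px_0 \in S$, a positional winning strategy for one of the two players. I would split along whether $\px_0$ lies in $\interior(S)$ or on $\partial S$, and in each case reduce to the existence of an $\Hh$-closed polytope as sketched in the introduction, using the characterisation that such a polytope exists iff every extreme-rate \cmms{} in $\sem{\Ext(\Hh)}$ is schedulable.

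For $\px_0 \in \interior(S)$, assume first that some $\Hh$-closed polytope $P \subseteq S$ containing $\px_0$ is available, with $\Vtx{P} = \set{\px_1,\dots,\px_k}$ and witness pairs $(m(\px_i), t(\px_i))$. I define the scheduler's positional strategy $\sigma$ at state $\px \in P$ as follows: compute the convex coefficients $(\lambda_1,\dots,\lambda_k)$ of $\px$ with respect to $\Vtx{P}$, let $j = \argmax_i \lambda_i\, t(\px_i)$, and play $(m(\px_j),\, \lambda_j t(\px_j))$. By $\Hh$-closedness of $P$, every environment reply keeps the successor inside $P \subseteq S$, so safety is maintained; non-Zenoness follows from $\lambda_j t(\px_j) \geq (\min_i t(\px_i))/k > 0$, a positive constant independent of $\px$. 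Conversely, if no such polytope exists then some $H \in \sem{\Ext(\Hh)}$ is unschedulable, and the Farkas argument recalled for \cmms{} yields a vector $\vv$ with $\vv \dotprod R(m) > 0$ for every $m \in M$. The environment's positional strategy answers each move $(m,t)$ by the extreme rate $R(m)$; then $\vv \dotprod \px$ grows by $t \cdot (\vv \dotprod R(m)) > 0$ per round, so any non-Zeno run exits the bounded set $S$, while Zeno runs already fail $\SCH^S$.

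For $\px_0 \in S \setminus \interior(S)$, I would show that if the scheduler wins then he wins via a strategy that moves only from lower-dimensional faces of $S$ to higher-dimensional ones, never the reverse. This lets me traverse the face lattice of $S$ bottom-up, applying the previous interior argument within the affine hull of each relatively open face $F$ with the induced \bmms{} whose rates are restricted to those keeping the trajectory in $F$ or moving into strictly higher-dimensional faces. The per-face positional scheduler strategies are glued into a single positional winning strategy; on the complementary faces the extreme-rate obstruction produces a positional environment strategy in the same way. Since on each $\px_0$ exactly one of these two outcomes occurs, this yields positional determinacy.

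The main obstacle is non-Zenoness of the scheduler's strategy: the chosen delay must not shrink to zero along the run. This is exactly why the $\argmax$ is used rather than an arbitrary vertex---it produces the uniform lower bound $(\min_i t(\px_i))/k$ on every move, regardless of the current state. A secondary difficulty is the boundary case, where I must define the induced \bmms{} on each face carefully so that the extreme-rate characterisation transfers correctly and so that the one-way low-to-high-dimension movement can be enforced while remaining positional.
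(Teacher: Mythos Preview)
Your proposal is correct and follows essentially the same route as the paper: the interior case is handled via an $\Hh$-closed polytope together with the $\argmax$-based scheduler strategy (this is exactly the paper's Algorithm~1 and Proposition~2), and the environment's positional counter-strategy from an unsafe extreme-rate instance matches Lemmas~3 and~4; the boundary case is handled by the face-lattice traversal of Section~3.2 and Algorithm~3. One small terminological point: the paper processes faces from higher-dimensional to lower-dimensional (smaller index set $I$ first), since the scheduler's strategy on a low-dimensional face may need to escape to an already-analysed higher-dimensional one---make sure your ``bottom-up'' traversal is oriented this way, and note that the paper's induced system $\Hh_I$ drops internal rates rather than keeping them, handling the purely-internal-mode case separately before invoking the interior argument.
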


In Section~\ref{sec:complexity} we analyze the complexity of deciding the winner
in a schedulability game. 
Using a reduction from SAT problem to non-schedulability for a \mmms{}, we prove
the following main contribution of the paper. 
\begin{theorem}\label{thm:coNP}
  Schedulability problems for \bmms{} and $\mmms{}$ are co-NP complete.
\end{theorem}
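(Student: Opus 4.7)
The plan is to establish the two directions separately: co-NP membership via a small certificate for non-schedulability, and co-NP hardness via a polynomial reduction from 3-SAT to the non-schedulability problem.

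For \textbf{co-NP membership}, I would exploit the characterization announced in the introduction (to be established in Section~\ref{sec:schedulability}): a \bmms{} $\Hh$ is schedulable from any starting state in the interior of $S$ iff every extreme-rate \cmms{} in $\Ext(\Hh)$ is schedulable (with the appropriate extension to boundary states obtained by traversing the face lattice of $S$). This characterization gives a succinct certificate for non-schedulability, namely a choice of one vertex $\vr_m \in \Vtx{\Rr(m)}$ for every mode $m \in M$ together with a proof that the resulting \cmms{} is unschedulable. For a \mmms{} the vertices are already listed in the input, so such a choice has linear size. For a \bmms{} each vertex of $\Rr(m)$ can be identified as the unique solution of $n$ tight inequalities from the defining inequalities of $\Rr(m)$, hence has polynomial description size. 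A nondeterministic algorithm then guesses the vertex choice, and verifies in polynomial time (using the LP-based algorithm of~\cite{ATW12}, equivalently Farkas' lemma, that decides whether the induced rate vectors have no non-negative combination summing to $\vzero$) that the resulting \cmms{} is non-schedulable. This places non-schedulability in NP, and schedulability in co-NP.

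For \textbf{co-NP hardness}, I would reduce 3-SAT to the non-schedulability problem. Given a formula $\phi$ with variables $x_1,\dots,x_n$ and clauses $C_1,\dots,C_k$, I would construct a \mmms{} $\Hh_\phi$ (which is also a \bmms{}) so that $\phi$ is satisfiable iff some extreme-rate instance of $\Hh_\phi$ is non-schedulable, i.e.\ iff the environment has a winning strategy. The guiding idea is Farkas' duality: an extreme-rate \cmms{} is non-schedulable iff there exists a vector $\vv$ with $\vv \dotprod \vr_m > 0$ for every chosen rate $\vr_m$. Thus a satisfying assignment should correspond simultaneously to a choice of one rate vector per mode and to a separating direction $\vv$. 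I would introduce one mode per variable $x_i$ with two rate vectors representing the two truth values; the coordinates of $\vv$ would encode the assignment and auxiliary coordinates would encode literal occurrences in clauses. For each clause I would add a gadget mode whose rates force $\vv$ to respect the clause in the sense that at least one literal of the clause is ``oriented positively'' by $\vv$, the sign coinciding with the chosen rate of the corresponding variable mode. Small $\varepsilon$-perturbations convert the qualitative sign constraints of SAT into the strict inequalities required by Farkas' lemma.

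The \textbf{main obstacle} is the design of the clause gadgets: they must enforce clause satisfaction without inadvertently constraining the separating direction $\vv$ in a way that rules out non-satisfying yet consistent choices, and must interact correctly with the variable-consistency constraints so that inconsistent pairs (a rate choice for mode $x_i$ inconsistent with the orientation $\vv$ assigns to $x_i$) cannot coexist with a non-schedulable extreme-rate instance. Once the gadget geometry is verified to be sound and complete, the reduction runs in polynomial time, establishing NP-hardness of non-schedulability and hence co-NP-hardness of schedulability. Combined with the membership argument above this yields both halves of Theorem~\ref{thm:coNP} for \bmms{} and \mmms{}.
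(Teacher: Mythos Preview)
Your membership argument is essentially the paper's approach: a non-schedulability certificate is a choice of one extreme rate per mode giving an unsafe \cmms{} instance (plus, for boundary starting states, a face $I$ of $S$ together with the designation of which modes are ``external'' for $I$). The paper's falsifier makes the boundary case explicit, whereas you only gesture at the face-lattice extension; you should spell out that the certificate also guesses $I\subseteq\{1,\dots,\rows{A}\}$ with $\px_0\models\facet{S}{I'}$ for some $I'\supseteq I$, and a set $\External\subseteq M$ of modes with external rates for $I$, so that the remaining modes form the unsafe instance.

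Your hardness plan diverges from the paper's, and in a direction that costs you extra work without buying anything. The paper's reduction is simpler than what you sketch: it uses one mode \emph{per clause} (not per variable), and the rates of mode $m_j$ are the signed unit vectors $\vp_i$ or $\vn_i$ corresponding to the literals of clause $j$. No variable modes, no clause gadgets, no auxiliary coordinates, no $\varepsilon$-perturbations. The point is that in this encoding, an environment choice of one rate per mode is exactly a choice of one literal per clause, and the Farkas separator $\vv$ exists iff no two chosen literals are $\vp_i$ and $\vn_i$ for the same $i$---i.e.\ iff the chosen literals come from a consistent truth assignment. Consistency is enforced for free because no $\vv$ can satisfy $\vv\cdot\vp_i>0$ and $\vv\cdot\vn_i>0$ simultaneously. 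Thus satisfying assignments correspond bijectively to unsafe extreme-rate instances, and the reduction is complete without the gadget design you identify as your main obstacle. Your variable-mode-plus-clause-gadget architecture could probably be made to work, but you would be re-engineering a consistency mechanism that the clause-as-mode encoding gives you automatically.
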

On a positive note, we also show that schedulability games can be solved in
polynomial time for \bmms{} and \mmms{} with two variables.

\section{Solving Schedulability Games}
\label{sec:schedulability}
In this section we discuss the decidability of the schedulability problem for
\bmms{}.  
We first present a solution for the case when the starting state is in the
interior of a safety set, and generalize it to arbitrary starting states 
in Section~\ref{sec:sched-general}.

\subsection{Starting State in the Interior of Safety Set}\label{sec:interior}
Alur et al.~\cite{ATW12} presented a polynomial-time algorithm to decide if the
scheduler has a winning strategy in a schedulability game on a \cmms{} for an
arbitrary starting state. 
In particular, for starting states in the interior of the safety set, they
characterized a necessary and sufficient condition. 
\begin{theorem} [\cite{ATW12}]
  \label{thmcms}
  The scheduler has a winning strategy in a \cmms{} $(M, n, R)$, with convex
  safety set $S$ and starting state $\px_0$ in the interior of $S$, iff
  there is  $\vec{t} \in \Rplus^{|M|}$ satisfying:
  \begin{equation}\label{eq:cmms}
    \sum_{i=1}^{|M|} R(i)(j) \cdot \vec{t}(i) = 0 \text{ for $1\leq j \leq n$ and }
    \sum_{i=1}^{|M|} \vec{t}(i) = 1.
  \end{equation}
\end{theorem}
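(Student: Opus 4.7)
The plan is to prove the two directions separately, using a periodic-schedule construction for sufficiency and a separating-hyperplane argument (Farkas' lemma) for necessity. Since a \cmms{} has no genuine environment choice, the game reduces to a single-player reachability problem, and the content lies in tying the algebraic condition~(\ref{eq:cmms}) to the geometry of long runs in a bounded convex safety set.

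For the sufficiency direction, suppose $\vec{t} \in \Rplus^{|M|}$ satisfies~(\ref{eq:cmms}). Because $\px_0 \in \interior(S)$, there exists $d > 0$ with $\ball{d}{\px_0} \subseteq S$. Let $R_{\max} = \max_{i} \norm{R(i)}$ and pick any $\varepsilon > 0$ with $\varepsilon R_{\max} < d$. I would then define the positional schedule that, in period number $k \geq 0$, plays the timed moves $(m_1, \varepsilon\vec{t}(1)), (m_2, \varepsilon\vec{t}(2)), \ldots, (m_{|M|}, \varepsilon\vec{t}(|M|))$ in order. After one full period the displacement from $\px_0$ is $\varepsilon \sum_{i=1}^{|M|} R(i)\,\vec{t}(i) = \vzero$ by~(\ref{eq:cmms}), so the state returns to $\px_0$. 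During any prefix of a period the state is $\px_0 + \varepsilon \sum_{i \leq j} \vec{t}(i) R(i) + s\, R(j{+}1)$ for some $s \in [0, \varepsilon \vec{t}(j{+}1)]$, which lies in $\ball{\varepsilon R_{\max}}{\px_0} \subseteq S$ because $\sum_i \vec{t}(i) = 1$ and the polytope is convex. Time divergence follows because every period contributes $\varepsilon > 0$ to the total elapsed time.

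For the necessity direction, assume no such $\vec{t}$ exists; equivalently, $\vzero$ does not lie in the convex hull of $\set{R(1), \ldots, R(|M|)}$. I would invoke Farkas' lemma (or the separating hyperplane theorem) to produce a vector $\vec{v} \in \Real^n$ and a constant $c > 0$ such that $\vec{v} \dotprod R(i) \geq c$ for every mode $i \in \set{1, \ldots, |M|}$. Now consider any non-Zeno run $\seq{\px_0, (m_{i_1}, t_1), R(i_1), \px_1, (m_{i_2}, t_2), R(i_2), \px_2, \ldots}$. Taking the dot product with $\vec{v}$ telescopes to $\vec{v} \dotprod \px_k = \vec{v} \dotprod \px_0 + \sum_{j=1}^{k} t_j (\vec{v} \dotprod R(i_j)) \geq \vec{v} \dotprod \px_0 + c \sum_{j=1}^{k} t_j$. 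By non-Zenoness $\sum_j t_j$ diverges, so $\vec{v} \dotprod \px_k \to \infty$; but $S$ is bounded, hence $\sup_{\px \in S} \vec{v} \dotprod \px < \infty$, so $\px_k \notin S$ for all sufficiently large $k$. Thus no non-Zeno $S$-safe run exists, and the scheduler has no winning strategy.

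The main obstacle is the necessity direction, specifically selecting the correct form of Farkas' lemma: the hypothesis must be stated precisely so that the negation of feasibility of the system $\sum_i R(i)\vec{t}(i) = \vzero$, $\sum_i \vec{t}(i) = 1$, $\vec{t} \geq \vzero$ yields the strict separation $\vec{v} \dotprod R(i) > 0$ rather than only a non-strict inequality. This is clean once one writes the system in standard Farkas form (treating the normalization row as part of the linear system together with a strict-positivity witness), after which the bounded-progress argument makes the environment winning. Everything else is a routine continuity/convexity calculation.
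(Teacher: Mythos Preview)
Your proposal is correct and matches the argument the paper itself sketches: this theorem is cited from~\cite{ATW12} rather than proved, but the surrounding discussion (periodic loop-back schedule for sufficiency, Farkas'/separating-hyperplane vector $\vv$ with $\vv\cdot R(i)>0$ forcing unbounded drift for necessity) is exactly your approach. The only minor wrinkle is to drop modes with $\vec t(i)=0$ so that each timed move has strictly positive duration, which is harmless since $\sum_i \vec t(i)=1$ guarantees at least one positive entry.
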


We call a \cmms{} \emph{safe} if it satisfies (\ref{eq:cmms}) and we call $H$
unsafe otherwise. 
The intuition behind Theorem~\ref{thmcms} is that the scheduler has a winning
strategy if and only if it is possible to return to the starting state in
strictly positive time units. 
From the results of~\cite{ATW12} it also follows that whenever a winning
strategy exists, there is a strategy which does not look at a history or even
the current state, but only uses a bounded counter of size $\ell \le |M|-1$ and
after after a history of length $k$ makes a decision only based on the number
$k$ modulo $\ell$. 
Such strategies are called \emph{periodic}.

It is natural to ask whether the approach of~\cite{ATW12} can be generalized 
to \bmms{}. 
Unfortunately, Example~\ref{ex22} shows that in a \bmms{} although a winning
strategy may exist, it may not be possible to return to the initial state, or
indeed visit any state twice. 
Another natural question to ask is whether a suitable generalization of periodic
strategies suffice for \bmms{}. 
\emph{Static} strategies are \bmms{} analog of periodic strategies that behave
in the same manner irrespective of the choices of the environment, i.e. for a
static strategy $\sigma$ we have that 
$\sigma(\rho) = \sigma(\rho')$ for all runs $\rho = \seq{\px_0, (m_1, 
  t_1), \vr_1, \px_1,  \ldots, (m_k, t_k), \vr_k, \px_k}$ and 
$\rho' = \seq{\px_0, (m_1, t_1), \vr'_{1}, \px_1',  \ldots, (m_k, t_k), \vr'_{k}),
  \px_k'}$.
Static strategies are often desirable in the settings where scheduler can
not observe the state of the system.
However, as we show in
\ifthenelse{\isundefined{\techreport}}{%
\cite{techreport}}{%
Appendix~\ref{app:static}}%
, except for the degenerate
cases when the \bmms{} contains a subset of modes which induce a safe \cmms{},
scheduler can never win a game on \bmms{} using static strategies. 
We saw an example of this phenomenon in the Introductory section as
Figure~\ref{geom2}.(c). 

This negative observations imply that to solve the schedulability games for \bmms{}
one needs to take a different approach. 
In the rest of this section, we define the notion of $\Hh$-closed polytope and
show that if such a polytope exists, then for any convex set $S$ we can construct a
winning \emph{dynamic} strategy which takes its decisions only based on the last
state. 
We also extend the notion of safety of a \cmms{} to \bmms{}. 
We say that a \bmms{} $\Hh$ is \emph{safe} if all instances of its extreme-rate \mmms{}
$\Ext(\Hh)$ are safe i.e. all $H \in \sem{\Ext(\Hh)}$ satisfy~(\ref{eq:cmms}).
Finally, we connect (Lemmas~\ref{lem:sched-pushing} and~\ref{lem:extreme}) the
existence of $\Hh$-closed polytope with the safety of the \bmms{}. 

{\bf Dynamic Scheduling Algorithm.}
For a \bmms{} $\Hh$ we call a convex polytope $P$ {\em $\Hh$-closed}, if for
every vertex of $P$ there exists a mode $m$ such that all the rate
vectors of $m$ keep the system in $P$, i.e.
for all $\pc \in \Vtx{P}$ there exists $m \in M$ and $\tau \in \Rpos$ such that for
all $\vr \in \Rr(m)$ we have that $\pc + \vr\cdot t \in P$ for all $t \in [0,
\tau]$.
An example of a $\Hh$-closed polytope is given in the
Example~\ref{mainex}.

\begin{algorithm}[t]
  \KwIn{BMMS $\Hh$, starting state $\px_0$}
  \KwOut{non-Terminating Scheduling Algorithm}
  $\gamma := $ the shortest distance of $\px_0$ from borders of $S$\;
  $P := $ $\Hh$-closed polytope s.t. $P\subseteq\ball{\gamma}{\px_0}$ and
  $\px_0\in P$\nllabel{line:get-polytope}\;  
  \ForEach{$\point c\in \Vtx{P}$}{
    \ForEach{mode $m \in M$}{
      \ForEach {extreme rate vector $\vr \in m$}{
        $t_{\vr}=\max \{t \::\: \point c+\vr\cdot t \in P\}$\; 
      }
      $\delta_m=\min_{\vr\in m} t_{\vr}$\;
    }
    $m_* = \argmax_{m\in M} \delta_m$; \quad
    $\Delta_{\point c} = \delta_{m_*}$; \quad
    $m_{\point{c}}=m_*$\;
  } 
  \While{true}{
    Store current state as $\px$\; 
    Find $(\lambda_{\point c} \geq 0)_{\point c\in \Vtx{P}}$ where
    $\px=\sum_{\pc \in \Vtx{P}} \lambda_{\point c} \cdot \point c$\; 
    Find $\pc_*=\argmax_{\pc \in \Vtx{P}} \lambda_{\pc}\cdot \Delta_{\pc}$\;
    Schedule mode $m_{\point c_*}$ for
    $\lambda_{\pc_*}\cdot\Delta_{\pc_*}$\nllabel{line:time}\; 
  }
  \caption{Dynamic scheduling algorithm}
  \label{SchedulingAlg}
\end{algorithm}

Assume that for any $\gamma>0$ and $\px_0$ we are able to compute a $\Hh$-closed
polytope which is fully contained in $\ball{\gamma}{\px_0}$ and contains $\px_0$.
If this is the case, we can use Algorithm~\ref{SchedulingAlg} to compute a
dynamic scheduling strategy. 
The idea of the algorithm is to build a $\Hh$-closed polytope which contains the
initial state and is fully contained within $S$, and then construct the strategy
based on the modes safe at the vertices of the polytope.
The correctness of the algorithm is established by the following proposition.
\begin{proposition}\label{prop:dynsched}
  If there exists an $\Hh$-closed polytope and it can be effectively computed
  then Algorithm~\ref{SchedulingAlg} implements a winning dynamic strategy for
  the scheduler.     
\end{proposition}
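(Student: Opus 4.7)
The plan is to verify three things about the run produced by Algorithm~\ref{SchedulingAlg}: (i)~the preprocessing is well-defined, (ii)~every intermediate state and trajectory stays inside the safety set $S$, and (iii)~the total elapsed time diverges. The initial state $\px_0$ lies in $\interior(S)$, so $\gamma>0$ and convexity of $S$ yields $\ball{\gamma}{\px_0}\subseteq S$; consequently the polytope $P$ picked on line~\ref{line:get-polytope} satisfies $\px_0 \in P \subseteq S$. For each vertex $\pc$ and mode $m$, the inner loop computes $t_{\vr}$ only over the vertices of the rate polytope $\Rr(m)$; I would first observe that this is enough, because if $\pc + \vr t \in P$ for every extreme $\vr$ of $\Rr(m)$ and every $t\in[0,\delta_m]$, then by convexity of $P$ the same holds for every convex combination of the extreme rates, i.e.\ for every $\vr\in \Rr(m)$. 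Since $P$ is $\Hh$-closed, at each vertex $\pc$ some mode achieves $\Delta_{\pc}>0$, so $\Delta_{\min} := \min_{\pc\in\Vtx{P}}\Delta_{\pc} > 0$.

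For safety, I would prove by induction over the main loop that after each iteration the stored state $\px$ lies in $P$, and the entire linear trajectory from $\px$ to the next state lies in $P$ as well. Given $\px = \sum_{\pc} \lambda_{\pc}\cdot \pc$ with $\lambda_{\pc}\ge 0$, $\sum_{\pc}\lambda_{\pc}=1$, the algorithm picks $\pc_*$ and schedules $m_{\pc_*}$ for duration $\lambda_{\pc_*}\Delta_{\pc_*}$. For any rate $\vr \in \Rr(m_{\pc_*})$ chosen by the environment and any $t\in[0,\lambda_{\pc_*}\Delta_{\pc_*}]$, I would rewrite
\[
\px + t\vr \;=\; \lambda_{\pc_*}\bigl(\pc_* + s\vr\bigr) + \sum_{\pc\neq \pc_*}\lambda_{\pc}\cdot \pc,
\qquad s := t/\lambda_{\pc_*} \in [0,\Delta_{\pc_*}].
\]
By the definition of $\Delta_{\pc_*}$ (combined with the extreme-rate observation above) we have $\pc_* + s\vr \in P$, and every other $\pc \in \Vtx{P}\subseteq P$, so the right-hand side is a convex combination of points in $P$ and therefore in $P$. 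Since $P\subseteq S$, the run is $S$-safe, and the next state (obtained by setting $t=\lambda_{\pc_*}\Delta_{\pc_*}$) still lies in $P$, closing the induction.

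For non-Zenoness, let $k := |\Vtx{P}|$. I would note that at least one vertex $\widehat{\pc}$ has $\lambda_{\widehat{\pc}}\ge 1/k$, and by the $\argmax$ rule $\lambda_{\pc_*}\Delta_{\pc_*} \ge \lambda_{\widehat{\pc}}\Delta_{\widehat{\pc}} \ge \Delta_{\min}/k$. Thus every iteration contributes at least the fixed positive amount $\Delta_{\min}/k$, so the total elapsed time diverges. Combining this with the safety argument shows that the strategy implicitly defined by the algorithm keeps the run inside $\SCH^S$ against every environment strategy, which is exactly the winning condition for the scheduler. The only delicate step is the reduction to extreme rates in the computation of $\Delta_{\pc}$; I do not expect a genuine obstacle, since it follows from convexity of both $P$ and $\Rr(m)$, but it is the place where one must be careful because $\Rr(m)$ is an uncountable set while the algorithm inspects only finitely many vectors.
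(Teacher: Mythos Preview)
Your proposal is correct and follows essentially the same approach as the paper: the same convex-combination rewriting to show the next state (and, in your case, also the intermediate trajectory) stays in $P$, and the same $\Delta_{\min}/|\Vtx{P}|$ lower bound for non-Zenoness. You are in fact more careful than the paper, which does not explicitly justify why iterating over only the extreme rates of $\Rr(m)$ suffices when computing $\Delta_{\pc}$; your convexity observation fills that gap cleanly.
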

\begin{proof}
  Assume that there exists an $\Hh$-closed polytope and we have an algorithm to
  compute it. 
  Observe that the strategy is non-Zeno, because
  $\lambda_{\pc_*}\cdot\Delta_{\pc_*}$ on line~\ref{line:time} is bounded from
  below by $\frac{1}{|\Vtx{P}|}\cdot \min_{\pc \in \Vtx{P}} \Delta_{\pc}$ for
  any point of $P$, and $\Delta_{\pc}$ are positive by their construction and the
  definition of the $\Hh$-closed polytope. 
  Next, we need to show that under the computed strategy we never leave the
  convex polytope $P$. 
  For a state $\px$ which is of the form $\sum_{\pc \in \Vtx{P}} \lambda_{\pc}\cdot \pc$,
  the successor state will be
  $\px'=(\sum_{\pc \in \Vtx{P}} \lambda_\pc\cdot \pc)+ \lambda_{\pc_*}\cdot\Delta_{\pc_*}\cdot\vr$
  where $\vr$ is the rate picked by the environment. 
  We can rewrite $\px'$ as 
  $(\sum_{\pc \in \Vtx{P}\setminus \{\pc_*\}} \lambda_\pc\cdot \pc)+
  \lambda_{\pc_*}\cdot (\pc_* + \vr\cdot \Delta_{\pc_*})$.
  Since $\pc_* + \vr\cdot \Delta_{\pc_*}\in P$, we get that $\px'$ is a
  convex combination of points in $P$ and hence lies in $P$.
\end{proof}

{\bf Constructing $\Hh$-Closed Polytope.}
\begin{figure*}[t]
  \centering
  {\small
  \begin{tikzpicture}[node distance=2cm,scale=0.45]
    \tikzstyle{lines}=[draw=black!30,rounded corners]
    \tikzstyle{vectors}=[-latex, rounded corners]
    \tikzstyle{rvectors}=[-latex,very thick, rounded corners]

    \draw[lines] (-13.4,0)--(-5.2,0);
    \draw[lines] (-9.3, 2.5)--(-9.3,-2.5);
    \draw[vectors] (-9.3, 0) --node[left]{$m_{1}$} (-9.3, 2) node[left]{$$};

    \draw[fill=black!50,opacity=0.2] (-9.3,0) -- (-11.3, 0) -- (-11.3, -1)  -- (-9.3, 0);

    \draw[vectors] (-9.3, 0) --node[below]{$$} (-11.3, 0)node[right]{$$};
    \draw[vectors] (-9.3, 0) --node[above,left]{$m_2$} (-11.3, -1)node[right]{$$};

    \draw[vectors] (-9.3, 0) --node[right]{$$} (-7.3, 0)node[right]{$$};
    \draw[vectors] (-9.3, 0) --node[above,right]{$m_3$} (-7.3, -1)node[right]{$$};

    \draw[fill=black!50,opacity=0.2] (-9.3,0) -- (-7.3, 0) -- (-7.3, -1)  -- (-9.3, 0);

    \draw (-9.3, -3.5) node {$(a)$};

    \draw[lines] (-4.2,0)--(4.2,0);
    \draw[lines] (0, 2.5)--(0,-2.5);
    \draw[vectors] (0, 0) --node[left]{$$} (0, 2) node[left]{$$};
    \draw[vectors] (0, 0) --node[left]{$$} (-2, 0)node[right]{$$};
    \draw[vectors] (0, 0) --node[right]{$$} (-2, -1)node[right]{$$};
    \draw[vectors] (0, 0) --node[right]{$$} (2, 0)node[right]{$$};
    \draw[vectors] (0, 0) --node[left]{$$} (2, -1)node[right]{$$};

     \node[draw,circle,inner sep=0.25pt,fill] at (-2,2) {};
     \node[draw,circle,inner sep=0.25pt,fill] at (0,2) {};
     \node[draw,circle,inner sep=0.25pt,fill] at (2,2) {};
     \node[draw,circle,inner sep=0.25pt,fill] at (-4,1) {};
     \node[draw,circle,inner sep=0.25pt,fill] at (-2,1) {};
     \node[draw,circle,inner sep=0.25pt,fill] at (2,1) {};
     \node[draw,circle,inner sep=0.25pt,fill] at (4,1) {};
     \node[draw,circle,inner sep=0.25pt,fill] at (-2,0) {};
     \node[draw,circle,inner sep=0.25pt,fill] at (2,0) {};
     \node[draw,circle,inner sep=0.25pt,fill] at (-4,-1) {};
     \node[draw,circle,inner sep=0.25pt,fill] at (-2,-1) {};
     \node[draw,circle,inner sep=0.25pt,fill] at (2,-1) {};
     \node[draw,circle,inner sep=0.25pt,fill] at (4,-1) {};
     \node[draw,circle,inner sep=0.25pt,fill] at (-2,-2) {};
     \node[draw,circle,inner sep=0.25pt,fill] at (0,-2) {};
     \node[draw,circle,inner sep=0.25pt,fill] at (2,-2) {};

    \draw (0, -3.5) node {$(b)$};

      \draw[lines] (5,0)--(13.4,0);
      \draw[lines] (9.2, 2.5)--(9.2,-2.5);
      
      \node[draw,circle,inner sep=0.25pt,fill] at (7.2,2) {};
     \node[draw,circle,inner sep=0.25pt,fill] at (9.2,2) {};
     \node[draw,circle,inner sep=0.25pt,fill] at (11.2,2) {};
     \node[draw,circle,inner sep=0.25pt,fill] at (5.2,1) {};
     \node[draw,circle,inner sep=0.25pt,fill] at (7.2,1) {};
     \node[draw,circle,inner sep=0.25pt,fill] at (11.2,1) {};
     \node[draw,circle,inner sep=0.25pt,fill] at (13.2,1) {};
     \node[draw,circle,inner sep=0.25pt,fill] at (7.2,0) {};
     \node[draw,circle,inner sep=0.25pt,fill] at (11.2,0) {};
     \node[draw,circle,inner sep=0.25pt,fill] at (5.2,-1) {};
     \node[draw,circle,inner sep=0.25pt,fill] at (7.2,-1) {};
     \node[draw,circle,inner sep=0.25pt,fill] at (11.2,-1) {};
     \node[draw,circle,inner sep=0.25pt,fill] at (13.2,-1) {};
     \node[draw,circle,inner sep=0.25pt,fill] at (7.2,-2) {};
     \node[draw,circle,inner sep=0.25pt,fill] at (9.2,-2) {};
     \node[draw,circle,inner sep=0.25pt,fill] at (11.2,-2) {};
      
     \draw[fill=black!50,opacity=0.2] (5.2,1) -- (7.2,2) -- (11.2,2)  -- (13.2,1) -- (13.2,-1) --
    (11.2,-2) -- (7.2, -2) -- (5.2,-1) -- (5.2,1);

    \draw (9.2, -3.5) node {$(c)$};

    \draw[lines] (14.4,0)--(22.8,0);
    \draw[lines] (18.6, 2.5)--(18.6,-2.5);

    \node[draw,circle,inner sep=0.25pt,fill] at (14.6,1) {};
     \node[draw,circle,inner sep=0.25pt,fill] at (16.6,2) {};
     \node[draw,circle,inner sep=0.25pt,fill] at (20.6,2) {};
     \node[draw,circle,inner sep=0.25pt,fill] at (22.6,1) {};
     \node[draw,circle,inner sep=0.25pt,fill] at (22.6,-1) {};
     \node[draw,circle,inner sep=0.25pt,fill] at (20.6,-2) {};
     \node[draw,circle,inner sep=0.25pt,fill] at (16.6, -2) {};
     \node[draw,circle,inner sep=0.25pt,fill] at (14.6,-1) {};
    
    \draw[fill=black!50,opacity=0.2] (14.6,1) -- (16.6,2) -- (20.6,2)  -- (22.6,1) -- (22.6,-1) --
    (20.6,-2) -- (16.6, -2) -- (14.6,-1) -- (14.6,1);

    \draw[vectors] (14.6,1) --node[left]{$$} (16.6, 1)node[right]{$$};
    \draw[vectors] (14.6,1) --node[left]{$$} (16.6, 0)node[right]{$$};
    \draw[vectors] (16.6,2) --node[left]{$$} (18.6, 2)node[right]{$$};
    \draw[vectors] (16.6,2) --node[left]{$$} (18.6, 1)node[right]{$$};
    \draw[vectors] (20.6,2) --node[left]{$$} (18.6, 2)node[right]{$$};
    \draw[vectors] (20.6,2) --node[left]{$$} (18.6, 1)node[right]{$$};
    \draw[vectors] (22.6,1) --node[left]{$$} (20.6, 1)node[right]{$$};
    \draw[vectors] (22.6,1) --node[left]{$$} (20.6, 0)node[right]{$$};
    \draw[vectors] (22.6,-1) --node[left]{$$} (22.6, 0)node[right]{$$};
    \draw[vectors] (22.6,-1) --node[left]{$$} (22.6, 0)node[right]{$$};
    \draw[vectors] (22.6,-1) --node[left]{$$} (22.6, 0)node[right]{$$};
    \draw[vectors] (20.6,-2) --node[left]{$$} (20.6, -1)node[right]{$$};
    \draw[vectors] (16.6, -2) --node[left]{$$} (16.6, -1)node[right]{$$};
    \draw[vectors] (14.6, -1) --node[left]{$$} (14.6, 0)node[right]{$$};
    
    \draw (18.6, -3.5) node {$(d)$};

  \end{tikzpicture}
}
  \caption{Constructing closed convex polytope}
  \label{geom3}
\end{figure*}
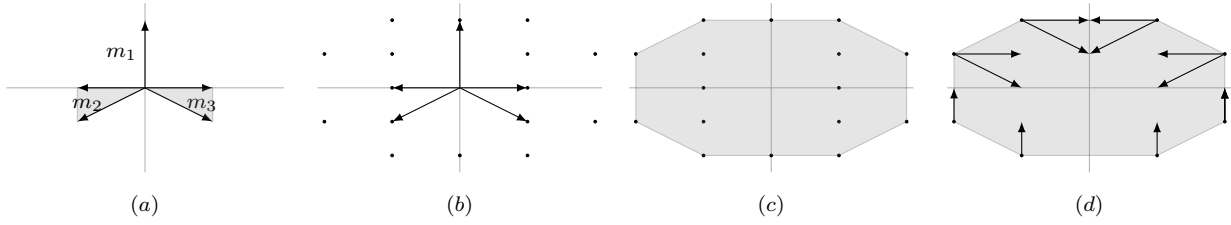
We will next show how to implement line~\ref{line:get-polytope} of
Algorithm~\ref{SchedulingAlg}. 
We give necessary and sufficient conditions for existence of $\Hh$-closed
polytopes in the following two lemmas. 
The first lemma shows that an $\Hh$-closed polytope exists if and only if for
any hyperplane (given by its normal vector $\vv$) there exists a mode $m$ such
that all its rates stay at one side of the hyperplane. 
\begin{lemma}\label{lem:sched-pushing}
  For a \bmms{} $\Hh$, a state $\px_0$ and $\gamma > 0$, there is a $\Hh$-closed
  polytope $P\subseteq \ball{\gamma}{\px_0}$ with $\px_0 \in P$ if and only if
  for every $\vv$ there is a mode $m$ such that $\vv\cdot \vr \ge 0$  for all $\vr
  \in m$.  
\end{lemma}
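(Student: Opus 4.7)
The plan is to handle the two implications separately: for ($\Rightarrow$) a short minimising-vertex argument, and for ($\Leftarrow$) an explicit zonotope construction. For ($\Rightarrow$), given an $\Hh$-closed $P \subseteq \ball{\gamma}{\px_0}$ with $\px_0 \in P$, fix any $\vv \in \Real^n$. The linear functional $\px \mapsto \vv \dotprod \px$ attains its minimum over the bounded polytope $P$ at some vertex $\pc \in \Vtx{P}$. By $\Hh$-closedness at $\pc$ there is a mode $m$ and $\tau > 0$ with $\pc + t \vr \in P$ for all $\vr \in \Rr(m)$ and $t \in [0,\tau]$; minimality of $\pc$ gives $\vv \dotprod (\pc + t \vr) \geq \vv \dotprod \pc$, so dividing by $t > 0$ yields $\vv \dotprod \vr \geq 0$ for every $\vr \in \Rr(m)$, as required.

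For ($\Leftarrow$), I would construct $P$ explicitly as a translated, scaled zonotope. Let $U = \bigcup_{m \in M} \Vtx{\Rr(m)}$, a finite set since each $\Rr(m)$ is a polytope, and for small $\epsilon > 0$ set $P = \px_0 + \sum_{\vu \in U} \set{\lambda \vu : \lambda \in [-\epsilon, \epsilon]}$ (a Minkowski sum). Then $P$ is a polytope, contains $\px_0$ (take all $\lambda = 0$), and sits inside $\ball{\gamma}{\px_0}$ once $\epsilon$ is small enough. The combinatorial structure of zonotopes yields that each vertex of $P$ has the form $\pc = \px_0 + \epsilon \sum_{\vu \in U} \tau_\vu \vu$ for a sign pattern $\tau \in \set{\pm 1}^U$ arising as $\tau_\vu = -\mathrm{sgn}(\phi \dotprod \vu)$ for any linear functional $\phi$ minimised on $P$ at $\pc$, with (inner) normal cone $N_\pc = \set{\phi : \tau_\vu (\phi \dotprod \vu) \leq 0 \text{ for all } \vu \in U}$ and tangent cone equal to its positive polar $T_\pc = \set{\vr : \phi \dotprod \vr \geq 0 \text{ for all } \phi \in N_\pc}$. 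Picking $\vv$ in the interior of $N_\pc$ yields strict inequalities $\tau_\vu (\vv \dotprod \vu) < 0$; the hypothesis supplies a mode $m$ with $\vv \dotprod \vu \geq 0$ for every $\vu \in \Vtx{\Rr(m)} \subseteq U$, and combining these forces $\tau_\vu = -1$ on $\Vtx{\Rr(m)}$. For such $\vu$ the constraint defining $N_\pc$ reads $\phi \dotprod \vu \geq 0$ for all $\phi \in N_\pc$, i.e.\ $\vu \in T_\pc$; convexity of $T_\pc$ upgrades this to $\Rr(m) = \mathrm{conv}(\Vtx{\Rr(m)}) \subseteq T_\pc$, so $\pc + t \vr \in P$ for all $\vr \in \Rr(m)$ and sufficiently small $t > 0$, establishing the $\Hh$-closed property.

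The main obstacle is deploying cleanly two standard but non-trivial facts from polytope theory: the zonotope/hyperplane-arrangement correspondence (vertices indexed by sign patterns on $U$, normal cones being the chambers of the central arrangement $\set{\vu^\perp : \vu \in U}$), and the polar duality $T_\pc = N_\pc^\circ$ between tangent and normal cone at a vertex. Both are classical (e.g.\ in Ziegler's \emph{Lectures on Polytopes}) and would either be cited or derived from the primitive fact that the $\phi$-minimum on a Minkowski sum is the sum of $\phi$-minima of the summands. The write-up must also handle a few degeneracies: a zero vector in $U$ contributes only a trivial Minkowski summand and can be dropped, and if $U$ fails to span $\Real^n$ then $P$ is lower-dimensional and one carries out the argument inside its affine span, systematically replacing the ambient interior by the relative interior throughout.
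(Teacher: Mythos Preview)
Your proposal is correct and close in spirit to the paper's proof, but the execution differs in a few instructive ways.

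For ($\Rightarrow$) you give a direct argument via the minimising vertex of the linear functional $\vv\dotprod\cdot$ on $P$, which is cleaner than what the paper does: the paper argues by contradiction, invoking the earlier Proposition~\ref{prop:dynsched} (an $\Hh$-closed polytope yields a winning scheduler) and then observing that a bad $\vv$ lets the environment force the run out of any bounded set. Your route is self-contained and avoids the detour through the scheduling game.

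For ($\Leftarrow$) both proofs build a zonotope on the extreme rates, but the paper uses the \emph{one-sided} zonotope $P=\{\px_0+D\sum_i p_i\vr_i : p_i\in[0,1]\}$ rather than your centrally symmetric one, and its verification is more elementary: given a vertex $\pc$ and a supporting hyperplane $\vv$ there, it picks the mode $m$ guaranteed by the hypothesis, takes an extreme rate $\vr_i\in m$ that allegedly exits $P$, observes that $\pc$ must then have $p_i=1$ so that $\pc-D\vr_i\in P$, and derives $\vv\dotprod\vr_i<0$, a contradiction. This uses only the supporting-hyperplane theorem, whereas you invoke the zonotope/arrangement correspondence and the polar duality $T_\pc=N_\pc^{\circ}$. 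Your argument is perfectly sound and arguably more structural, but the paper's has the virtue of needing no external polytope machinery; on the other hand, your symmetric construction makes the normal-cone description of vertices especially transparent, and your handling of the degenerate cases (zero generators, non-spanning $U$) is more explicit than the paper's.
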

\begin{proof}
  Let us fix a \bmms{} $\Hh = (M, n, \Rr)$.    
  The proof is in two parts. 
  For $\Rightarrow$, assume that the system is schedulable but there exists a
  vector $\vv$ such that for all modes $m\in M$ there is a rate $\vr_m\in m$
  where  ${\vv\cdot \vr_m < 0}$.  
  It implies that if the adversary fixes the rates $\vr_m$ whenever the scheduler
  chooses $m$, then the system moves in the direction of vector
  $-\vv$ (i.e. for all $d$ a state $\px$ will be reached such that ${\vv\cdot \px < d}$),
  and hence for any bounded safety set and non-Zeno strategy
  system will leave the safety set. 
  This contradicts with existence of $\Hh$-closed polytope implying winning
  scheduler strategy. 
  
  To prove the other direction, let  $R {=} \set{\vr_1,\ldots,\vr_N}$ be
  the set of rates occurring in modes of the extreme-rate \mmms{} of $\Hh$,
  i.e. 
  $R = \set{\Rr'(m) : (M,n,\Rr')\in\sem{\Ext(\Hh)},  m\in M} $. 
  We claim the following to be the $\Hh$-closed polytope:
  \begin{equation}
  P:= \{\px_0 + D\cdot \sum_{i=1}^N \vr_i \cdot p_i\mid p_i\in [0,1]\},\label{hpol}
  \end{equation}
  where $D=\gamma/\sum_{i=1}^N\norm{\vr_i}$. 
  Notice that $P$ is a convex polytope since it is a convex hull of points 
  ${\px_0 {+} D\cdot \sum_{i=1}^N \vr_i \cdot p_i}$ where $p_i\in \{0,1\}$.
  Also, due to our choice of $D$, $P\subseteq\ball{\gamma}{\px_0}$, and
  $\px_0\in P$. 
  For the sake of contradiction we assume that for every $\vv$ there is a mode
  $m$ such that all rates $\vr$ of $m$ satisfy $\vv\cdot \vr \ge 0$,
  but at least one corner
  $\pc$ of $P$ does not satisfy the defining condition of $\Hh$-closed polytope, i.e.
  for all modes $i$ there is a rate vector
  $\vr_i$ satisfying 
  \begin{eqnarray}
    \pc + t \cdot \vr_i \not \in  P \text{ for all $t > 0$} \label{eq0}
  \end{eqnarray}
  Let us fix such corner $\pc$.
  By the supporting hyperplane theorem there is a vector
  $\vv$ such that, for some $d$: 
  \begin{eqnarray}
    \vv\cdot \pc&=&d\label{eq1}\\
    \vv\cdot \px&>&d, \text{ for all $\px\in P \setminus
      \set{\pc}$}\label{eq2}
  \end{eqnarray}
  i.e. $\vv$ is supporting $P$ on $\pc$. 
  Let us fix some mode $m$ such that for all rates
  $\vr$ of $m$ we have $\vv\cdot \vr \ge 0$.
  Notice that this exists by the assumption. Let $\vr_i$ be a rate of $m$
  satisfying~(\ref{eq0}). 

  By the definition of $P$ the point $\pc$, a corner of $P$, is of the
  form $\px_0 + D\cdot \sum_{j=1}^N \vr_j \cdot p_j$ for some $p_j\in[0,1]$ where $1\le
  j\le N$ and $\vr_j \in R$. 
  We necessarily have $p_i=1$, because if $p_i=1-\delta$ for some
  $\delta>0$, then $\pc+D\cdot\varepsilon\cdot \vr_i \in P$ for any
  $\varepsilon\le \delta$ and that will contradict with (\ref{eq0}).
  Notice that for all $k \in [0, 1]$ the points 
  $\py_k = \px_0 + D\cdot\sum_{j=1}^N p^k_j\cdot \vr_j$, where $p^k_j = p_j$ if $j \neq i$ and
  $p^k_j = k$ otherwise, are all in $P$. 
  Also notice that point $\py_1 = \pc$ and for each $k \in [0, 1]$ we have
  that $\py_k = \py_0 + D\cdot k \cdot \vr_i$.
  In particular, $\pc = \py_1 = \py_0 + D\cdot \vr_i$.
  It follows that $\pc -D\cdot \vr_i = \py_0 \in P$.
  W.l.o.g. we assume $\vr_i \not = \vec{0}$. Hence, from (\ref{eq2}) we get 
  $
  \vv\cdot (\pc- D \cdot \vr_i) > d
  $.
  By rearranging we get $\vv\cdot \pc-D\cdot \vv\cdot \vr_i>d$, and
  because $\vv\cdot \pc=d$, we get $D\cdot \vv\cdot \vr_i <0$ which
  contradicts that $\vv\cdot\vr_i \ge 0$. 
\end{proof}

Figures~\ref{geom3}.(b)-(c) show how to construct $\Hh$-closed polytope
from~(\ref{hpol}) for the \bmms{} in Figure~\ref{geom3}.(a), while
Figure~\ref{geom3}.(d) shows that for every corner of the constructed polytope
there is a mode that keeps the system inside the polytope.

The following lemma finally gives an algorithmically checkable
characterization of existence of $\Hh$-closed polytope.
\begin{lemma}\label{lem:extreme}
  Let $\Hh = (M, n, \Rr)$ be a \bmms{}.
  We have that for every $\vv$ there is a mode $m$ such that $\vv\cdot \vr \ge
  0$  for all $\vr \in m$ 
  if and only if $\Hh$ is safe. 
\end{lemma}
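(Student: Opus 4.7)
The plan is to prove both implications by contraposition, leveraging the fact already underlying Theorem~\ref{thmcms}: a \cmms{} $(M,n,R)$ satisfies~(\ref{eq:cmms}) iff $\vzero$ lies in the convex hull $\mathrm{conv}\{R(m) : m \in M\}$. Since this convex hull is compact, strict hyperplane separation (equivalently, Farkas' lemma) tells us that $(M,n,R)$ is \emph{unsafe} if and only if there exists a vector $\vv$ such that $\vv \cdot R(m) > 0$ for every $m \in M$. I will use this reformulation throughout.

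For ``condition $\Rightarrow$ $\Hh$ safe,'' I will argue the contrapositive. Assume $\Hh$ is unsafe, so some instance $H=(M,n,R) \in \sem{\Ext(\Hh)}$ with $R(m) \in \Vtx{\Rr(m)}$ violates~(\ref{eq:cmms}). Pick a separating $\vv$ with $\vv \cdot R(m) > 0$ for every $m$, and set $\vv' := -\vv$. For every mode $m$ the particular rate $R(m)$ lies in $\Rr(m)$ and satisfies $\vv' \cdot R(m) < 0$, so no mode can have $\vv' \cdot \vr \geq 0$ for all $\vr \in \Rr(m)$. Hence the assumed condition fails at $\vv'$.

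For the converse, again by contraposition, suppose the condition fails: there is some $\vv$ such that for every mode $m$ there is a rate $\vr_m \in \Rr(m)$ with $\vv \cdot \vr_m < 0$. Because $\Rr(m)$ is a convex polytope and $\vr \mapsto \vv \cdot \vr$ is linear, its minimum over $\Rr(m)$ is attained at a vertex $\vr'_m \in \Vtx{\Rr(m)}$, and $\vv \cdot \vr'_m \leq \vv \cdot \vr_m < 0$. Assembling these vertices yields an instance $H \in \sem{\Ext(\Hh)}$ with $R(m) := \vr'_m$. For any $\vec{t} \in \Rplus^{|M|}$ with $\sum_m \vec{t}(m) = 1$, linearity gives $\vv \cdot \bigl(\sum_m \vec{t}(m)\,R(m)\bigr) = \sum_m \vec{t}(m)\,\vv \cdot R(m) < 0$, so no such convex combination equals $\vzero$. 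Thus $H$ fails~(\ref{eq:cmms}) and $\Hh$ is unsafe.

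The only nontrivial ingredient is the standard translation between infeasibility of the LP in~(\ref{eq:cmms}) and the existence of a separating $\vv$ (Farkas' lemma), so no real obstacle is anticipated. The remaining work is bookkeeping: in one direction the bad vector is witnessed by $-\vv$ coming from an unsafe extreme-rate instance, and in the other direction we push a candidate bad vector $\vv$ from arbitrary rates down to vertex rates using linearity of $\vv \cdot \vr$ on the polytope $\Rr(m)$.
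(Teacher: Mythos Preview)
Your proof is correct and follows essentially the same route as the paper: both directions are argued by contraposition, using hyperplane separation (Farkas) to translate between failure of~(\ref{eq:cmms}) and existence of a strictly separating vector, and using convexity of $\Rr(m)$ to push a bad rate to a vertex of $\Rr(m)$. The only cosmetic differences are that you make the sign flip $\vv' = -\vv$ and the vertex-minimum argument explicit, and you verify the resulting \cmms{} fails~(\ref{eq:cmms}) directly via linearity, whereas the paper invokes the geometric/schedulability interpretation; these are presentational, not substantive.
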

\begin{proof}
  In one direction, let us assume that $(M,n,R)\in\sem{\Ext(\Hh)}$
  is not safe, and let $Q=\{R(m)  \mid m\in M\}$.
  Then $\vec 0$ is not a convex combination of points in $Q$, and
  so by supporting hyperplane theorem applied to $\vec 0$ and the convex hull of 
  $Q$ there is $\vv$ and $d>0$ such that $\vv\cdot R(m) \ge d$ for all 
  $m\in M$. 
  Since $R(m)\in \Rr(m)$, this direction of the proof is finished.
  In the other direction,
  let $\vv$ be such that there is $\vr\in \Rr(m)$ for all $m\in M$ such that
  $\vv\cdot \vr<0$. 
  Then by convexity of $\Rr(m)$ there is $\vr_m\in \Vtx{\Rr(m)}$ with the same
  properties, and we can create  a \cmms{} $(M,n,R)\in\sem{\Ext(\Hh)}$ by
  putting $R(m)=\vr_m$. 
  This \cmms{} is not safe, because for any strategy, for a sufficiently large
  time bound a point $\px$ will be reached such that $(-\vv)\cdot \px$ is
  arbitrarily large, and hence any convex polytope will be left eventually. 
\end{proof}
\begin{algorithm}[t]
  \KwIn{\bmms{} $\Hh$, $\px\in \Real^n$ and $\gamma > 0$}
  \KwOut{$\Hh$-closed polytope $P$ contained in $\ball{\gamma}{\px}$ s.t. $\px\in P$, No if there is no $\Hh$-closed polytope.} 
  \ForEach{\cmms{} $H = (M, n, R)$ of $\sem{\Ext(\Hh)}$}{
    Check if there is a satisfying assignment for:
    \begin{eqnarray}
      \textstyle\sum\nolimits_{m \in M} R(m) \cdot t_m &=& \vec{0} \nonumber\\
      \textstyle\sum\nolimits_{m \in M} t_m &=& 1  \label{eqn1}\\
      t_m &\geq& 0 \text{ for all $m \in M$.}\nonumber
    \end{eqnarray}
    \lIf{no satisfying assignment exists}{\Return NO}
  }
  $R:=\{\vr_1,\vr_2,...,\vr_N\}$ be the set of rate vectors of $\sem{\Ext(\Hh)}$\;
  \Return the polytope given as convex hull of the points $\px + \frac{\gamma}{\sum_{i=1}^N \norm{\vr_i}} \cdot \sum_{i=1}^N \cdot p_i\vr_i$
  where $p_i \in \{0,1\}$\;   
  \caption{Schedulability Problem for Interior Starting States.}
  \label{SafePolytope}
\end{algorithm}
Combining Proposition~\ref{prop:dynsched} with Lemmas~\ref{lem:sched-pushing} and
\ref{lem:extreme} we get the following main result. 
\begin{theorem}
  \label{prop:helper-polytope}
  For every \bmms{} $\Hh$ and the starting state in the interior of a convex and
  bounded safety set we have that scheduler has a winning strategy if and only
  if $\Hh$ is safe. %
\end{theorem}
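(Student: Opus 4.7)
The plan is to chain the three preceding results---Lemma~\ref{lem:extreme}, Lemma~\ref{lem:sched-pushing}, and Proposition~\ref{prop:dynsched}---for the $(\Leftarrow)$ direction, and to use Theorem~\ref{thmcms} for the contrapositive of the $(\Rightarrow)$ direction. The interior hypothesis on $\px_0$ will be used in exactly one place, namely to guarantee that a small enough ball around $\px_0$ fits inside $S$.

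For the $(\Leftarrow)$ direction, suppose $\Hh$ is safe. Since $\px_0$ lies in $\interior(S)$, there exists $\gamma > 0$ with $\ball{\gamma}{\px_0} \subseteq S$. By Lemma~\ref{lem:extreme} the safety of $\Hh$ is equivalent to the geometric condition that for every vector $\vv$ some mode $m$ satisfies $\vv\cdot \vr \ge 0$ for all $\vr\in m$. Lemma~\ref{lem:sched-pushing} then produces an $\Hh$-closed polytope $P \subseteq \ball{\gamma}{\px_0} \subseteq S$ with $\px_0\in P$ (concretely, the one given by formula~(\ref{hpol})). Proposition~\ref{prop:dynsched} now applies: Algorithm~\ref{SchedulingAlg} run on input $(\Hh,\px_0)$ outputs a dynamic strategy that is non-Zeno and keeps every run inside $P$, hence inside $S$; this is a winning strategy for the scheduler.

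For the $(\Rightarrow)$ direction I will prove the contrapositive: if $\Hh$ is not safe, then the environment wins. Unfolding the definition, there is an instance $H = (M,n,R)\in\sem{\Ext(\Hh)}$ of the extreme-rate \mmms{} that fails~(\ref{eq:cmms}). Define the positional environment strategy $\pi$ that, whenever the scheduler selects mode $m$, responds with the fixed vector $R(m)\in\Vtx{\Rr(m)}\subseteq \Rr(m)$. Against $\pi$ every run of $\Hh$ is identical to a run of the \cmms{} $H$ from $\px_0$, so by Theorem~\ref{thmcms} (applied to $H$, which is unsafe) no strategy of the scheduler can keep the run both $S$-safe and non-Zeno. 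Thus no scheduler strategy wins against $\pi$, and the scheduler has no winning strategy in $\Hh$.

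I do not expect any real obstacle beyond bookkeeping: the heavy lifting is already carried by the three earlier results. The only subtlety is making explicit that (i) the interior hypothesis is precisely what justifies the choice of $\gamma$ needed to invoke Lemma~\ref{lem:sched-pushing}, and (ii) when one passes from the \bmms{} to an unsafe extreme-rate \cmms{} in the converse direction, one must argue that the environment strategy $\pi$ used above is a \emph{legal} strategy of the \bmms{}---which is immediate because $R(m)\in\Vtx{\Rr(m)}\subseteq\Rr(m)$.
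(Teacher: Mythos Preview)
Your proposal is correct and matches the paper's approach, which simply states that the theorem follows by combining Proposition~\ref{prop:dynsched} with Lemmas~\ref{lem:sched-pushing} and~\ref{lem:extreme}. For the $(\Rightarrow)$ direction you invoke Theorem~\ref{thmcms} directly on the unsafe \cmms{} instance, whereas the paper implicitly relies on the argument embedded in the proof of the $(\Rightarrow)$ part of Lemma~\ref{lem:sched-pushing} (environment fixes rates $\vr_m$ with $\vv\cdot\vr_m<0$ and forces unbounded drift); the two arguments are equivalent, and yours is arguably the cleaner packaging.
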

Theorem~\ref{prop:helper-polytope} allows us to devise
Algorithm~\ref{SafePolytope} and at the same time give its correctness. 
The reader may have noticed that Theorem~\ref{prop:helper-polytope} bears a striking
resemblance to Theorem~\ref{thmcms} for \cmms{}, since the former boils down to
checking safety of exponentially many \cmms{} instances. 
Note, however, that the proof here is much more delicate.
While in the case of \cmms{} satisfiability of~(\ref{eq:cmms}) gives immediately
a periodic winning strategy, for \bmms{} this is not the case:
even when every instance in $\sem{\Ext(\Hh)}$ is safe, we cannot immediately see
which modes should be used by the winning strategy; this requires the introduction of
$\Hh$-closed polytopes.

\subsection{General Case}
\label{sec:sched-general}
In this section we present Algorithm~\ref{SchedulabilityGamesGeneral} that
analyses schedulability of arbitrary starting states in $S$.
Notice that a starting state on the boundary of the safety polytope may lie on
various faces (planes, edges etc.) of different dimensions. 
The scheduler may have a winning strategy using modes that let the system stay
on some lower dimension face, or there may exists a winning strategy where scheduler
first reaches a face of higher dimension where it may have a winning
strategy.
Before we describe steps of our algorithm, we need to formalize a
notion of (open) faces of a convex polytope, a concept critical in
Algorithm~\ref{SchedulabilityGamesGeneral}.
\begin{algorithm}[t]
  \KwIn{\bmms{} $\Hh$, a safety set $S$ given by inequalities $A\vec{x}\le
    \vec{b}$, and a starting state $\px_0$.} 
  \KwOut{Yes, if the scheduler wins, No otherwise.}
  Compute the sequence  $\Ii = \seq{I_1, I_2, \ldots, I_\ell}$\;
  $\sched = \emptyset$,
  $\nosched = \emptyset$\;
  \ForEach{$I$ in $\Ii$}{ 
    \If{$I'\subseteq I$ and $I' \in \nosched$}{$\nosched := \nosched \cup \{I\}$\;}

    \If{$\exists m \in M$ with only internal rates}
    {$\sched := \sched \cup \{(I, \bot)\}$\;}%
    \Else{
      Construct $\Hh_{I}$\;
      \If{$\Hh_{I}$ is safe and $P_I$ is $\Hh_I$-closed polytope}{$\sched :=
        \sched \cup \{(I, P_I)\} $ ;}
      \lElse{$\nosched {:=} \nosched {\cup} \{I\}$\;}
    }
  }
  \lIf{$\exists I \in \sched$ and $\px_0 \models \facet{S}{I}$}
  {\Return Yes\;} 
  \lElse{\Return No\;}
  \caption{Schedulability Problem For Arbitrary Starting State}
  \label{SchedulabilityGamesGeneral}
\end{algorithm}

Let $Ax\le b$ be the linear constraints specifying a convex polytope $S$.
We specify a face of $S$ by a set $I\subseteq \{1,\ldots,\rows{A}\}$.
We write $\px\models \facet{S}{I}$, and we say that $\px$ satisfies
$\facet{S}{I}$, if and only if  $A_{1,j}x(1) + \cdots A_{n,j}x(n) = b_j$ for all
$j\in I$, and $A_{1,j}x(1) + \cdots A_{n,j}x(n) < b_j$ for all $j\not\in I$,
i.e. exactly the inequalities indexed by numbers from $I$ are satisfied tightly. 
Note that every point of $S$ satisfies $\facet{S}{I}$ for exactly one $I$.
Although there are potentially uncountably many states in every face of $S$
the following Lemma implies that it is sufficient to analyze only one state
in every face.
\begin{lemma}\label{lem:facets}
  For a \bmms{}, a convex polytope $S$, and for all faces $I$
  of $S$, either none or all states satisfying $\facet{S}{I}$ are schedulable.
  Moreover, if $I'{\subseteq} I$ and no point satisfying $\facet{S}{I'}$ is
  schedulable, then no point satisfying $\facet{S}{I}$ is schedulable. 
\end{lemma}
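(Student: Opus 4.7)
\medskip
\noindent\textbf{Proof plan.} My plan is to prove both parts of the lemma by a single rescale-and-translate construction: given a winning strategy from one state, shrink it to keep its trajectories in an arbitrarily small ball and then translate it to a target state, checking the safety of each face-defining inequality by hand. The essential auxiliary fact I would establish first is that any winning strategy $\sigma$ from $\px$ for a bounded convex safety set $S$ can be rescaled by any factor $\alpha \in (0,1]$ to a winning strategy $\sigma_\alpha$ from $\px$ whose reachable states all lie in $\ball{\alpha\cdot\mathrm{diam}(S)}{\px}$. Concretely, given a history $\rho$ arising under $\sigma_\alpha$ with environment rate choices $\vr_1,\vr_2,\ldots$, let $\rho^*$ be the ``upscaled'' history that uses the same modes and the same rates but with every time delay multiplied by $1/\alpha$; then $\sigma_\alpha$ plays on $\rho$ whatever mode $\sigma$ plays on $\rho^*$, but for time scaled by $\alpha$. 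Since $\rho^*$ is a genuine history against $\sigma$, all its states lie in $S$; by convexity of $S$, the corresponding states of $\rho$ are $\alpha$-contractions of $\rho^*$'s states toward $\px$, so they also lie in $S$ and within $\alpha\cdot\mathrm{diam}(S)$ of $\px$. Non-Zenoness of $\sigma_\alpha$ is preserved because the total time along $\rho$ is exactly $\alpha$ times that along $\rho^*$, which diverges.

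For the first part of the lemma, let $\px,\py \in \facet{S}{I}$ with $\px$ schedulable via $\sigma$. I would define $\sigma'$ from $\py$ by translating the rescaled strategy: $\sigma'(\rho) := \sigma_\alpha(\rho - (\py-\px))$, where subtracting $(\py-\px)$ shifts every state in $\rho$. Trajectories from $\py$ under $\sigma'$ are then exact $(\py-\px)$-translates of trajectories from $\px$ under $\sigma_\alpha$. Safety is verified against $A\px \leq \vb$ inequality by inequality: for $i \in I$ we have $A_i\px = A_i\py = b_i$, so $A_i(\py-\px) = 0$ and the face constraints are preserved verbatim; for $i \notin I$ both $\px$ and $\py$ satisfy $A_i x < b_i$, and since every untranslated trajectory state stays within $\alpha\cdot\mathrm{diam}(S)$ of $\px$, choosing $\alpha$ smaller than $\min_{i\notin I}(b_i - A_i\py)/(\|A_i\|\cdot\mathrm{diam}(S))$ keeps the translated trajectory strictly inside $A_i x \leq b_i$. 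Non-Zenoness is inherited from $\sigma_\alpha$.

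For the second part I would argue the contrapositive: if some $\px \in \facet{S}{I}$ is schedulable, then every $\py \in \facet{S}{I'}$ is schedulable. By the first part it suffices to exhibit one schedulable $\py \in \facet{S}{I'}$, and I do so by the very same construction. The inequality split now reads: for $i \in I'$ both equalities hold and $A_i(\py-\px)=0$ as before; for $i \in I\setminus I'$ we have $A_i\px = b_i$ but $A_i\py < b_i$, so $A_i(\py-\px) < 0$ and the translated trajectory becomes \emph{strictly} safer than the original; for $i \notin I$ the strict slack at $\py$ absorbs a small enough $\alpha$ exactly as in the first part.

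The main subtlety I expect is to define $\sigma_\alpha$ carefully enough that it responds correctly to \emph{arbitrary} environment behaviour (not just to the environment sequences that would have arisen under $\sigma$) and to verify that its reachable trajectories really are the $\alpha$-contractions of $\sigma$'s trajectories toward $\px$. Once this bookkeeping is in place, the remaining safety checks reduce to elementary linear-algebraic inequalities against the finitely many rows of $A$, using only convexity and boundedness of $S$.
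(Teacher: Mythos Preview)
Your proposal is correct and follows essentially the same rescale-and-translate argument as the paper: shrink a winning strategy from $\px$ by a suitable factor and shift it to $\py$, then verify safety. The paper combines the scaling and translation into a single strategy $\sigma_d$ and is quite terse about why the resulting strategy is safe, whereas you separate the two steps and check each defining inequality of $S$ explicitly; this makes your version more detailed but not materially different.
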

Let $\Ii= \seq{I_1, I_2, \ldots}$ be the sequence of all faces such that
$\facet{S}{I_i}$ is satisfied by some state, ordered such that if 
$I_i \subseteq I_j$, then $i\le j$. 
We call a mode $m$ {\em unusable for $I$} if there is $\px\models
\facet{S}{I}$ and  $\vr\in \Rr(m)$ such that $\px+\vr\cdot \delta \not\in S$ for
all $\delta > 0$. 
The rate $\vr$ satisfying this condition is called {\em external}. 
A rate $\vr$ is called {\em internal} if for any $\px$ such that $\px\models
\facet{S}{I}$ there is $\delta>0$ and $j$ such that $I_j\subseteq I$ and
$\px+\vr\cdot \varepsilon\models \facet{S}{I_j}$ for all $0<\varepsilon \le
\delta$. 
For a \bmms{} $\Hh$ and face $I$ we define a \bmms{} 
$\Hh_{I} = (M', n, \Rr')$ where $M'$ contains all modes of $M$ which are not
unusable for $I$,  and $\Rr'(m)$ is the set of non-internal rates of $\Rr(m)$. 

\begin{theorem}
  For every \bmms{} $\Hh$, a convex polytope safety set $S$, and a starting
  state $\px_0 \in S$, Algorithm~\ref{SchedulabilityGamesGeneral} decides
  schedulability problem for $\Hh$.
  Moreover, one can construct a dynamic winning strategy using the set $\sched$.
\end{theorem}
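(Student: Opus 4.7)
The plan is to establish correctness of Algorithm~\ref{SchedulabilityGamesGeneral} by induction along the sequence $\Ii = \seq{I_1,\ldots,I_\ell}$, which is ordered so that $I_i\subseteq I_j$ implies $i \le j$. The induction hypothesis is that at the moment $I$ is processed, every previously examined face lies in $\sched$ iff every (equivalently, by Lemma~\ref{lem:facets}, some) point satisfying $\facet{S}{I'}$ is schedulable, and lies in $\nosched$ otherwise. Once the hypothesis is proved for $I_\ell$, the final check of the algorithm returns the correct answer for $\px_0$ after another invocation of Lemma~\ref{lem:facets} to lift the representative analysis to $\px_0$ itself.

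For the inductive step on face $I$ I would analyze the three branches separately. If some $I' \subseteq I$ is in $\nosched$, the moreover clause of Lemma~\ref{lem:facets} propagates unschedulability to $\facet{S}{I}$ directly. If some mode $m$ has only internal rates, firing $m$ from any $\px \models \facet{S}{I}$ lands on a strictly smaller face $I_j \subsetneq I$ in positive time; since $I$ escaped the previous branch, $I_j$ must lie in $\sched$, and appending its winning strategy gives a winning scheduler strategy from $\px$. Otherwise the restricted system $\Hh_I$ is built by dropping modes with an external rate and removing the internal rates; $\facet{S}{I}$ plays the role of the relative interior for $\Hh_I$, so Theorem~\ref{prop:helper-polytope} applied to $\Hh_I$ classifies $\facet{S}{I}$ either as schedulable---with witness polytope $P_I$---or as unschedulable. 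For the converse direction on $\nosched$, unsafety of some extreme-rate instance of $\Hh_I$ combined with Lemma~\ref{lem:extreme} yields a vector $\vv$ along which every usable mode makes strictly positive progress; the environment fixes these rates and forces the state out of $\facet{S}{I}$, and since all genuinely face-preserving modes have been discarded the exit happens out of $S$ itself.

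For the moreover part the dynamic winning strategy is positional and defined pointwise. Given the current state $\px$, determine the unique $I$ with $\px\models\facet{S}{I}$ and dispatch on $\sched$: if $(I,P_I)\in\sched$, execute one step of Algorithm~\ref{SchedulingAlg} on $\Hh_I$ and on a copy of $P_I$ recentred and scaled around $\px$ so that it fits inside $\facet{S}{I}$; if $(I,\bot)\in\sched$, fire the witness mode of the internal-rate branch for a brief time, triggering a transition to a strictly smaller face whose strategy is applied in the next round. By the inductive argument the successor state again lies in some face of $\sched$, so the rule is well defined.

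The main obstacle will be non-Zenoness of the composite run. Within a single face, a positive lower bound on the time increment is inherited from Proposition~\ref{prop:dynsched} applied to $\Hh_I$ and $P_I$. Across faces the delicate point is that each $\bot$-transition strictly decreases the number of tight inequalities, so at most $\ell$ consecutive such transitions occur before the strategy settles on a face endowed with a genuine $\Hh_I$-closed polytope; from that point on the strategy stays on one face and time diverges by Section~\ref{sec:interior}. Two secondary subtleties must be checked: first, that the polytope $P_I$ can always be recomputed around the current state so that $\px\in P_I\subseteq \facet{S}{I}$, which follows from the construction used in Lemma~\ref{lem:sched-pushing}; second, that a state landing on the relative boundary of a face is immediately re-dispatched to the actual (larger) face that it inhabits, which is automatic because the current face is read afresh at every step.
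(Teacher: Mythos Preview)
Your inductive structure along the face sequence and the three-branch case split match the paper's sketch. The gap is in the non-Zenoness argument for the assembled strategy, and it has two related parts.

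First, the claim that once a face with $(I, P_I)\in\sched$ is reached ``the strategy stays on one face'' is false: the scheduler selects modes $m\in M'$ by consulting $\Hh_I$, but the environment still plays in $\Hh$ and may choose an internal rate $\vr\in\Rr(m)\setminus\Rr'(m)$, driving the state off $\facet{S}{I}$ into a strictly higher-dimensional face $I'\subsetneq I$. Second, and more seriously, re-centring and re-scaling $P_I$ around the current state at every step does not inherit a uniform time lower bound from Proposition~\ref{prop:dynsched}. That bound, $\tfrac{1}{|\Vtx{P}|}\min_{\pc}\Delta_{\pc}$, is for a \emph{fixed} $P$; if you rebuild $P$ each round with scale proportional to the current distance $\gamma$ to the relative boundary of $\facet{S}{I}$, nothing prevents $\gamma$---and hence the step length---from decaying geometrically along the run, since one step can carry the state anywhere in $\ball{\gamma}{\px}$. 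The paper avoids both issues by fixing the polytope once upon entry to a face: it picks $d>0$ so that $\ball{d}{\px}\cap S$ meets only faces $I'\subseteq I$, builds a single $\Hh_I$-closed $P\subseteq\ball{d}{\px}$, and runs Algorithm~\ref{SchedulingAlg} inside that fixed $P$ (uniform step bound) until the environment plays an internal rate and forces a face change; since such changes strictly shrink $I$, only finitely many occur. The paper then remarks that flattening this history-dependent strategy into a positional one is ``a mere technicality''---but your particular positional recipe of re-centring and re-scaling at every step is not the construction that works.
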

\begin{proof}{(Sketch.)}
Let $\seq{I_1, I_2, \ldots}$ be all sets such that $\facet{S}{I_i}$ is satisfied by
some state, ordered such that if $I_i \subseteq I_j$, then $i\le j$.
Algorithm~\ref{SchedulabilityGamesGeneral} analyzes the sets $I_i$, determining
whether the points satisfying $\facet{S}{I_i}$ are schedulable (in which case we
call $I_i$ schedulable), or not. 
Let us assume that $I$ is the first element of the sequence $\seq{I_1,I_2\ldots}$
which has not been analyzed yet. 
If there is $I'$ such that $I'\subseteq I$ and $I'$ is already marked as not
schedulable, then by Lemma~\ref{lem:facets} one can immediately mark $I$ as
non-schedulable. 
If all modes are unusable, then no point $\px$ such that $\facet{S}{I}$ is
schedulable. 
Notice that if there exists an internal rate to face $I_j$ then it must
necessarily be the case that $I_j$ is schedulable.
If there is a mode $m$ which only has internal rates, there
is a winning strategy $\sigma$ for the scheduler which starts by picking $m$
and a sufficiently small time interval $t$. 
This will make sure that after one step a point is reached which is already
known to be schedulable and scheduler has a winning strategy. 

If none of the previous cases match, the algorithm creates a \bmms{} $\Hh_{I}$
and applies Theorem~\ref{prop:helper-polytope} to the system $\Hh_{I}$.
If there is a $\Hh_I$-closed  polyhedron $P$, we know that $I$ is schedulable
and give a winning scheduler's strategy $\sigma_\px$ for any point $\px\models
\facet{S}{I}$ as follows.  
Let $d>0$ be a number such that for any $\py\models I_j$ where $j>i$ we have
$\norm{\px,\py} > d$, i.e. $d$ is chosen so that all points of $S$ contained in
$\ball{d}{\py}$ satisfy $\facet{S}{I'}$ for $I' \subseteq I$ (this follows from
the properties of the sequence $I_1,I_2,\dots$ and because $S$ is a convex
polytope). 
The strategy $\sigma_\px$ works as follows.
If all points in the history satisfy $\facet{S}{I}$, $\sigma_\px$ mimics
$\sigma_{\Hh_{I},\px,d}$.  
Otherwise, once a point $\py \not\models \facet{S}{I}$ is reached, the strategy
$\sigma_\px$ starts mimicking $\sigma_\py$. 
Note that the strategy $\sigma_\py$ is indeed defined by our choice of $d$ and
polytopes stored in $\sched$ set. %
Although the strategy we obtain in this way may potentially be non-positional,
it is a mere technicality to turn it into a positional one.

If $\Hh_{I}$ is not schedulable for any set and any point, then it is easy to
see that for no point satisfying $\facet{S}{I}$ there is a schedulable strategy. 
Indeed, for any strategy $\sigma$, as long as $\sigma$ picks the modes from
$M'$, the environment can play a counter-strategy showing that $\Hh_I$ is not
schedulable. 
When any mode from $m\in M\setminus M'$ is used by $\sigma$, we have that $m$ is
unusable and so the environment can pick a rate witnessing $m$'s unusability:
this will ensure reaching a point outside $S$.
Hence, we can mark $I$ as unschedulable. 
\end{proof}

\section{Complexity}
\label{sec:complexity}
In this section we analyze complexity of the schedulability problem for
\bmms{}. 
We begin by showing that in general it is co-NP-complete, however it can be
solved in polynomial time if the system has only two variables.

\subsection{General Case}
\begin{proposition}
  The schedulability problem for \bmms{} and convex polytope safety sets is in co-NP.
\end{proposition}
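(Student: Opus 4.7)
I would show that non-schedulability lies in NP, which immediately gives the co-NP upper bound for schedulability. The backbone of the argument is a structural characterization, extractable from Algorithm~\ref{SchedulabilityGamesGeneral} combined with Lemma~\ref{lem:facets}: a starting state $\px_0$ lying on face $I_0$ of $S$ is non-schedulable if and only if there exists a \emph{witness face} $I^* \subseteq I_0$ (as index sets of tight constraints) satisfying both (i) no mode of $\Hh$ has only internal rates at $I^*$, and (ii) the induced \bmms{} $\Hh_{I^*}$ is not safe. The ``if'' direction is immediate from Lemma~\ref{lem:facets}, since non-schedulability of $I^*$ cascades to every $I \supseteq I^*$. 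The ``only if'' direction is obtained by tracing the decisions of Algorithm~\ref{SchedulabilityGamesGeneral} backward from $I_0$ through the inheritance rule, descending in the sequence $\Ii$ until reaching the first face marked non-schedulable on its own merits (via the ``construct $\Hh_I$'' branch); that face satisfies both (i) and (ii).

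Given this characterization, the polynomial-size certificate I would supply consists of: the subset $I^* \subseteq \{1,\ldots,\rows{A}\}$; for each mode $m \in M$, an extreme rate $\vr_m \in \Vtx{\Rr(m)}$ that is not internal at $I^*$ (witnessing (i)); for each mode $m$ that is not unusable at $I^*$, an extreme rate $R(m)$ of the restricted rate polytope $\Rr(m)\cap\{\vr : A_j \vr \leq 0 \text{ for all } j \in I^*\}$; and a separating vector $\vv \in \mathbb{Q}^n$ satisfying $\vv \cdot R(m) > 0$ for every such $m$ (jointly witnessing (ii) via Lemma~\ref{lem:extreme} and the supporting hyperplane theorem). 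Existence of the non-internal vertex whenever (i) holds follows from convexity of the internal-rate set at $I^*$: if every vertex of $\Rr(m)$ were internal, then $\Rr(m)$ itself, being their convex hull, would consist only of internal rates. Standard LP bit-size bounds give polynomial-size representations to all vertex rates and to $\vv$.

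The verifier performs the following polynomial-time checks: LP feasibility to confirm $\facet{S}{I^*}$ is non-empty and $I^* \subseteq I_0$; for each mode, an LP that determines its unusability status at $I^*$, using the characterization that $m$ is unusable iff some $\vr \in \Rr(m)$ satisfies $A_j \vr > 0$ for some $j \in I^*$; linear-algebra and LP checks that each certified non-internal rate is a genuine vertex of $\Rr(m)$ and fails the internal-cone condition; and arithmetic verification that $\vv \cdot R(m) > 0$ for every non-unusable $m$. All checks run in polynomial time, establishing non-schedulability in NP and therefore schedulability in co-NP.

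The main obstacle I anticipate is establishing the structural equivalence in full detail---especially the ``only if'' direction, where one must verify that unwinding the inheritance recursion of Algorithm~\ref{SchedulabilityGamesGeneral} always terminates at a single face $I^*$ where both (i) and (ii) hold directly, including the boundary corner case where $I^*$ is the interior. A secondary care point is that the rate polytopes of $\Hh_{I^*}$ arise implicitly as intersections $\Rr(m)\cap\{\vr : A_j \vr \leq 0,\ j \in I^*\}$, so one must confirm that their vertices still admit polynomial-size descriptions; this is routine from LP vertex theory.
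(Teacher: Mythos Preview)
Your approach is essentially the paper's: both certificates pick a face $I\subseteq I_0$ (the paper calls it a ``minimal non-schedulable'' $I$, you reach it by unwinding the inheritance rule), a set of unusable modes with external witnessing rates, and a separating vector $\vv$ together with rates for the remaining modes. Your write-up is more explicit than the paper's sketch about bit-size bounds and LP-based verification, which is a plus.

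One slip to fix: you describe the rate polytopes of $\Hh_{I^*}$ as the intersections $\Rr(m)\cap\{\vr : A_j\vr\le 0,\ j\in I^*\}$. That set is the \emph{non-external} rates, whereas by definition $\Rr'(m)$ in $\Hh_{I^*}$ consists of the \emph{non-internal} rates. For a non-unusable mode every rate is already non-external, so your intersection is just $\Rr(m)$, which is too large. The paper's falsifier is explicit that for $m\notin\External$ the chosen rate must be ``neither external nor internal''; without that restriction the soundness direction of your verifier breaks, since a vertex $R(m)\in\Vtx{\Rr(m)}$ with $\vv\cdot R(m)>0$ could be internal and hence steer the system into an already-schedulable subface. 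The repair is simple: for non-unusable modes take the rates from $\Rr'(m)$ (i.e.\ additionally require $A_j\vr=0$ for all $j\in I^*$ at the minimal face), which is still a polytope with polynomial-size vertices.
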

\begin{proof}[(Sketch)]
  We show that when the answer to the problem of schedulability of a point $\px$
  is No, there is a falsifier that consists of two components:
  \begin{itemize}
  \item 
    a set $I{\subseteq} \{1,\ldots, \rows{A}\}$ s.t. $\px \models
    \facet{S}{I'}$ for $I'\supseteq I$, and
  \item 
    a rate combination $(\vr_m)_{m\in M}$ such that there is a set of modes
    $\External\subseteq M$ where every $\vr_m$ for $m\in \External$ is external
    for $I$; and the rates $\vr_m$ for $m\not\in \External$ are neither external,
    nor internal, and there is a vector $\vv$ such that $\vv\cdot \vr_m > 0$
    for all $m\not\in \External$. 
  \end{itemize}
  
  Let us first show that the existence of this falsifier implies that the
  answer to the problem is No. 
  Indeed, as long as a strategy of a scheduler keeps using modes
  $m\not\in\External$, the environment can pick the rates $\vr_m$, and a point
  outside of $S$ will be reached under any non-Zeno strategy, because $S$ is
  bounded.  
  If the strategy of a scheduler picks any mode $m\in\External$, the environment
  can win immediately by picking the external rate $\vr_m$ and getting outside
  of $S$. 
  
  On the other hand, let us suppose that the answer to the problem is No, and
  let $I'$ be such that $\px\models\facet{S}{I'}$. 
  Then consider any \emph{minimal non-schedulable} $I\subseteq I'$.
  We put to $\External$ all modes which are unusable, and for every such mode,
  we pick a rate that witnesses it. 
  Further, there is not any mode with only internal modes and the \bmms{}
  $\Hh_I$ must be non-schedulable (otherwise $I$ would be schedulable, or would
  not be minimal non-schedulable). 
  By Proposition~\ref{prop:helper-polytope} there is an unsafe instance $H =
  (M', n, R) \in \sem{\Ext(\Hh_I)}$. 
  Since $M'$ contains all the modes whose indices are not in $\External$, we can
  pick the rate from this unsafe instance and we are finished. 
\end{proof}
\begin{figure}
  \centering
  {\small
  \begin{tikzpicture}[node distance=2cm,scale=0.45]
    \tikzstyle{lines}=[draw=black!30,rounded corners]
    \tikzstyle{vectors}=[-latex, rounded corners]
    \tikzstyle{cvectors}=[-latex, red, rounded corners]
    \tikzstyle{bvectors}=[-latex, blue, rounded corners]
    \tikzstyle{rvectors}=[-latex,very thick, rounded corners]

    \draw[lines,->] (-2.7,0)--(2.7,0);
    \draw[lines,->] (0, -2.7)--(0,2.7);
    \draw[lines,->] (-2.5, -2.5)--(2.5,2.5);
    
    \draw[vectors] (0, 0) --node[right]{$$} (1, 0) node[below, right]{$\vp_1$};
    \draw[vectors] (0, 0) --node[right]{$$} (0, 1)node[above]{$\vp_2$};
    \draw[vectors] (0, 0) --node[right]{$$} (-0.7, -0.7)node[left]{$\vp_3$};
    \draw[cvectors] (0, 0) --node[right]{$$} (-1, 0)node[left,above]{$\vn_1$};
    \draw[cvectors] (0, 0) --node[right]{$$} (0 ,-1)node[below]{$\vn_2$};
    \draw[cvectors] (0, 0) --node[right]{$$} (0.7, 0.7)node[ right]{$\vn_3$};

    \draw[lines,->] (7.3,0)--(12.7,0);
    \draw[lines,->] (10, -2.7)--(10,2.7);
    \draw[lines,->] (7.5, -2.5)--(12.5,2.5);
    \draw[vectors] (10, 0) --node[right]{$$} (11, 0) node[below]{$\vp_1$};
    \draw[bvectors] (10, 0) --node[right]{$$} (11, 0) node[above]{$\vp_1$};
    \draw[bvectors] (10, 0) --node[right]{$$} (10, 1)node[above]{$\vp_2$};
    \draw[bvectors] (10, 0) --node[right]{$$} (9.3, -0.7)node[left]{$\vp_3$};
    \draw[cvectors] (10, 0) --node[right]{$$} (9, 0)node[left,above]{$\vn_1$};

     \end{tikzpicture}
}
  \caption{An example from proof of Proposition~\ref{prop:mmms-hard}}
  \label{reduction}
\end{figure}
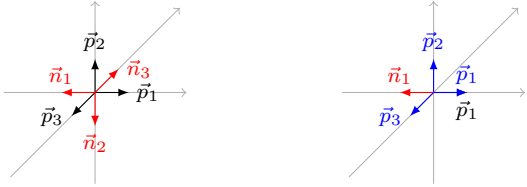
\vspace*{-1em}
\begin{proposition}[co-NP hardness]\label{prop:mmms-hard}
  The schedulability problem for \mmms{} is co-NP hard. 
\end{proposition}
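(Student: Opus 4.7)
The plan is to reduce 3-SAT to the \emph{non}-schedulability problem for \mmms{} in polynomial time, which together with the co-NP upper bound gives co-NP completeness. Fix a 3-CNF formula $\phi = C_1 \wedge \cdots \wedge C_k$ over variables $x_1,\ldots,x_n$. I will construct an \mmms{} $\Hh$ with $n$ continuous variables and $k$ modes, together with the bounded convex safety set $S = [-1,1]^n$ and starting state $\px_0 = \vzero$, which lies in $\interior(S)$.

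For each $i$, let $e_i$ denote the $i$-th standard basis vector of $\Real^n$. Assign to the positive literal $x_i$ the rate $\vp_i = -e_i$ and to the negative literal $\neg x_i$ the rate $\vn_i = e_i$. For every clause $C_j = \ell_{j,1} \vee \ell_{j,2} \vee \ell_{j,3}$ introduce a mode $m_j$ whose rate set consists of the three rate vectors assigned to its literals (see Figure~\ref{reduction}). The construction is clearly polynomial in $|\phi|$.

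Since $\px_0 \in \interior(S)$, combining Lemmas~\ref{lem:sched-pushing} and~\ref{lem:extreme} with Theorem~\ref{prop:helper-polytope} tells us that $\Hh$ is non-schedulable iff there exists $\vv \in \Real^n$ such that every mode $m_j$ contains some rate $\vr$ with $\vv \dotprod \vr < 0$. I claim this condition is equivalent to satisfiability of $\phi$. Given a satisfying assignment $\tau$, set $\vv(i) = +1$ if $\tau(x_i) = \top$ and $\vv(i) = -1$ otherwise; then for every literal $\ell$ made true by $\tau$ its rate has dot product $-1$ with $\vv$, so each clause-mode admits the required witness. Conversely, given a falsifier $\vv$, define $\tau(x_i) = \top$ if $\vv(i) > 0$ and $\tau(x_i) = \bot$ otherwise; in each mode $m_j$ a witness rate $\vr \in \{\pm e_i\}$ with $\vv \dotprod \vr < 0$ necessarily corresponds to a literal that $\tau$ makes true, so $\phi$ is satisfied.

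I expect the only subtlety to be handling coordinates with $\vv(i) = 0$ in the reverse direction. Such coordinates cannot produce the strict inequality $\vv \dotprod (\pm e_i) < 0$, so the witnessing rate in each clause-mode must involve a coordinate of definite sign, which guarantees that the corresponding literal is true under $\tau$ regardless of how the ``don't care'' values are fixed. Once this observation is in place the equivalence is immediate, and the reduction establishes co-NP hardness.
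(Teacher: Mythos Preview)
Your proposal is correct and follows essentially the same reduction as the paper: one mode per clause, one axis-aligned unit rate per literal, and the equivalence between a satisfying assignment and a separating vector~$\vv$ witnessing non-safety. The only cosmetic differences are that you flip the sign convention (your $\vp_i=-e_i$ whereas the paper takes $\vp_i=e_i$), and you phrase the argument through the $\vv$-characterization of Lemmas~\ref{lem:sched-pushing}--\ref{lem:extreme} rather than directly through the existence of an unsafe \cmms{} instance; these are equivalent by Lemma~\ref{lem:extreme}. One small caveat: your reference to Figure~\ref{reduction} is misleading since the figure is drawn with the paper's opposite sign convention, and you should also note (as the paper does implicitly) that Theorem~\ref{prop:helper-polytope} is stated for \bmms{}, so invoking it for your \mmms{} requires the easy observation that the schedulability condition for an \mmms{} coincides with that of the \bmms{} obtained by taking convex hulls of the rate sets.
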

\begin{proof}[(Sketch)]
  The proof for co-NP hardness uses a reduction from the classical NP-complete
  problem $3$-SAT.
  For a SAT instance $\phi$ we construct a \mmms{} $\Hh_\phi$ such that $\phi$
  is satisfiable if and only if $\Hh_\phi$ is not schedulable for any starting
  state and bounded convex safety set. 
  We only sketch the construction of $\Hh_\phi$ here and formally prove the
  correctness of the construction in
\ifthenelse{\isundefined{\techreport}}{%
\cite{techreport}}{%
 Appendix~\ref{app:mmms-hard}}.
  Consider a SAT instance $\phi$ with $k$ clauses and $n$ variables denoted as 
  $x_1, \ldots, x_n$.  
  The corresponding \mmms{} $\Hh_\phi = (M, n, \Rr)$ is such that its set of
  modes $M=\{m_1,\ldots,m_k\}$ corresponds to the clauses in $\phi$,
  and variables are such that variable $i$ corresponds to variable $x_i$
  of $\phi$. 
  For each variable $x_i$ we define vectors $\vp_i$ and $\vn_i$
  such that $\vp_i(i) = 1$, $\vn_i(i) = -1$, and $\vp_i(j) = \vn_i(j) = 0$ if
  $i\not = j$. 
  The rate-vector function $\Rr$ is defined such that for each mode $m_j$ and
  for each SAT variable $x_i$ we have that $\vp_i \in \Rr(m_j)$ if $x_i$ occurs
  positively in clause $j$, and $\vn_i \in \Rr(m_j)$ if the variable $x_i$ occurs
  negatively in clause $j$.  
  The crucial property here is that there is no vector that can have a
  positive dot product with both $\vp_i$ and $\vn_i$, which allows us to
  map unsafe rate combinations to satisfying valuations and vice versa. 
    Figure~\ref{reduction} shows an example of the reduction for two different
  formulas. 
  On the left, we have a satisfiable formula
  $(x_1 \vee x_2 \vee x_3) \wedge (\neg x_1 \vee \neg{x}_2 \vee
  \neg{x}_3)$ which gives rise to a \mmms{} with two modes: 
  $\set{\vp_1,\vp_2,\vp_3 } \in m_1$ and 
  $\set{\vn_1, \vn_2, \vn_3} \in m_2$. %
  The system has unsafe combination $\vp_1, \vn_2$. %
  In Figure~\ref{reduction}~(right) an unsatisfiable formula
  $(x_1 \vee x_1 \vee x_1) \wedge (\neg{x}_1 \vee \neg{x}_1 \vee \neg{x}_1) \wedge (x_1 \vee x_2 \vee x_3)$ 
  is reduced to a \mmms{} with three modes:
  $\set{\vp_1} \in m_1$, $\set{\vn_1} \in m_2$, and $\set{\vp_1, \vp_2, \vp_3} \in m_3$. 
  All combinations are safe.
\end{proof}
The proof of the following easy corollary is postponed to
\ifthenelse{\isundefined{\techreport}}{%
\cite{techreport}}{%
Appendix~\ref{app:bmms-hard}}. 
\begin{corollary}[co-NP hardness result for \bmms{}]\label{cor:bmms-hard}
  The schedulability problem for \bmms{} is co-NP hard.
\end{corollary}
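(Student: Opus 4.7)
The plan is to reduce from the 3-SAT problem via a convex-hull reformulation of the \mmms{} $\Hh_\phi$ built in the proof of Proposition~\ref{prop:mmms-hard}. Concretely, given $\Hh_\phi = (M, n, \Rr)$, I would construct a \bmms{} $\Hh'_\phi = (M, n, \Rr')$ over the same modes and variables by setting $\Rr'(m)$ to be the convex hull of the finite set $\Rr(m)$ for every mode $m\in M$, while keeping the safety set $S$ (a sufficiently large axis-aligned bounded cube) and the starting state $\px_0$ (the origin, which lies in the interior of $S$) from the original reduction. This step is polynomial in $|\phi|$, so it yields a polynomial-time reduction from 3-SAT to the non-schedulability problem for \bmms{}.

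The correctness argument then rests on two simple observations. First, since the rates used in the proof of Proposition~\ref{prop:mmms-hard} are distinct axis-aligned unit vectors of the form $\vp_i$ or $\vn_i$, no rate in any $\Rr(m)$ is a convex combination of the others; hence $\Vtx{\Rr'(m)} = \Rr(m)$ for every $m\in M$, giving $\sem{\Ext(\Hh'_\phi)} = \sem{\Hh_\phi}$. Second, since $\px_0$ lies in the interior of $S$, Theorem~\ref{prop:helper-polytope} applies and tells us that $\Hh'_\phi$ is schedulable iff every \cmms{} in $\sem{\Ext(\Hh'_\phi)}$ is safe, i.e.\ iff every \cmms{} in $\sem{\Hh_\phi}$ is safe. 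The argument behind Proposition~\ref{prop:mmms-hard} shows exactly that the latter condition fails iff $\phi$ is satisfiable, so $\phi$ is satisfiable iff $\Hh'_\phi$ is not schedulable, which establishes co-NP hardness.

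The only point that requires care is the identity $\Vtx{\Rr'(m)} = \Rr(m)$, i.e.\ verifying that taking the convex hull does not introduce new extreme rates that could give the environment useful choices beyond those already present in $\Hh_\phi$. This is immediate from the axis-aligned structure of the rate vectors $\vp_i, \vn_i$ chosen in the 3-SAT reduction, and once it is in place the remainder of the argument is a direct application of Theorem~\ref{prop:helper-polytope}.
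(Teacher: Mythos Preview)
Your proposal is correct and follows essentially the same convex-hull reduction as the paper: build a \bmms{} from $\Hh_\phi$ by replacing each finite rate set with its convex hull, and argue that schedulability is unchanged. The only difference is in how that last step is justified: the paper proves a general lemma (for any \bmms{}, an unsafe instance yields an unsafe extreme-rate instance, via linearity of the dot product over convex combinations), whereas you exploit the specific structure of the construction---the rates $\vp_i,\vn_i$ are distinct unit vectors, hence already the vertices of their convex hull---and then invoke Theorem~\ref{prop:helper-polytope} directly; both routes are valid and of comparable length.
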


\subsection{\bmms{}  with two variables}
\begin{algorithm}[t]
  \KwIn{\bmms{} $\Hh$ with two variables.}
  \KwOut{Return Yes, if $\Hh$ is safe and No otherwise.}
  Set $R$ to the set of extreme rate vectors of $\Hh$\; 
  \ForEach{$\vr_\bot \in R$ }{
	Set $\vec u$ to be a perpendicular vectors to $\vr_\bot$\;
	\ForEach{$\vv \in \{\vec u, -\vec u\}$}{
    		\lIf{for each $m\in M$ there is $\vr \in m$ s.t. $\vv \cdot \vr >0$ or there is $p>0$ s.t. $\vr = p \vr_\bot$}{
			\Return No;
		}
	}
    }
    \Return Yes
  \caption{Decide if a two dimension \bmms{} is safe.}
  \label{SafePolytopeTwo}
\end{algorithm}
For a special case of \bmms{} which only have two variables, we show the
following result. 
\begin{theorem}
  Schedulability problems for \bmms{} with convex polytope safety sets
  are in P for systems with $2$ variables.   
\end{theorem}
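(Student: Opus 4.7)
The plan is to treat Algorithm~\ref{SchedulabilityGamesGeneral} as the outer loop and to justify Algorithm~\ref{SafePolytopeTwo} as its inner safety check. A convex polytope in $\Real^2$ has only $O(\rows{A})$ faces, so the outer loop performs only polynomially many safety queries on systems $\Hh_I$; the remaining work is to show that each such query runs in polynomial time when there are just two variables.

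By Lemmas~\ref{lem:sched-pushing} and~\ref{lem:extreme} a \bmms{} $\Hh = (M,2,\Rr)$ is \emph{not} safe iff there is $\vv \in \Real^2 \setminus \{\vzero\}$ such that for every mode $m$ some extreme rate $\vr \in \Vtx{\Rr(m)}$ satisfies $\vv \cdot \vr > 0$ (after replacing $\vv$ by $-\vv$). Let $B \subseteq S^1$ be the set of unit vectors witnessing this. Then $B$ is an open subset of $S^1$, as a finite intersection over modes of finite unions over extreme rates of open half-circles. The key claim I would prove is: if $B \neq \emptyset$, then its closure contains a vector $\vv^*$ perpendicular to some extreme rate $\vr_\bot$ of $\Hh$ for which the disjunctive check of Algorithm~\ref{SafePolytopeTwo} succeeds.

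For the claim, pick any $\vv_0 \in B$ and rotate it continuously along $S^1$; let $\vv^*$ be the first boundary point of $B$ that one reaches. At $\vv^*$ some mode $m^*$ loses its witness, i.e.\ no extreme rate of $m^*$ still has strictly positive dot product with $\vv^*$; by continuity its previous witness $\vr_{m^*}$ satisfies $\vv^* \cdot \vr_{m^*} = 0$, so we set $\vr_\bot := \vr_{m^*}$. For every other mode $m$ the previous witness $\vr_m$ either still satisfies $\vv^* \cdot \vr_m > 0$, or has $\vv^* \cdot \vr_m = 0$; in the latter case $\vr_m$ is a scalar multiple of $\vr_\bot$, and since it is approached from the interior of $B$, the multiple is positive. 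This is exactly the disjunct ``$\vr = p\vr_\bot$ with $p > 0$'' in the algorithm; iterating over both $\vec u$ and $-\vec u$ accounts for clockwise versus counterclockwise rotation. Conversely, if the algorithm's condition holds at some $(\vv,\vr_\bot)$, pick a witness $\vr_m$ for each $m$; then $\vw = \vv + \varepsilon\vr_\bot$ satisfies $\vw \cdot \vr_m > 0$ for all $m$ and all sufficiently small $\varepsilon > 0$ (the $\vv \cdot \vr_m > 0$ witnesses survive perturbation, while each $\vr_m = p\vr_\bot$ witness contributes $\varepsilon p \norm{\vr_\bot}^2 > 0$). By the supporting hyperplane theorem $\vzero$ is not in the convex hull of $\{\vr_m\}$, so the corresponding \cmms{} instance falsifies~(\ref{eq:cmms}) and $\Hh$ is unsafe. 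Each call of Algorithm~\ref{SafePolytopeTwo} scans $O(|R|)$ candidate $\vr_\bot$ and, for each, verifies the condition in time linear in the total number of extreme rates; composed with the $O(\rows{A})$ face enumeration of Algorithm~\ref{SchedulabilityGamesGeneral}, the overall running time is polynomial.

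The main obstacle I expect is the orientation bookkeeping that underlies the asymmetric ``$p > 0$'' clause: one must argue from the direction of rotation that whenever the limiting witness $\vr_m$ of a mode $m \neq m^*$ degenerates to a multiple of $\vr_\bot$, that multiple is \emph{positive}. This is morally because the sign of $\vv(\theta) \cdot \vr_m$ can only drop to zero from above in a way compatible with the orientation of $\vr_\bot$ on the side of $\vv^*$ from which $B$ is approached, but making it precise requires a careful continuity argument on $S^1$ together with the observation that we may, without loss of generality, approach $\vv^*$ from within $B$ in a prescribed rotational direction, which is exactly why the algorithm tries both $\vv = \vec u$ and $\vv = -\vec u$ for each candidate $\vr_\bot$.
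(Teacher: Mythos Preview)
Your proposal is correct and takes essentially the same approach as the paper: combine the face enumeration of Algorithm~\ref{SchedulabilityGamesGeneral} (only $O(\rows{A})$ faces when $n=2$) with the safety test of Algorithm~\ref{SafePolytopeTwo}, whose correctness rests on restricting the witnessing direction $\vv$ to those perpendicular to some extreme rate (the paper's Lemma~\ref{unsafeCombo} and Corollary~\ref{combos}). The only cosmetic differences are that the paper first fixes an unsafe rate combination and linearly interpolates $\vv$ toward a perpendicular direction, while you rotate on $S^1$ keeping the per-mode existential quantifier; both yield the same $p>0$ orientation constraint and the same $\vv+\varepsilon\vr_\bot$ perturbation for the converse.
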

The rest of the section is devoted to the proof of this theorem.
The following lemma shows that to check whether a set of rate vectors
$R=\set{\vec r_1,...,\vec r_k}$ is unsafe it is sufficient to check properties 
of vectors $\vu$ perpendicular to some vector of $R$. 
This observation yields a polynomial time algorithm.
\begin{lemma}
 \label{unsafeCombo}
 Let $R$ be a set of vectors. 
 There is $\vv$ such that $\vv\cdot \vr>0$ for all $\vr\in R$ if and only if 
 there are $\vec u$ and $\vr_\bot\in R$ satisfying $\vec u\cdot \vr_\bot = 0$
 and for all $\vr\in R$ either $\vec u\cdot \vr > 0$ or $\vr=p\cdot\vr_\bot$ for
 some $p>0$. 
\end{lemma}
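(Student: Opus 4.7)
My plan is to prove the two directions of the equivalence separately, viewing the condition geometrically in $\mathbb{R}^2$: each $\vr \in R$ defines an open half-plane $\{\vw : \vw \cdot \vr > 0\}$, and the existence of a common $\vv$ is equivalent to the intersection cone $C = \{\vw : \vw \cdot \vr > 0 \text{ for all } \vr \in R\}$ being non-empty. The characterization in the lemma says essentially that $C$ is non-empty iff one of its boundary rays can be witnessed by a vector in $R$ whose perpendicular directions are consistent.

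For the easy direction ($\Leftarrow$), given $\vu$ and $\vr_\bot$ with the stated property, I will produce $\vv$ by perturbing $\vu$ slightly toward $\vr_\bot$: set $\vv = \vu + \varepsilon \vr_\bot$ for sufficiently small $\varepsilon > 0$. Then $\vv \cdot \vr_\bot = \varepsilon \|\vr_\bot\|^2 > 0$, and for every $\vr = p\vr_\bot$ with $p > 0$ we also get $\vv \cdot \vr > 0$. For the remaining $\vr \in R$, which by hypothesis satisfy $\vu \cdot \vr > 0$, continuity guarantees $\vv \cdot \vr > 0$ provided $\varepsilon$ is chosen small enough (uniformly over the finite set $R$).

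For the harder direction ($\Rightarrow$), I will work with the closed cone $\bar{C} = \{\vw : \vw \cdot \vr \geq 0 \text{ for all } \vr \in R\}$. Assuming some $\vv$ witnesses $C \neq \emptyset$, in two dimensions $\bar{C}$ is a full-dimensional closed convex cone bounded by at most two rays through the origin. I will pick $\vu$ to be a nonzero vector lying on one such boundary ray; then $\vu$ satisfies $\vu \cdot \vr \geq 0$ for every $\vr \in R$, and because $\vu$ is on the boundary rather than in the interior, there must exist some $\vr_\bot \in R$ with $\vu \cdot \vr_\bot = 0$. The main step is then to verify the dichotomy: for any other $\vr \in R$ with $\vu \cdot \vr = 0$, the fact that we are in $\mathbb{R}^2$ forces $\vr = p\,\vr_\bot$ for some scalar $p$; and the sign of $p$ must be positive because otherwise $\vv \cdot \vr = p(\vv \cdot \vr_\bot)$ would fail to be strictly positive (using $\vv \cdot \vr_\bot > 0$ from $\vv \in C$), contradicting the hypothesis. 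All remaining $\vr \in R$ automatically satisfy $\vu \cdot \vr > 0$.

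The main obstacle I anticipate is formalizing the choice of $\vu$ on the boundary of $\bar{C}$ cleanly; the cleanest route is to note that the two-dimensional closed convex cone $\bar{C}$ with nonempty interior has exactly two extreme rays (or is a half-plane, a degenerate case handled by picking $\vu$ along either bounding direction), and each extreme ray must be supported by some defining inequality $\vw \cdot \vr \geq 0$ with $\vr \in R$, which gives the required $\vr_\bot$. Once this is set up, the rest is an elementary sign argument enabled by the two-dimensional geometry, which is precisely what fails in higher dimensions and why this characterization yields a polynomial-time test only for $n = 2$.
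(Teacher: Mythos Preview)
Your proposal is correct and follows essentially the same approach as the paper: for $\Leftarrow$ you perturb $\vu$ toward $\vr_\bot$ (the paper does the same, giving an explicit $\varepsilon$), and for $\Rightarrow$ you locate $\vu$ on the boundary of the feasible cone and use the two-dimensional fact that vectors orthogonal to $\vu$ are scalar multiples of each other, together with the sign argument via $\vv\cdot\vr_\bot>0$. The only cosmetic difference is that the paper reaches the boundary by interpolating $\vv_\alpha = \alpha\vv + (1-\alpha)\vw$ between $\vv$ and a vector $\vw$ perpendicular to some $\vr_0\in R$, taking the maximal $\alpha$ for which some $\vr\in R$ is orthogonal to $\vv_\alpha$, whereas you appeal directly to the extreme-ray structure of the two-dimensional cone $\bar C$; both arrive at the same $\vu$ and the subsequent reasoning is identical.
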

\begin{proof}[(Sketch)]
 To obtain $\vv$ we keep changing $\vv$ until it becomes perpendicular to some 
 vector in $R$. On the other hand, $\vv$ is obtained from $\vu$ by making a
 sufficiently small change to $\vu$. 
 A formal proof is given in
\ifthenelse{\isundefined{\techreport}}{%
\cite{techreport}}{%
Appendix~\ref{app:unsafeCombo}}.
\end{proof}

\begin{figure}[b]
  \centering
  {\small
  \begin{tikzpicture}[node distance=2cm,scale=0.45]
    \tikzstyle{lines}=[draw=black!30,rounded corners]
    \tikzstyle{vectors}=[-latex, rounded corners]
    \tikzstyle{rvectors}=[-latex,very thick, rounded corners]

    \draw[lines,->] (-3,0)--(3,0);
    \draw[lines,->] (0, -3)--(0,3);
    \draw[vectors] (0, 0) --node[right]{$$} (1, -1) node[below,left]{$\vr_1$};
    \draw[vectors] (0, 0) --node[right]{$$} (2, -2)node[right]{$\vr_2$};
    \draw[vectors,dashed] (0, 0) --node[right]{$$} (1, 0)node[right]{$\vv$};
    \draw[vectors,dashed] (0, 0) --node[right]{$$} (2, 2)node[above]{$\vu$};
    \draw[vectors] (0, 0) --node[right]{$$} (2, 1)node[right]{$\vr_3$};
    \draw[vectors] (0, 0) --node[right]{$$} (0.5, 2)node[above]{$\vr_4$};

      \draw[lines,dashed] (-3,3)--(3,-3);

    \draw[lines,->] (7,0)--(13,0);
    \draw[lines,->] (10, -3)--(10,3);
    \draw[vectors] (10, 0) --node[right]{$$} (11, 1) node[below,left]{$\vr_1$};
    \draw[vectors] (10, 0) --node[right]{$$} (9, 1)node[right]{$\vr_2$};
    \draw[vectors] (10, 0) --node[right]{$$} (10, -1)node[right]{$\vr_3$};
      \draw[lines,dashed] (7,-3)--(13,3);
      \draw[lines,dashed] (7,3)--(13,-3);

     \end{tikzpicture}
}
  \caption{Examples for Lemma \ref{unsafeCombo}} %
  \label{lemmaExample}
\end{figure}
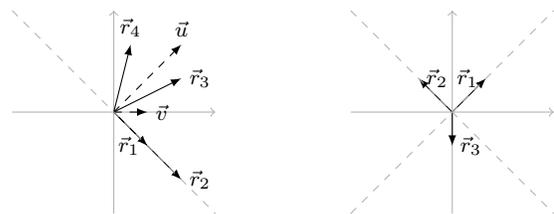

Consider an unsafe set of rate vectors $R = \set{\vr_1,\vr_2,\vr_3,\vr_4}$
shown in Figure \ref{lemmaExample}~(left).
All the rate vectors are on the right side of line $y=0$ and vector $\vv$  
has strictly positive dot product with all of them. 
As it can be seen in the figure, all the rate vectors are on right-hand side of
the line passing through $\vr_1$ and there exists $\vu$ perpendicular to $\vr_1$
such that $\vv' \cdot \vr_i \geq 0$ for all $\vr_i \in R$.  
Observe that adding a rate vector $\vr_5=-\vr_1$ to $R$ makes this set of rate
vectors safe, and none of rate vectors would satisfy the conditions of Lemma
\ref{unsafeCombo}. 
Figure \ref{lemmaExample} (right) shows a safe set of rate vectors. 
As one can see none of rate vectors has the others on one side of itself.
The following corollary implies that we can use Lemma \ref{unsafeCombo} to check
the safety of a \bmms{}. 
\begin{corollary}
  A \bmms{} $\Hh$ with two variables is not safe if and only if there
  exists a rate vector $\vr_\bot$ in one of the modes of system and vector $\vv$
  perpendicular to it, such that for all modes $m \in \Hh$: (i) there exists
  $\vr \in m$ such that $\vv \cdot \vr > 0$; or (ii) $\vv \cdot \vr = 0$ and
  $\vr=p \cdot \vr_\bot$ for some $p>0$.  
\label{combos}
\end{corollary}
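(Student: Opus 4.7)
The plan is to deduce the corollary from Lemma \ref{lem:extreme} combined with Lemma \ref{unsafeCombo}, by reformulating the safety condition in terms of strict separation of a single representative rate per mode.

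First I would rewrite the unsafety condition in a convenient form. By Lemma \ref{lem:extreme}, $\Hh$ is not safe iff there exists $\vv$ such that for every mode $m$ there is $\vr \in \Rr(m)$ with $\vv \cdot \vr < 0$. Replacing $\vv$ by $-\vv$, this is equivalent to: there exists $\vv$ such that every mode $m$ contains some $\vr \in \Rr(m)$ with $\vv \cdot \vr > 0$. This reformulation is the bridge that lets us apply Lemma \ref{unsafeCombo} to the chosen representatives.

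For the ($\Rightarrow$) direction, assume $\Hh$ is not safe and let $\vv$ be the witness above. For each mode $m$ pick one representative $\vr_m \in \Rr(m)$ with $\vv \cdot \vr_m > 0$ and set $R = \set{\vr_m : m \in M}$. Then $\vv \cdot \vr > 0$ for every $\vr \in R$, so Lemma \ref{unsafeCombo} applies in dimension two and produces $\vu$ together with $\vr_\bot \in R$ such that $\vu \cdot \vr_\bot = 0$ and every $\vr \in R$ satisfies either $\vu \cdot \vr > 0$ or $\vr = p \cdot \vr_\bot$ for some $p > 0$. Since $\vr_\bot$ was chosen as a representative of some mode, it is a rate vector of the \bmms{}, and for each mode $m$ its representative $\vr_m$ witnesses the disjunction (i)--(ii) of the corollary with $\vv$ replaced by $\vu$.

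For the ($\Leftarrow$) direction, suppose such $\vr_\bot$ and $\vv$ are given; I would use a small perturbation and put $\vv' := \vv + \varepsilon \vr_\bot$ for a sufficiently small $\varepsilon > 0$. For every mode $m$, pick the rate $\vr_m$ promised by the corollary. If $\vv \cdot \vr_m > 0$, then $\vv' \cdot \vr_m > 0$ for all small enough $\varepsilon$; if instead $\vr_m = p \vr_\bot$ with $p > 0$, then $\vv' \cdot \vr_m = p\,\varepsilon\,\norm{\vr_\bot}^2 > 0$ (note $\vr_\bot \neq \vzero$, otherwise condition (ii) would be vacuous and we could discard it). Since $M$ is finite one $\varepsilon$ works uniformly, so $\vv' \cdot \vr_m > 0$ for every mode, and the reformulated unsafety characterization from Lemma \ref{lem:extreme} gives that $\Hh$ is not safe.

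The only real obstacle is accounting carefully for the strict versus non-strict inequalities: Lemma \ref{unsafeCombo} produces a non-strict configuration (some $\vr$ may satisfy $\vu \cdot \vr = 0$ by lying along $\vr_\bot$), whereas the unsafety characterization needs a strict one. The perturbation trick $\vv' = \vv + \varepsilon \vr_\bot$ handles exactly this gap, exploiting the sign condition $p > 0$ in case (ii) to turn boundary contact into strict positivity; the rest is bookkeeping.
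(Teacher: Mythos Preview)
Your proposal is correct and follows the route the paper intends: the paper does not spell out a proof of the corollary but presents it as the combination of Lemma~\ref{lem:extreme} (to recast unsafety as ``some $\vv$ hits a rate strictly in every mode'') with Lemma~\ref{unsafeCombo} (to rotate that $\vv$ until it becomes perpendicular to one of the chosen rates). Your $(\Leftarrow)$ perturbation $\vv' = \vv + \varepsilon \vr_\bot$ is exactly the argument used in the appendix proof of Lemma~\ref{unsafeCombo}; a marginally cleaner alternative is to enlarge your set of representatives by throwing in $\vr_\bot$ itself (it satisfies the lemma's hypothesis via $p=1$) and then quote Lemma~\ref{unsafeCombo} directly instead of redoing its perturbation step.
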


Algorithm \ref{SafePolytopeTwo} checks whether all the combinations are safe in
polynomial time; it chooses a rate vector $\vr_\bot$ at each step and tries to
find an unsafe combination using the result of Corollary \ref{combos}. Note that
for any non-zero vector $\vr_\bot$ in two dimensions there are only two vectors
which we need to check.  
Although there are infinitely many vectors $\vv$ which might satisfy conditions
of Corollary~\ref{combos}, the conditions we are checking are preserved if we
multiply $\vv$ by a positive scalar.

\section{Discrete Schedulability}
\label{sec:discrete}
In this section we discuss the {\em discrete schedulability problem}, in which
a scheduler can only make decisions at integer multiplies of a specified \emph{clock
period} $\Delta$ and the environment has finitely many choices of rates.
Formally, given a \mmms{} $\Hh$, a closed convex polytope $S$ as safety set, an
initial state $x_0 \in S$, the discrete schedulability problem is to decide if
there exists a winning strategy of the scheduler where the time delays are
multiples of $\Delta$.   

\begin{theorem}\label{thm:discr}
  Discrete schedulability problem is EXPTIME-complete.
\end{theorem}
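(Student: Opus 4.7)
The plan is to establish EXPTIME-completeness by matching upper and lower bounds, using a finite-game reduction for the upper bound and a reduction from an EXPTIME-complete two-player problem for the lower bound.

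For the upper bound, I would first argue that under the discrete semantics the set of states reachable from $x_0$ while remaining in the bounded safety set $S$ is finite and of exponential size. Since $\Hh$ is a \mmms{}, all rate vectors have rational entries of polynomial bit-complexity; combined with the fact that time delays are integer multiples of $\Delta$, every reachable point lies on a rational lattice whose step size has polynomial bit-complexity. Intersecting this lattice with $S$ yields at most exponentially many points. Moreover, within a single round the scheduler's multiplicity $k$ (for the timed move $(m, k\Delta)$) is itself bounded by an exponential quantity because $\px + i\Delta\vr$ must remain in the bounded polytope $S$ for $i=1,\dots,k$. I would then construct an explicit turn-based safety game $G$ whose scheduler nodes are reachable states, whose environment nodes encode a chosen $(m,k)$, and whose edges from environment nodes are labeled by rates $\vr \in \Rr(m)$, with target $\px + k\Delta\vr$ whenever the straight-line trajectory stays in $S$ (otherwise the edge leads to a losing sink for scheduler). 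Since $\Delta > 0$, any infinite play automatically satisfies the non-Zeno condition, so winning the safety game coincides with winning the schedulability game. A standard attractor computation solves the resulting finite safety game in time polynomial in $|G|$, giving overall EXPTIME.

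For the lower bound I would reduce from a canonical EXPTIME-complete two-player problem, such as the acceptance problem for alternating polynomial-space Turing machines (recall APSPACE $=$ EXPTIME) or from countdown games. A configuration of an APSPACE machine $\mathcal{M}$ on input $w$ is encoded as a point in $\Real^n$ with $n$ polynomial in $|w|$, whose coordinates store tape cells and the head position using a bounded-precision binary encoding. Each existential state of $\mathcal{M}$ is modeled by a scheduler mode whose rate set contains exactly the rate vectors implementing the possible existential transitions; each universal state is modeled by a scheduler mode whose rate set, chosen by the environment, enumerates the possible universal transitions. The safety polytope $S$ is the set of all encodings of non-rejecting configurations, and $\Delta$ is chosen so that one round of the game corresponds to one step of $\mathcal{M}$. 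The game-theoretic structure of the schedulability problem then exactly mirrors the alternation in $\mathcal{M}$, and scheduler wins from the encoding of the initial configuration iff $\mathcal{M}$ accepts $w$.

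The main obstacle is the lower-bound reduction, and in particular ensuring that the encoding is \emph{faithful}: neither player should have ``cheating'' moves that break the simulation in their favor. Since the environment wins by driving the system out of $S$, any rate within a universal-mode rate set must correspond to a legitimate universal successor, which is easy to arrange because the rate set of a \mmms{}-mode is an arbitrary finite set we get to design. The subtler direction is preventing the scheduler from gaining a spurious advantage: scheduler moves that do not correspond to a legitimate existential transition must lead immediately out of $S$, which I would enforce by cutting $S$ with auxiliary half-spaces whose coefficients depend only on the encoding and the transition relation of $\mathcal{M}$. Once this geometric book-keeping is in place, correctness of the reduction and the size bounds (polynomial in $|\mathcal{M}| + |w|$) follow by induction on the number of simulated steps, completing the EXPTIME lower bound.
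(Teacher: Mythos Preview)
Your upper bound is essentially the paper's argument: discretise the reachable state space to a lattice of at most exponential size inside the bounded polytope $S$, build the corresponding finite turn-based safety game, and solve it with an attractor computation in time polynomial in the size of that game.

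For the lower bound the paper reduces from countdown games (which you mention as an alternative but do not develop). The mapping there is particularly clean because the move structure already matches: Player~1 picks a node and a decrement, Player~2 then picks a successor, so a mode is a pair $(\text{node},\text{decrement})$ and its rate set contains one rate per allowed successor. A one-hot encoding of the current node together with a single counter variable lets all legality constraints be written as simple bounds $x_i\ge 0$, $\sum_{i\ge 2} x_i\le 1$, $x_1\le B_1$, so convexity of $S$ is immediate and no delicate ``geometric book-keeping'' is needed.

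Your APSPACE sketch, as written, has a concrete bug. You say that each existential state is modeled by a single mode whose rate set contains all the rate vectors for its outgoing transitions. But in the schedulability game the scheduler chooses the mode and the \emph{environment} chooses the rate within that mode, so this encoding hands the existential branching to the wrong player. The fix is to introduce one singleton-rate mode per existential transition, so that the scheduler's choice of mode \emph{is} the choice of transition; your ``auxiliary half-spaces'' then have to prevent the scheduler from firing a mode belonging to a state other than the current one, which is precisely what a one-hot control-state encoding with non-negativity constraints buys you (and is exactly the mechanism in the paper's countdown reduction). You should also justify that ``the set of all encodings of non-rejecting configurations'' is a convex polytope; with a one-hot encoding this becomes routine, but your sketch does not make it clear.
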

\begin{proof}
  EXPTIME-membership of the problems is shown via discretization of the state
  space of $\Hh$.
  Since the set $S$ is given as a bounded polytope, the size of the
  discretization can be shown to be at most  exponential in the size of $\Hh$
  and $\Delta$, and since the safety games on a finite graph can be solved in P,
  EXPTIME membership follows.
  The hardness can be proved by a reduction from the countdown
  games~\cite{JLS08}. For space constraints we give the proof in
\ifthenelse{\isundefined{\techreport}}{%
\cite{techreport}}{%
  Appendix~\ref{app:delta-hard}}.
\end{proof}

We turn the discrete schedulability problem to an optimization problem, by asking
to find supremum of all $\Delta$ for which the answer to the discrete
schedulability problem is yes. 
We prove the following, which also improves a result of~\cite{ATW12} where only an
approximation algorithm was given.

\begin{theorem}
  \label{thm:max-delta}%
  Given a \mmms{} $\Hh$, a closed convex polytope $S$ and an initial state $\px_0$,
  there is  an exponential time algorithm which outputs the maximal $\Delta$ for
  which the answer to the discrete schedulability problem is Yes. 
  For a \cmms{} the algorithm can be made to run in polynomial space.
\end{theorem}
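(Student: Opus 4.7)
The plan is to reduce the optimization problem to deciding the discrete schedulability problem for exponentially many carefully chosen candidate values of $\Delta$, treating the algorithm of Theorem~\ref{thm:discr} as a black box.

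First I would observe that, as $\Delta$ varies continuously, the discretized safety game of Theorem~\ref{thm:discr} changes combinatorially only at a finite set of threshold values. Every state reachable after $k$ discrete rounds with rate choices $\vr_1,\ldots,\vr_k\in \bigcup_m\Rr(m)$ and integer delays $n_1,\ldots,n_k$ has the form $\px_0 + \Delta\sum_i n_i\vr_i = \px_0 + \Delta\vw$, so the inequality $A(\px_0+\Delta\vw)\le \vb$ defining membership in the safety polytope $S$ is linear in $\Delta$. Such a state crosses a face of $S$ exactly when $\Delta$ equals $(\vb(j) - A_j\cdot \px_0)/(A_j\cdot \vw)$ for some row $j$ of $A$. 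On every open interval between consecutive thresholds the combinatorial structure of the game graph $G_\Delta$ used in the proof of Theorem~\ref{thm:discr} is invariant, hence so is the winner.

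Next I would bound the number of thresholds. Since $G_\Delta$ has at most exponentially many vertices and $A$ has only polynomially many rows, the set $\mathcal{T}$ of thresholds is exponential and each element is a rational of polynomial bit complexity. The algorithm enumerates $\mathcal T$ together with one representative $\Delta$ per open interval between consecutive thresholds, invokes the EXPTIME decision procedure of Theorem~\ref{thm:discr} on each, and outputs the supremum of the winning values---necessarily a threshold, namely the largest threshold at which the scheduler still wins or the right endpoint of the rightmost winning interval. With exponentially many exponential-time calls, the total running time remains exponential.

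For the \cmms{} refinement the environment has no move, so for every fixed $\Delta$ the graph $G_\Delta$ is a one-player graph in which the scheduler wins iff there is an $S$-safe path that reaches a cycle (non-Zenoness then follows by repeating the cycle). Reachability and cycle detection in an exponential-size graph whose vertices admit polynomial-size encodings lie in $\mathrm{NPSPACE}=\mathrm{PSPACE}$ by Savitch's theorem: one nondeterministically guesses the trajectory step by step, tracking only the current vertex, a stored vertex used to close the lasso, and a step counter bounded by the size of $G_\Delta$. The thresholds can likewise be enumerated in polynomial space because each is recomputed on demand from its polynomial-size description $(\vw,j)$, which yields an overall $\mathrm{PSPACE}$ algorithm. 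The main obstacle will be to make the piecewise-invariance claim truly rigorous---ruling out that a vertex deep inside $G_\Delta$ changes status (safe vs.\ unsafe, or reachable vs.\ unreachable) in some way not accounted for by the above linear thresholds, which requires a careful bookkeeping of how $G_\Delta$ depends on $\Delta$ purely through these linear inequalities.
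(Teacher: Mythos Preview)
Your outline is close to the paper's argument in spirit, but the step where you bound the number of thresholds has a real gap. You assert that ``$G_\Delta$ has at most exponentially many vertices,'' but the bound coming from Theorem~\ref{thm:discr} is for a \emph{fixed} $\Delta$ supplied as part of the input, with the exponent depending on the bit-size of $\Delta$. Here $\Delta$ is the unknown you are optimising over. As $\Delta\to 0$ the number of grid points $\px_0+\Delta\vw$ lying in $S$ grows without bound, so the set of displacement vectors $\vw$ that can produce a threshold is infinite; your enumeration of $\mathcal T$ does not terminate, and the claim that each threshold has polynomial bit complexity is likewise unsupported (the integers $n_i$ in $\vw=\sum_i n_i\vr_i$ may be arbitrarily large).

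The missing ingredient, which the paper supplies explicitly, is an a~priori \emph{lower bound} $\Gamma>0$ on $\Delta_{\max}$, obtained as a by-product of the dynamic strategy in Section~\ref{sec:schedulability}: the time delay chosen at every step of Algorithm~\ref{SchedulingAlg} is bounded below by a computable positive constant. Once one restricts attention to $\Delta\ge\Gamma$, the paper's monotonicity lemma shows that every grid point of $\discr{\Delta}\cap S$ is the image, under the natural bijection $g_{\Gamma,\Delta}$, of a point of the finite set $\discr{\Gamma}\cap S$; this finite set then yields exactly the exponentially many candidate values you want (obtained by maximising $\Delta$ subject to a single grid point staying in $S$). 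From there your plan and the paper's coincide: test each candidate with Theorem~\ref{thm:discr} and return the largest winning one. Incidentally, the ``piecewise-invariance'' issue you flag as the main obstacle is precisely what the paper's two lemmas (monotonicity of the grid-in-$S$ set, and transfer of safe schedulers when the grids are in bijection) establish; the obstacle you did \emph{not} flag---finiteness of the candidate set---is the one that actually needs an extra idea. Your PSPACE argument for \cmms{} via on-the-fly lasso detection and Savitch's theorem is fine once $\Gamma$ is available; the paper obtains the same bound by invoking a PSPACE result from~\cite{ATW12} instead of Theorem~\ref{thm:discr}.
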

\begin{proof}[(Sketch)]
  We exploit the fact that as the clock period $\Delta$ increases, all the
  points of the discretization move continuously towards infinity, except for
  the initial point. 
  This further implies that for $\Delta$ to be maximal, there must be a point of
  the discretization which lies on the boundary of $S$, since otherwise we could
  increase $\Delta$ by some small number, while preserving the existence of a
  safe scheduler. 
  By using a lower bound on $\Delta$ from Section~\ref{sec:schedulability}
  (obtained as a by-product of the construction of a dynamic strategy), there
  are only exponentially many candidates for such points, which gives us
  exponentially many candidates for maximal $\Delta$ to consider, and we can
  check each one by Theorem~\ref{thm:discr}.
  For the PSPACE bound we don't enumerate the points, but guess them
  nondeterministically in polynomial space, and utilize~\cite[Theorem 10]{ATW12}
  instead of Theorem~\ref{thm:discr}. 
  Full details of the proof are given in
\ifthenelse{\isundefined{\techreport}}{%
\cite{techreport}}{%
Appendix~\ref{app:delta-max}}.
\end{proof}

\section{Conclusion}
We investigated systems that comprise finitely many real-valued
variables whose values evolve linearly based on a rate vector determined by
strategies of the scheduler and the environment.
We studied an important schedulability problem for these systems, with
application to energy scheduling, that asks whether scheduler can make sure that
the values of the variables never leave a given safety set.
We showed that when the safety set is a closed convex polytope, existence of
non-Zeno winning strategy for scheduler is decidable for any arbitrary starting
state.  
We also showed how to construct such a winning strategy. 
On complexity side, we showed that the schedulability problem is co-NP complete in
general, but for the special case where the system has only two variables, the
problem can be decided in polynomial time. 
Directions for future research include investigation of
schedulability problem with respect to more expressive higher-level control
objectives including temporal-logic based specification and bounded-rate
multi-mode systems with reward functions.

\bibliographystyle{plain}
\bibliography{papers}

\ifthenelse{\isundefined{\techreport}}{}{
\newpage

\appendix
\section{Absence of Static Strategies}\label{app:static}
\begin{proposition}
  \label{Prop:subsystem}
  For a given starting state in the interior of the safety set $S$, the
  scheduler has a static winning strategy in a \bmms{} $\Hh = (M, n, \Rr)$ iff there is
  $M'\subseteq M$ such that $|\Rr(m)|=1$ for all $m\in M'$, and
  the \cmms{} $H=(M',n,R)$ is safe, where $R(m)$ is the unique rate of $\Rr(m)$.
\end{proposition}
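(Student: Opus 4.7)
The plan is to prove the two directions separately, using Theorem~\ref{thmcms} as the main tool. For the $(\Leftarrow)$ direction, given $M' \subseteq M$ with singleton rate sets and the corresponding \cmms{} $(M', n, R)$ safe, I would invoke Theorem~\ref{thmcms} to obtain a periodic safe schedule on $(M', n, R)$ using only modes of $M'$. Since every mode in $M'$ admits exactly one rate vector, the environment has no real choice along the run, so the same schedule functions verbatim in $\Hh$ as a static winning strategy: the run in $\Hh$ is forced to coincide with the run in the \cmms{}, inheriting safety and non-Zenoness.

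For the $(\Rightarrow)$ direction, the starting point is that a static strategy $\sigma$ ignores the environment's rate choices by definition, so I can identify it with a fixed sequence $(m_1,t_1),(m_2,t_2),\ldots$ of timed moves determined entirely by $\px_0$. Letting $T_m(k) = \sum_{i\le k,\,m_i=m} t_i$ and $M_\infty = \set{m : \lim_k T_m(k) = \infty}$, non-Zenoness forces $M_\infty \neq \emptyset$. The adversarial move to exploit is the following: for any selection $(\vr^*_m)_{m \in M_\infty}$ with $\vr^*_m \in \Rr(m)$, the environment can commit to playing $\vr^*_m$ every time $\sigma$ schedules $m$, producing a total displacement from $\px_0$ of $\sum_{m \in M_\infty} T_m(k)\vr^*_m$ plus a bounded correction coming from $M \setminus M_\infty$. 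Boundedness of $S$ forces this sum to be uniformly bounded in $k$ for every such selection.

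To convert this into the required structural conclusion, I would extract a subsequence $(k_n)$, depending only on $\sigma$, along which the normalized dwell times $T_m(k_n)/\sum_{m'\in M_\infty}T_{m'}(k_n)$ converge to some $(\lambda_m)$ on the probability simplex. Dividing the displacement bound by $\sum_{m'}T_{m'}(k_n)$ and letting $n\to\infty$ yields $\sum_{m\in M_\infty}\lambda_m \vr^*_m = \vzero$ for every admissible selection of $\vr^*_m$. A swap argument finishes it: if $\lambda_m>0$ and $\Rr(m)$ contained two distinct rates, replacing $\vr^*_m$ by the other one in the identity above would change its value by a nonzero vector, a contradiction. Hence $|\Rr(m)|=1$ whenever $\lambda_m>0$; taking $M' = \set{m \in M_\infty : \lambda_m > 0}$ and $R(m)$ the unique rate in $\Rr(m)$ for $m \in M'$, the strictly positive vector $(\lambda_m)_{m\in M'}$ witnesses~(\ref{eq:cmms}) for $(M',n,R)$, so this \cmms{} is safe.

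The hardest part will be making the subsequence-and-quantifier step precise: the limit vector $\lambda$ must be a single witness that works uniformly against \emph{all} choices of $\vr^*$, so the subsequence has to be chosen from data that does not depend on $\vr^*$. The construction above achieves this because $T_m(k)$ depends only on $\sigma$ and not on the environment, so compactness of the simplex gives one fixed $\lambda$ before any quantification over rates. A smaller technicality is absorbing the contribution of modes $m\in M\setminus M_\infty$ into an $O(1)$ error term that vanishes after dividing by $\sum_{m'}T_{m'}(k_n)\to\infty$; this needs a line but no real ideas.
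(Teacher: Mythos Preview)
Your proof is correct but takes a genuinely different route from the paper's in the $(\Rightarrow)$ direction. The paper argues by a case split: if some mode $m$ with $|\Rr(m)|\ge 2$ accumulates infinite total time under $\sigma$, then two environment strategies differing only in the rate chosen for $m$ produce runs whose states drift arbitrarily far apart, so one must exit the bounded set $S$; otherwise every multi-rate mode has finite total time, and the paper simply deletes those steps from $\sigma$ and rescales the remaining (singleton-rate) steps by a fixed factor to obtain an explicit safe schedule for the induced \cmms{}, which is then safe by Theorem~\ref{thmcms}. Your argument instead passes to a limiting normalized time-allocation vector $\lambda$ via compactness of the simplex and uses the identity $\sum_m \lambda_m \vr^*_m = \vzero$ valid for \emph{every} rate selection to run a swap argument. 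Your approach is more uniform (no case distinction) and delivers the witness for~(\ref{eq:cmms}) directly; the paper's approach is more elementary (no subsequence extraction) and more constructive, since in its second case it actually exhibits a safe \cmms{} schedule rather than only the coefficient vector.
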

\begin{proof}
   The ``if'' direction is trivial. 
  To show the ``only if'' direction we show that if there is no \cmms{} subsystem of $\Hh$ for which  there is a safe and non-Zeno schedule, then
  there is no static winning schedule for schedulability  objective.
  Let $\sigma = (m_1, t_1), (m_2, t_2), \ldots$ be a static scheduler.
  
  Assume there is
  $m\in M$ with two different rates $\vr_a$ and $\vr_b$ such that $\sum_{i: m_i=m} t_i = \infty$.
  We then define two strategies for the environment, $\pi_a$ and $\pi_b$ which for
  a mode $m$ always pick a rate $\vr_a$ and $\vr_b$, respectively. After the first $k$
  steps, the point reached under $\sigma_b$
  is equal to
  \[
   \px_b = \px_a + (\vr_b - \vr_a) \cdot \sum_{i\le k : m_i = m} t_i
  \]
  Hence, the points $\px_a$ and $\px_b$ will be arbitrarily far apart for large enough $k$,
  since the safety set is bounded, one of the strategies  $\pi_a$ and $\pi_b$ must ensure
  that a point outside is left eventually.

  On the other hand, assume all modes $m$ which have two different rates satisfy that
  $\sum_{i: m_i=m} t_i$ is finite.
  Let $M'$ be all such modes, and let
  $d_1 := \norm{\vr}\cdot \sum_{i: m_i\in M'} t_i \le \infty$ where $\vr$ is the rate with
  the maximal Euclidean norm which occurs in $\Hh$. Intuitively, $d$ is the upper bound on the change of the values
  of variables caused by using the modes of $M'$.
  Let $d_2$ be the diametre of $S$, and let $p$ be the Euclidean distance of 
  the initial point $\px_0$ from the boundary of $S$. We define a strategy
  \[
   \sigma' = (m'_1, t'_1\cdot \frac{p}{d_1+d_2}), (m'_2, t'_2\frac{p}{d_1+d_2}), \ldots
  \]
  where $(m'_1, t'_1), (m'_2, t_2), \ldots$ is the sequence $(m_1, t_1), (m_2, t_2)\ldots$
  from which we omit all the tuples which have a mode from $M'$ in the first component.
  The strategy $\sigma'$ is safe for $S$ and further shows that there is a safe
  \cmms{} subsystem, which is a contradiction. 
\end{proof}

%

%
%
\section{Omitted Proofs}

\subsection{Proof of Lemma~\ref{lem:facets}}
  Let $\px$ and $\py$ be points satisfying $\facet{S}{I}$.
  Assume $\px$ is safe with a strategy $\sigma$, and let $\sigma_d$ be a strategy for a controller defined as follows:
  Let $\varrho=\seq{\py_0, (m_1, t_1), \vr_1, \py_1, \ldots \py_k}$
where $\py_i = \py + \py'_i$ for some $\py'_i$,
be a history, and let $(\vr,t)$ be a decision of
$\sigma$ on $(\px_0, m_1, t'_1, \vr_1,\px_1, \ldots \px_k)$,
where $\px_i = \px + \py'_i\cdot d$ and $t'_i = t_i\cdot d$. The
strategy $\sigma_d$ chooses $(\vr,t/d)$ in $\pi$. Intuitively, $\sigma_d$ mimics
the decision of $\sigma$, but it assumes the starting point is $\py$
rather than $\px$, and it scales the time intervals down by $d$,
hence making sure that only points closer to $\py$ can be reached. For this reason
it suffices to take large enough $d$ to make sure that $\sigma_d$ is safe. For example,
we can put $d=(\sup_{\px'\models \facet{S}{I}} \norm{\px,\px'})/(\inf_{\py'\models \facet{S}{I'}, I\subseteq I'} \norm{\py,\py'})$.

Similar arguments can be made for the second part of the lemma, i.e. any
strategy safe for a point satisfying $\facet{S}{I'}$ can be scaled to a strategy
safe for a point satisfying $\facet{S}{I'}$.

\subsection{Proof of Proposition~\ref{prop:mmms-hard} (correctness of construction)}\label{app:mmms-hard}
We show that the construction proposed in the proof of Proposition~\ref{prop:mmms-hard}
in the main body is correct.
  We show that there is a satisfying assignment
  for $\varphi$ iff there exists an unsafe instance of $\Hh_\phi$. 
  \begin{itemize}
  \item
    Now let us suppose that there is an unsafe combination $\{\vec r_i | \vec r_i
  \in m_i, \: 1 \le i  \le |M|\}$. 
  Then for every rate $\vr_i$ which contains $1$ at $i$-th position assign true to
  the variable $x_i$, and for every rate $\vr_i$ which contains $-1$ at $i$-th
  position assign false to the variable $x_i$. 
  Note that no variable would be assigned both true and false since if two
  vectors $\vr'$ and $\vr''$ are chosen which go to the opposite direction, then
  every $\vv$ which satisfies $\vv\cdot r' >0$ also satisfies $\vv\cdot r'' < 0$, and
  vice versa, which means that the combination is not unsafe.
 Further, observe that the assignment is satisfying, because for every clause
  $c_j$ we have that if $\vr_j$ contains $1$ at $i$-th position, then $c_j$
  contains the literal $x_i$ which is satisfied, and if $\vr_j$ contains $-1$ at
  $i$-th position, then $c_j$ contains the literal $\neg x_i$ which is
  satisfied. 
  Hence there is at least one true literal in each clause and thus the formula
  $\phi$ is satisfiable.  

  \item
    To prove the other direction, assume that there is a satisfying assignment to
  $\phi$, then choose one true literal from each clause and consider the
  corresponding rate vector for each mode. 
  Note that there would be no two vectors along one axis with different
  directions since $\neg x_i$ and $x_i$ can not be true at the same time.
  Therefore we have $k$ vectors along $1 \le d \le n$ axises where each two
  vectors are either same or perpendicular.
  This set of rate vectors will be unsafe since there exists a $\vec v$ with
  strictly positive dot product with all of them:
  We build vector $\vec v$ such that each $i$-th entry of vector $\vec v$ is $1$
  (resp. $-1$), if there are some vectors whose $i$-th entry is $1$
  (resp. $-1$),  and zero otherwise.
  The product of $v$ with any vector from the combination is equal to $1$, and
  hence greater than zero. \qed
\end{itemize}

\subsection{Proof of Corollary~\ref{cor:bmms-hard}}\label{app:bmms-hard}
To prove this corollary we show that if there is an unsafe instance of \bmms{}
$\Hh$ then there is an unsafe instance of corresponding extreme-rate \mmms{}
$\Ext(\Hh)$. 
With this observation, the corollary then follows from Proposition~\ref{prop:mmms-hard}.
Assume $m$ is a mode in the bounded-rate multi-mode system $\Hh$ with extreme
rate vectors $\{r^*_1,...,r^*_k\}$.  
First we show that if there is a rate vector $r \in m$ and a rate vector $v$
such that their dot product is positive, i.e. $v.r>0$, then there exists at
least one extreme rate vector $r^*_i$ which makes angle less than 90 with $v$,
i.e. $v.r^*_i>0$.  
We can write $r=\sum \lambda_i r^*_i$ where $\sum \lambda_i =1$. Assume vector
$v$ has positive dot product with $r$, $v.r>0$. Assume for the purpose of
contradiction that $\forall i ~ v.r^*_i \leq 0$, which is a contradiction
because then we have $v.\lambda_ir^*_i \leq 0 \rightarrow \sum
v.\lambda_ir^*_i \leq 0 \rightarrow v.\sum \lambda_i r^*_i = v.r \leq 0$.  
Thus if there is an unsafe instance of \bmms{}, for each mode we can choose a
extreme rate such that the corresponding extreme-rate instance is unsafe.  

\subsection{Proof of Lemma~\ref{unsafeCombo}}
\label{app:unsafeCombo} 
If $|R|\le 1$, then the claim is immediate. Assume $R$ contains at least two
rates. 

Let us start with $\Rightarrow$. 
Intuitively, we keep changing $\vv$ until it becomes perpendicular to some
vector in $R$, and then we show that the vector obtained in this way satisfies
the desired properties. 
Formally, pick a vector $\vw$ such that $\vw\cdot \vr_0=0$.
Find a maximal $\alpha\in [0,1)$ such that for the vector $\vv_\alpha :=
\alpha\cdot\vv + (1-\alpha)\cdot \vw$ there is a vector in $R$ perpendicular to
$\vv_\alpha$. Such $\alpha$ must exists, since at least for $\vv_0 = \vw$ we
have $\vr_0$ perpendicular.
We claim $\vv_\alpha$ is our vector $\vu$, and we put $\vr_\bot$ any vector of
$R$ perpendicular to it. 
First, observe that there is no $\vr\in R$ such that $\vu\cdot \vr <0$. If this 
was the case, then $\alpha\cdot\vv\cdot \vr + (1-\alpha)\cdot \vw\cdot \vr <0$
and since $\vv\cdot \vr$ is positive, we could have picked $\alpha' > \alpha$
for which $\vv_{\alpha'} \cdot \vr = \alpha'\cdot\vv\cdot \vr + (1-\alpha')\cdot
\vw\cdot \vr = 0$ (for the same $\vr$ as before), contradicting the maximality
of $\alpha$. 
Now for any vector $\vr\in R$ such that $\vu\cdot \vr = 0$, if $\vr\neq
p\vr_\bot$ for any $p>0$, then $\vr= p\vr_\bot$ for some $p<0$. 
But since $\vr_\bot \cdot \vv >0$, we get $\vr\cdot \vv = p\cdot (\vr_\bot \cdot
\vv) < 0$, which is a contradiction with properties of $\vv$.
 
In the other direction, if there are no $\vr\in R$ such that $\vec u\cdot \vr >
0$, we can just put $\vv$ to be an arbitrary element of $R$.
Otherwise, we show that we can obtain $\vv$ if we make a small enough change to
$\vu$. 
Fix some $\vr_\bot$ where $\vu\cdot \vr_\bot=0$. 
Let $\tau := \min_{\vr\in R: \vu\cdot \vr >0} \vu\cdot\vr$ be the minimal
\emph{positive} dot product of $\vu$ with vectors of $R$, and let $\kappa :=
\min_{\vr\in R} \vr_\bot \cdot \vr$ be the minimal (possibly negative) dot
product of $\vr_\bot$ with vectors of $R$. 
Set $\vv = \vu + \frac{\tau}{2\cdot (|\kappa|+1)} \vr_\bot$. 
For every $\vr\in R$, we have 
$\vv\cdot \vr = \vu\cdot \vr + \frac{\tau}{2\cdot (|\kappa|+1)} \vr_\bot \cdot
\vr$ which is positive, because: 
(i) if the left summand is $0$, then the right summand is positive because
$\vr_\bot \cdot \vr>0$, and 
(ii) if the left summand is positive, then it is at least $\tau$ and
the right summand is at least $\frac{\tau}{2\cdot(|\kappa|+1)} \kappa\ge
-\frac{\tau}{2}$, and so the sum is positive. 

\subsection{Proof of Theorem~\ref{thm:discr} (the hardness part)}\label{app:delta-hard}
A countdown game is a tuple $\mathcal{G} = (N, T, n_1, B_1)$ where 
\begin{itemize}
\item
  $N = \set{n_1, n_2, \ldots, n_d}$ is a finite set of nodes;
  \item
    $T \subseteq N \times \Nat_{>0} \times N$ is a set of transition; and
  \item
    $(n_1, B_1) \in N {\times} \Nat_{>0}$ is the initial configuration.
  \end{itemize}
  From any configuration $(n, B) \in N \times \Nat_{>0}$, 
  first player~1 chooses a number  $k \in \Nat_{>0}$, such that $k
  {\leq} B$ and there exists some $(n, k, n') \in T$, and then player~2
  chooses a transition $(n, k, n'') \in T$ labeled with that number. 
  Note that there can be more than one such transition.
  The new configuration then transitions to $(n'', B {-} k)$. 
  Player~1 wins a play of the game when a configuration $(n, 0)$ is
  reached, and loses (i.e., player~2 wins) when a configuration
  $(n, B)$ is reached in which player~1 is stuck, i.e., for all transitions
  $(n, k, n') \in T$, we have $k > B$.
  
  For a countdown game $(N, T, n_1, B_1)$ we define a \bmms{} $\Hh$, a safety
  set $S$ and an initial state $\px$ such that there is a safe scheduler in
  $\Sigma_\Delta$ for $\Delta=1$ iff player 1 has a winning strategy in the
  countdown game.   
  W.l.o.g we assume that when $(n, k, n')\in T$, then $n \neq n'$, and also we
  assume that the initial state is $(n_1, B_1)$ and there is no node $n$ and $k$
  such that $(n, k, n_1)\in T$. 

  The \bmms{} $\Hh$ has $d+1$ variables. 
  The intuition is that the value of the first variable corresponds to the value
  of the counter, while $(i+1)$th variable is equal to $1$ if the game is in
  node $n_i$, and $0$ otherwise. 
  
  For all $n, k \in N \times \Nat_{>0}$ such that there is $(n, k, n')\in T$ for
  some $n'$, we add a  mode $(n, k)$ to $\Hh$. 
  For all $(n_i, k, n_j)\in T$, we add the rate $r$ to the mode $(n_i, k)$
  such that the first component of $r$ is $-k$, the $(i+1)$th component is
  $-1$ and $(j+1)$th component is $1$. 
  All other components of $r$ are zero. 
  We further add modes $m_i$ for $3 \le i \le d+1$ which contain the unique rate
  with $B_1$ in the first component, $1$ in the second component, and $-1$ in
  $i$-th component. 
  All other components of this rate are zero. 

  The safety set $S$ is defined so that the only points with integer values are
  exactly $(i_1,\ldots, i_{d+1})$, where $0\le i_1\le B_1$, and exactly one of
  $i_2,\ldots, i_{d+1}$ is $1$, while the others are $0$. Such safety set can be
  defined using equations 
  \[
  \begin{array}{rl}
    x_1&\le B_1\\
    \sum_{i=2}^{d+1} x_i &\le 1\\
    x_i &\ge 0\qquad\text{for $1\le i\le d+1$}
  \end{array}
  \]
  Now we claim that the system is schedulable from the point
  $(B_1,1,0,0,\ldots,0)$ iff player 1 has a winning strategy in the countdown
  game. 
  The intuition is that the winning strategy for player 1 in the countdown game
  directly gives a strategy for the scheduler in $\Hh$ such that a point is
  reached which has zero in the first component, and zeros everywhere else
  except for some $i$-th component. 
  Then the scheduler uses the mode $m_i$, which leads back to the initial state
  and then he can repeat the same strategy. 
  On the other hand, if player 2 has a winning strategy in the countdown game,
  this strategy can be used to get to a state from which the scheduler has no
  chance but to leave the safety set (which corresponds to not having any
  choices in the countdown game).  

\subsection{Proof of Theorem~\ref{thm:max-delta}}
\label{app:delta-max}
In this section we show how to solve the following problem: given a \mmms{}
$\Hh = (M, n, \Rr)$, a convex polytope $S$ and an initial state $\px\in S$, find the maximal number $\Delta_{max}$
such that there is a winning strategy for the scheduler which only takes decisions at times $i\cdot \Delta_{max}$ where $i\in\Nat$.
Formally, let $\Sigma_\Delta$ denote the set of strategies for the scheduler which schedule in multiples of $\Delta$. Then
we wish to find a supremum, over all $\Delta$, such that there is a safe scheduler in $\Sigma_\Delta$.

Let $R$ be the set of all possible rate vectors of $\Hh$. Note that since $\Hh$ is a \mmms{}, the set $R$ is finite.

Let $\discr{\Delta}$ be the points reachable from $\px$ when using a scheduler from $\Sigma_\Delta$.
All such points are equal to $\px + \sum_{\vr\in R} i_{\vr} \cdot \Delta \cdot \vr$ for some $i_{\vr}\in \Nat$.
This implies that the set $\discr{\Delta}\cap S$ is finite.

The intuition of our algorithm is the following. Every strategy from $\Sigma_\Delta$ can be seen as a function which rather
than observing and choosing time delays observes and chooses the number of time periods (multiples of $\Delta$) elapsed. Using this
abstracted view of strategies, every strategy in $\Sigma_\Delta$ corresponds to a strategy in $\Sigma_\Delta'$ which differs only
in the length of the time period. It can be shown that there is a correspondence of points reachable under these two strategies. Seeing the
points of $\discr{\Delta}$ as a ``grid'', the points of $\discr{\Delta'}$ are obtained by stretching (if $\Delta'>\Delta$) or squeezing (if $\Delta'<\Delta$) this grid.
It follows that for a $\Delta$ to be maximal, there must be a point in $\discr{\Delta}$ which lies on the boundary of $S$, since otherwise
the grid $\discr{\Delta}$ can be stretched to some $\discr{\Delta'}$ where $\Delta'>\Delta$, preserving the existence of a safe scheduler.
Exploiting this property together with the fact that we already know a lower bound on $\Delta_{max}$,
we get only finitely many candidates for maximal $\Delta$, and we can check in each of them whether 
a safe scheduler exists using Theorem~\ref{thm:discr}.
Our algorithm is presented as Algorithm~\ref{algmax}.
\begin{algorithm}
\KwIn{schedulable $\Hh$, safety set $S$ given as $Ax\le b$, point $\px\in S$}
\KwOut{$\Delta_{max}$}
Let $\Gamma$ be the lower bound on $\Delta_{max}$\;
Compute $\discr{\Gamma}\cap S$\;
$\Delta_{max} := \Gamma$\;
\ForEach {$\py = \px + \sum_{\vr\in R} i_{\vr} \Delta \cdot \vr \in \discr{\Gamma}\cap S$}{
\nllabel{line:lp}maximise $\Delta$ subject to
$A \cdot (\px + \sum_{\vr\in R} i_{\vr} \Delta \cdot \vr) \le b$
\If{$\Sigma_{\Delta}$ contains a safe scheduler and $\Delta > \Delta_{max}$}{
$\Delta_{max} := \Delta$
}
\Return $\Delta_{max}$
}
\caption{algorithm computing $\Delta_{max}$}
\label{algmax}
\end{algorithm}

Let us now prove the correctness of the algorithm.
Clearly the algorithm terminates in exponential time since the ``foreach'' loop
is executed only exponentially many times at most, and each of the respective lines can be executed in exponential time.
Hence, we only need to show that the result returned by the algorithm is correct. 

We first introduce some technical notation to capture the intuition of correspondence between points of different discretisations.
Define a bijection $g_{\Delta, \Delta'}$ between $\discr{\Delta}$ and $\discr{\Delta'}$
that to a point $\px + \Delta\cdot \sum_{\vr\in R} i_{\vr} \cdot \vr$ where $i_{\vr}\in \Nat$
assigns the point $\px + \Delta' \cdot \sum_{r\in R} i_{\vr} \cdot \vr$. Intuitively, this function
pairs the corresponding points on the ``grids'' given by $\discr{\Delta}$ and $\discr{\Delta'}$.
Note that $g_{\Delta, \Delta'}$ is well defined and does not depend on the choice of $i_{\vr}\in \Nat$ which represent the point and can be non-unique.
Indeed, if
\[
 \px + \Delta\cdot \sum_{\vr\in R} i_{\vr} \cdot \vr = \px + \Delta\cdot \sum_{\vr\in R} i'_{\vr} \cdot \vr
\]
for some $i_{\vr},i'_{\vr}\in \Nat$, then $\sum_{\vr\in R} i_{\vr} \cdot \vr = \sum_{\vr\in R} i'_{\vr} \cdot \vr$
and hence also $\px + \Delta'\cdot \sum_{\vr\in R} i_{\vr} \cdot \vr = \px + \Delta'\cdot \sum_{\vr\in R} i'_{\vr} \cdot \vr$.

The following lemma essentially says that when we enlarge the length of a time period, the set of points on the corresponding grid that
are within $S$ can only get smaller.
\begin{lemma}\label{lemma:points-monotone}
Let $\Delta \ge \Delta'$. Then $g_{\Delta',\Delta}(\discr{\Delta'}\cap S) \supseteq \discr{\Delta}\cap S$.
\end{lemma}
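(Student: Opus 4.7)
The plan is to show the inclusion pointwise: take an arbitrary $\py \in \discr{\Delta}\cap S$ and exhibit a preimage under $g_{\Delta',\Delta}$ that lies in $\discr{\Delta'}\cap S$. Since $g_{\Delta',\Delta}$ is defined to be a bijection from $\discr{\Delta'}$ onto $\discr{\Delta}$, the candidate preimage is forced: writing $\py = \px + \Delta\cdot\sum_{\vr\in R} i_\vr \vr$ for some $i_\vr\in\Nat$, the only point of $\discr{\Delta'}$ that can map to $\py$ is $\py' = \px + \Delta'\cdot\sum_{\vr\in R} i_\vr \vr$. By the well-definedness argument that precedes the lemma, membership of $\py'$ in $\discr{\Delta'}$ does not depend on the choice of the coefficients $i_\vr$ representing $\py$, so I only need to show $\py'\in S$.

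First I would use the hypothesis $\Delta \geq \Delta'$ to set $\lambda := \Delta'/\Delta \in [0,1]$ (the boundary case $\Delta'=0$ collapses to $\py' = \px$ and is immediate). Then a direct computation gives
\begin{equation*}
\py' = \px + \lambda\bigl(\py - \px\bigr) = (1-\lambda)\,\px + \lambda\,\py.
\end{equation*}
Thus $\py'$ is a convex combination of the two points $\px$ and $\py$, both of which lie in $S$: the point $\px$ by assumption, and $\py$ because we chose it from $\discr{\Delta}\cap S$. Convexity of the safety polytope $S$ then yields $\py'\in S$, so $\py'\in\discr{\Delta'}\cap S$ and $g_{\Delta',\Delta}(\py') = \py$, completing the inclusion.

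There is essentially no obstacle here; the only subtlety worth flagging is the well-definedness of $g_{\Delta',\Delta}$ already handled in the paragraph before the lemma, which lets us pick any representation of $\py$ as a $\Nat$-combination of rates without affecting the argument. Everything else is an immediate application of convexity of $S$ to the affine segment between $\px$ and $\py$.
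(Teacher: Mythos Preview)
Your proof is correct and is exactly the argument the paper has in mind: its one-line proof (``Follows because $S$ is closed, convex and contains $\px$'') is precisely the convex-combination observation $\py' = (1-\lambda)\px + \lambda\py$ that you spell out. Closedness is not actually needed for this inclusion, only convexity and $\px\in S$, so your write-up is if anything cleaner than the paper's hint.
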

\begin{proof}
Follows because $S$ is closed, convex and contains $\px$.
\end{proof}

The following lemma intuitively says when we can increase the time period while preserving the existence of
a safe scheduler.

\begin{lemma}\label{lemma:max-scheduler}
Let $\Delta \ge \Delta'$ be such that
\[
 \discr{\Delta}\cap S = g_{\Delta',\Delta}(\discr{\Delta'} \cap S)
\]
and assume that there is a safe scheduler
in $\Sigma_{\Delta'}$. Then there is a safe scheduler in $\Sigma_{\Delta}$.
\end{lemma}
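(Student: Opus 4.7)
The plan is to construct a safe scheduler $\sigma \in \Sigma_\Delta$ directly by mimicking the given safe scheduler $\sigma' \in \Sigma_{\Delta'}$, using the bijection $g_{\Delta',\Delta}$ to translate between the two discretization grids. First, I would set up a correspondence between histories: given a history $h = \seq{\px_0, (m_1, k_1\Delta), \vr_1, \px_1, \ldots, \px_{j-1}}$ of the $\Delta$-game whose points all lie in $\discr{\Delta}\cap S$, define the ``shrunk'' history $h'=\seq{\px_0, (m_1, k_1\Delta'), \vr_1, \px_1', \ldots, \px_{j-1}'}$ of the $\Delta'$-game by replacing each time delay $k_i\Delta$ with $k_i\Delta'$ and each state $\px_i$ with $\px_i' := g_{\Delta,\Delta'}(\px_i)$. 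Then let $\sigma(h) := (m, k\Delta)$ whenever $\sigma'(h') = (m, k\Delta')$.

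Next, I would verify by induction on round $j$ that if $\varrho$ is the run induced by $\sigma$ against some environment $\pi$, and $\varrho'$ is the run induced by $\sigma'$ against the environment $\pi'$ that on the shrunk history chooses the same rate as $\pi$ does on $h$, then the points satisfy $\px_i = g_{\Delta',\Delta}(\px_i')$ for all $i$. This holds because both runs proceed by the same sequence of modes and integer counts $k_i$, and the same sequence of rates $\vr_i$, so $\px_i = \px_0 + \Delta \sum_{\ell\le i} k_\ell\vr_\ell = g_{\Delta',\Delta}(\px_0 + \Delta' \sum_{\ell\le i} k_\ell\vr_\ell) = g_{\Delta',\Delta}(\px_i')$.

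Since $\sigma'$ is safe against $\pi'$, every $\px_i'$ lies in $\discr{\Delta'}\cap S$. The hypothesis $\discr{\Delta}\cap S = g_{\Delta',\Delta}(\discr{\Delta'}\cap S)$ then immediately gives $\px_i = g_{\Delta',\Delta}(\px_i') \in \discr{\Delta}\cap S \subseteq S$ for every $i$. Because the rate $\vr_i$ is constant during the $i$-th step and both endpoints $\px_{i-1}, \px_i$ lie in $S$, convexity of $S$ yields $\px_{i-1} + t\cdot \vr_i \in S$ for every $t \in [0, k_i\Delta]$. Non-Zenoness follows because every time delay $k_i\Delta$ is at least $\Delta > 0$, so the total elapsed time diverges.

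I expect the main (small) obstacle to be the bookkeeping of the bijection between environment strategies: one must argue that for \emph{every} $\pi$ in the $\Delta$-game the induced $\pi'$ in the $\Delta'$-game is well-defined and that the corresponding runs really do track each other on all reachable histories. This is routine but must be stated carefully because a priori the two strategy spaces act on different timed histories; the definition of $\sigma$ and the inductive correspondence resolve this by forcing the rates, modes, and integer counts to agree round-by-round so that one can apply the safety of $\sigma'$ against exactly the matching $\pi'$.
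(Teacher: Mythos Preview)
Your proposal is correct and follows essentially the same approach as the paper: both define the new scheduler by transporting histories through the bijection $g_{\Delta',\Delta}$ (the paper packages this as a map $h_{\Delta',\Delta}:\Sigma_{\Delta'}\to\Sigma_\Delta$) and then argue by induction that the reachable points under the transported strategy are exactly the $g_{\Delta',\Delta}$-images of those under $\sigma'$. Your write-up is in fact more explicit than the paper's, which leaves the induction, the environment-strategy correspondence, the convexity step for intermediate points, and non-Zenoness to the reader.
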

\begin{proof}
Using $g_{\Delta', \Delta}$, we can define a function $h_{\Delta',\Delta}$ from $\Sigma_{\Delta'}$ to $\Sigma_{\Delta}$
to capture our intuition of strategies that differ only on the length of a time period as follows.
Given $\sigma\in \Sigma_{\Delta'}$ and a history
\[
\seq{x_0, (m_1, i_1\cdot \Delta'), r_1,\ldots x_k}
\]
we put 
\begin{multline*}
h_{\Delta',\Delta}(\sigma)(\seq{g_{\Delta',\Delta}(x_0), (m_1, i_1\cdot \Delta), r_1,\ldots g_{\Delta',\Delta}(x_k)})\\
= \sigma(\seq{x_0, (m_1, i_1\cdot \Delta'), r_1, \ldots x_k})
\end{multline*}

Now it is easy to prove by induction that if the set of points that are reachable under $\sigma\in \Sigma_{\Delta'}$ is $X$,
then the set of points reachable by $h_{\Delta',\Delta}(\sigma)\in \Sigma_\Delta$ is equal to $g_{\Delta',\Delta}(X)$.
\end{proof}

Now we are ready to proceed with the proof of the correctness of the algorithm.
Let $\Delta_{max}$ be the actual solution, and let $\bar\Delta_{max}$ be the returned number.
We know that $\bar\Delta_{max}\le \Delta_{max}$, since the algorithm ensures that there is a safe scheduler in $\Sigma_{\bar\Delta_{max}}$.
To prove $\bar\Delta_{max} \ge \Delta_{max}$, it suffices to show that there is a safe scheduler in $\Sigma_{\Delta_{max}}$, and that $\Delta_{max}$
is found for some $\py$ at line \ref{line:lp} of Algorithm~\ref{algmax}.

To show that there is a safe scheduler in $\Sigma_{\Delta_{max}}$, let
\[
 X:= \bigcap_{\Delta < \Delta_{max}} g_{\Delta,\Delta_{max}} (\discr{\Delta}\cap S).
\]
We have $X= \discr{\Delta_{max}}\cap S$. The inclusion $\supseteq$ follows by Lemma~\ref{lemma:points-monotone};
the inclusion $\subseteq$ follows by the fact that $S$ is closed and the fact that
as $\Delta$ gets arbitrary close to $\Delta_{max}$, the points $\py\in\discr{\Delta}$ get arbitrary close to $g_{\Delta,\Delta_{max}} (\py)$.
By Lemma~\ref{lemma:points-monotone} and because $\discr{\Delta}\cap S$ is finite for all $\Gamma\le \Delta<\Delta_{max}$,
there is $\Delta<\Delta_{max}$ such that $g_{\Delta,\Delta_{max}} (\discr{\Delta}\cap S) = X$, and by definition of $\Delta_{max}$ there is a safe scheduler
in $\Sigma_\Delta$.
Finally by Lemma~\ref{lemma:max-scheduler} there must be a safe scheduler $\sigma\in \Sigma_{\Delta_{max}}$.

Now suppose that $\Delta_{max}$ is not a solution to any of the linear programs executed on line~\ref{line:lp}.
For each $\py\in \discr{\Gamma}\cap S$, let $\Delta_{\py}$ be the solution to the linear program for $\py$. Let $P$ be the set of all
$\py\in \discr{\Gamma}\cap S$ satisfying $g_{\Gamma,\Delta_{max}}(\py) \in \discr{\Delta_{max}}\cap S$. Define
$\Delta = \min_{\py\in P} \Delta_{\py}$.
We have $\Delta>\Delta_{max}$, since if $\Delta=\Delta_{max}$ then $\Delta_{max}$ would be the solution to the linear program
for the point $\py$ which realises the minimum, and if $\Delta<\Delta_{max}$ then $g_{\Gamma,\Delta_{max}}(\py) \not\in \discr{\Delta_{max}}\cap S$.
In addition,
$g_{\Delta_{max}, \Delta}(\discr{\Delta_{max}}\cap S)=\discr{\Delta}\cap S$ which by Lemma~\ref{lemma:max-scheduler} implies that there is a safe scheduler in $\Sigma_\Delta$,
contradicting the maximality of $\Delta_{max}$.

}

\end{document}